\documentclass[sigconf, screen, 9pt]{acmart}

\AtBeginDocument{%
  }

\copyrightyear{2024}
\acmYear{2024}
\setcopyright{acmlicensed}\acmConference[LICS '24]{39th Annual ACM/IEEE Symposium on Logic in Computer Science}{July 8--11, 2024}{Tallinn, Estonia}
\acmBooktitle{39th Annual ACM/IEEE Symposium on Logic in Computer Science (LICS '24), July 8--11, 2024, Tallinn, Estonia}
\acmDOI{10.1145/3661814.3662086}
\acmISBN{979-8-4007-0660-8/24/07}




\usepackage{mathtools}
\usepackage{bm}
\usepackage{cleveref}
\usepackage{tikz}
\usetikzlibrary{petri, positioning, fit, backgrounds, patterns, automata}
\usepackage{macros}

\newenvironment{claimproof}[1]{\par\noindent\underline{Proof:}\space#1}{\hfill $\blacksquare$}
\newtheorem{remark}{Remark}

\begin{document}

\title{Soundness of reset workflow nets}

\author{Michael Blondin}
\authornote{Supported by a Discovery Grant from the Natural Sciences and Engineering Research Council of Canada (NSERC), and by the Fonds de recherche du Qu\'{e}bec -- Nature et technologies (FRQNT).}
\email{michael.blondin@usherbrooke.ca}
\orcid{0000-0003-2914-2734}
\affiliation{%
  \institution{Universit\'{e} de Sherbrooke}
  \city{Sherbrooke}
  \state{Qu\'{e}bec}
  \country{Canada}
}

\author{Alain Finkel}
\email{alain.finkel@ens-paris-saclay.fr}
\orcid{0000-0003-2482-6141}
\affiliation{%
  \institution{Universit\'{e} Paris-Saclay, CNRS, ENS Paris-Saclay, IUF}
  \streetaddress{Laboratoire M\'{e}thodes Formelles}
  \city{Gif-sur-Yvette}
  \country{France}
}

\author{Piotr Hofman}
\authornote{Supported by an ERC Starting grant INFSYS, agreement no. 950398.}
\email{piotr.hofman@uw.edu.pl}
\orcid{0000-0001-9866-3723}
\affiliation{%
  \institution{University of Warsaw}
  \city{Warsaw}
  \country{Poland}
}

\author{Filip Mazowiecki}
\authornote{Supported by Polish National Science Centre SONATA BIS-12 grant number 2022/46/E/ST6/00230.}
\email{f.mazowiecki@mimuw.edu.pl}
\orcid{0000-0002-4535-6508}
\affiliation{%
  \institution{University of Warsaw}
  \city{Warsaw}
  \country{Poland}
}

\author{Philip Offtermatt}
\email{philip.offtermatt@informal.systems}
\orcid{0000-0001-8477-2849}
\affiliation{%
  \institution{Informal Systems}
  \city{Lausanne}
  \country{Switzerland}
}


\begin{abstract}
  Workflow nets are a well-established variant of Petri nets for the
  modeling of process activities such as business processes. The
  standard correctness notion of workflow nets is soundness, which
  comes in several variants. Their decidability was shown decades ago,
  but their complexity was only identified recently. In this work, we
  are primarily interested in two popular variants: $1$-soundness and
  generalised soundness.
  
  Workflow nets have been extended with resets to model workflows that
  can, e.g., cancel actions. It has been known for a while that, for
  this extension, all variants of soundness, except possibly
  generalised soundness, are undecidable.
  
  We complete the picture by showing that generalised soundness is
  also undecidable for reset workflow nets. We then blur this
  undecidability landscape by identifying a
  property, coined ``$1$-in-between soundness'', which lies between
  $1$-soundness and generalised soundness. It reveals an unusual
  non-monotonic complexity behaviour: a decidable soundness property is
  in between two undecidable ones. This can be valuable in the
  algorithmic analysis of reset workflow nets, as our procedure
  yields an output of the form ``$1$-sound'' or ``not generalised
  sound'' which is always correct.
\end{abstract}

\begin{CCSXML}
\begin{CCSXML}
<ccs2012>
   <concept>
       <concept_id>10003752.10003766.10003770</concept_id>
       <concept_desc>Theory of computation~Automata over infinite objects</concept_desc>
       <concept_significance>500</concept_significance>
       </concept>
   <concept>
       <concept_id>10003752.10003777</concept_id>
       <concept_desc>Theory of computation~Computational complexity and cryptography</concept_desc>
       <concept_significance>500</concept_significance>
       </concept>
 </ccs2012>
\end{CCSXML}

\ccsdesc[500]{Theory of computation~Automata over infinite objects}
\ccsdesc[500]{Theory of computation~Computational complexity and cryptography}

\keywords{Workflow nets, Petri nets, resets, soundness, generalised soundness, decidability}


\maketitle

\section{Introduction}
\emph{Workflow nets} are a well-established formalism for the modeling of process activities such as business processes~\cite{AL97}. For
example, they can be used as the formal representation of workflow
procedures in business process management systems (see
e.g.~\cite[Section~4]{van1998application} and~\cite[Section~3]{AL97}
for details on modeling procedures). Workflow nets enable the
algorithmic formal analysis of their behaviour. This is relevant,
e.g., for organizations that seek to manage complex business
processes. For example, according to a survey~\cite{AHHSVVW11}, over
20\% instances from the SAP reference model have been detected to be
flawed (due to deadlocks, livelocks, etc.)

A workflow net $\W$ is essentially a Petri net --- a prominent model
of concurrency --- that satisfies extra properties. In particular, $\W$ has two
designated places $i$ and $f$ respectively called \emph{initial} and
\emph{final}. Initially, $\W$ starts with $k$ tokens in place $i$ that
can evolve according to the transitions of $\W$, which can consume and
create tokens. Informally, a token that reaches $f$ indicates
that some activity has been completed.

\subsection{Soundness}

The standard correctness notion of workflow nets is soundness. Various
definitions have been considered in the literature. Most prominently,
\emph{$k$-soundness} requires that, starting from $k$ tokens in the
initial place, no matter what transitions are taken, it is always
possible to complete properly, i.e. to end up with $k$ tokens in the
final place, and no token elsewhere. \emph{Generalised soundness}
requires a net to be $k$-sound for all $k > 0$, while \emph{structural
soundness} requires $k$-soundness for some $k > 0$. \emph{Classical
soundness} requires $1$-soundness and each transition to be fireable
in at least one execution.

The decidability of soundness was established some two decades
ago~\cite{AL97,HeeSV04,TM05,vOS06} (see~\cite{AHHSVVW11} for a
survey). The underlying algorithms relied on Petri net reachability,
which was recently shown to be non-primitive
recursive~\cite{LS19,Ler21,CO21}. Until recently, no better bound was
known. At LICS'22~\cite{BlondinMO22}, the computational complexity of
all variants of soundness was established: generalised soundness is
PSPACE-complete, while the other variants are EXPSPACE-complete.

One shortcoming of workflow nets is that they lack useful features like cancellation. As mentioned
in~\cite{vvtSVVW08}, ``[m]any practical languages have a cancelation
feature, e.g., Staffware has a withdraw construct, YAWL has a
cancelation region, BPMN has cancel, compensate, and error events,
etc.'' Thus, workflow nets have been extended with \emph{resets} that
instantly remove all tokens from a region of the net upon taking a
transition (e.g.\ see~\cite{AHHSVVW09}). Around fifteen years ago, it
was shown that, for reset workflow nets,
\begin{quote}
  \emph{[... of] the many soundness notions described in [the] literature
    only generalised soundness may be decidable (this is still an open
    problem). All other notions are shown to be
    undecidable.}~\cite{vvtSVVW08}
\end{quote}

So, while all studied variants such as $1$-soundness, $k$-soundness,
structural soundness and classical soundness are
undecidable for reset workflow nets, the decidability of generalised
soundness has remained open.

The recent results of~\cite{BlondinMO22} shows that in the absence of resets, generalised
soundness is computationally easier than other soundness variants.
This was an indication that generalised
soundness could remain decidable in the presence of resets.

\subsection{Our contribution}

In this work, we first show that such a conjecture is false:
generalised soundness for reset workflow nets is undecidable.

The undecidability of the various types of soundness in reset workflow
nets indicates the necessity for a new approach. Consequently, we
embarked on the journey of exploring potential methods for
``approximating'' soundness. We propose a precise definition of an
acceptable approximation.

More precisely, we say that a property $\mathcal{P}$ of reset
workflow nets is a \emph{$1$-in-between soundness} property if it
meets the following criteria: all generalised sound reset workflow
nets satisfy $\mathcal{P}$, and every reset workflow net satisfying
$\mathcal{P}$ is $1$-sound. Remarkably, we have discovered such a property,
denoted as $\mathcal{P}_1$, which is
decidable. Thus, we have
\[
\text{Generalised sound nets} \subseteq 
\mathcal{P}_1 \text{ nets} \subseteq
\text{$1$-sound nets},
\]
where generalised soundness and $1$-soundness are
undecidable, but $\mathcal{P}_1$ is decidable.

This reveals a non-monotonic complexity behaviour. By this, we mean
three properties $A$, $B$ and $C$ such that
\[
A~\text{(hard)} \subseteq 
B~\text{(easy)} \subseteq
C~\text{(hard)}.
\]
This phenomenon can be regarded as unnatural and is rarely actively
exploited. We present two other such examples.

First, consider the setting where we are given two graphs, each with a
designated initial node and with directed edges labelled from a common
finite alphabet (so, finite automata whose states are all
accepting). These inclusions hold:
\[
  \text{Isomorphism} \subseteq
  \text{Bisimilarity} \subseteq
  \text{Trace(-language) equivalence},
\]
The first property is known to be checkable in quasi-polynomial time (a
survey on current advances can be found in~\cite{10.1145/3372123}),
the second is checkable in polynomial
time~\cite{DBLP:journals/siamcomp/PaigeT87}, and the third one is 
well known to be PSPACE-complete.

Now, consider the setting where we are given two one-counter nets
(OCN), where an OCN is an automaton with a single counter over the
naturals that can be incremented and decremented (but not
zero-tested). These inclusions hold:
\begin{alignat*}{3}
  &\text{Weak bisimulation~\cite{DBLP:conf/icalp/Mayr03}}
  \subseteq {} \\
  &\text{Weak simulation in both
    directions~\cite{DBLP:journals/corr/HofmanLMT16}} \subseteq {} \\
  &\text{Language equivalence~\cite{DBLP:phd/ethos/Valiant73}},
\end{alignat*}
The first and the last properties are undecidable, while the middle
one is PSPACE-complete.

Our property $\mathcal{P}_1$ can be valuable in the analysis
of reset workflow nets. For example, we provide an algorithm that,
on any reset workflow net which is not $1$-sound, classifies it as not
generalised sound, and on any generalised sound reset workflow net,
guarantees at least $1$-soundness. For reset workflow nets that are
$1$-sound but not generalised sound, the algorithm provides the
correct description of either not being generalised sound or being
$1$-sound. It is worth noting that all these answers accurately
characterize the given reset workflow net.

However, the computational complexity of verifying $\mathcal{P}_1$ is
non-primitive recursive in the worst case, which hinders immediate
applications. Nevertheless, we cannot rule out the existence of
similar predicates with better computational complexity, or of
implementations performing faster on real-world instances. This area
requires further investigation.

The definition of $\mathcal{P}_1$ is exceedingly technical, making it
challenging to convey basic intuitions concisely. We actually define a decidable family of properties
$\{\mathcal{P}_k\}_{k > 0}$.
In our final result, we prove a connection with another notion
of soundness found in the literature, namely \emph{up-to-$k$
  soundness}~\cite{phdthesisToorn}. Specifically, we demonstrate that
for every reset workflow net $\W$, there is a computable but
substantially large value $K$ such that, for every $k > K$, $\W$ is
up-to-$k$ sound if and only if it satisfies the property~$\mathcal{P}_k$.

\subsection{Further related work}

A significant portion of the work that relates to reset workflow nets
consists of results on reset Petri nets, in particular, on
\begin{itemize}
 \item The theory of well-structured transition systems and results
   for the general class of reset Petri
   nets~\cite{FINKEL20041,DBLP:conf/icalp/DufourdFS98};

 \item Results on restricted subclasses, such as those with a limited
   number of places that can be
   reset~\cite{DBLP:conf/fsttcs/FinkelLS18}, acyclic reset Petri nets
   and workflow nets~\cite{CHMS23}, or Petri nets with hierarchical
   reset arcs~\cite{DBLP:conf/concur/AkshayCDJS17};

 \item More practically oriented results on relaxations of the
   reachability relation like the integer
   relaxation~\cite{DBLP:conf/rp/HaaseH14}.
\end{itemize}

An important branch of soundness-related research includes work on
reduction rules that preserve (un)soundness while reducing the size
and structure of the workflow net. Rules specific to reset workflow
nets are explored in~\cite{WVAHE09,WVAHE09b}.

Another line of research lies in the field of process discovery,
e.g.\ the discovery of cancellation regions within process mining
techniques~\cite{KL14}.

\subsection{Paper organization}

The paper is organized as follows. In \Cref{sec:preliminaries}, we
introduce preliminary definitions such as Petri nets, workflow nets
and notions of soundness. In \Cref{sec:undecidability}, we establish
the undecidability of generalised soundness for reset workflow
nets. \Cref{sec:between} shows the existence of the $k$-in-between
soundness property $\mathcal{P}_k$, and relates it with another notion
of soundness. In particular, \Cref{ssec:redundancy} introduces the
intermediate notion of ``nonredundancy'', and \Cref{ssec:skeleton}
introduces the intermediate notion of ``skeleton'' workflow nets.

\section{Preliminaries}
\label{sec:preliminaries}
Let $\N \defeq \{0, 1, \ldots\}$ and $\Npos \defeq \N \setminus
\{0\}$. For every $a, b \in \N$ such that $a \leq b$, we define
$[a..b] \defeq \{a, a + 1, \ldots, b\}$. Given a set $X$, we denote
its cardinality by $|X|$.

Let $\m, \m' \colon P \to \N$ where $P$ is a finite set. We see $\m$
and $\m'$ as both unordered vectors and multisets. For example, we
have $\m = \vec{0}$ if $\m(p) = 0$ for every $p \in P$, and we have
$\m' = \{p \colon 1, q \colon 3\}$ if $\m'(p) = 1$, $\m'(q) = 3$ and
$\m'(r) = 0$ for every $r \in P \setminus \{p, q\}$. We write $\m \leq
\m'$ iff $\m(p) \leq \m'(p)$ for every $p \in P$. We write $\m < \m'$
if $\m \leq \m'$ and $\m \neq \m'$. We define $\m + \m'$ as the
mapping satisfying $(\m + \m')(p) = \m(p) + \m'(p)$ for every $p \in
P$. The mapping $\m - \m'$ is defined similarly, provided that $\m
\geq \m'$. For every $Q \subseteq P$, we define $\m(Q) \defeq \sum_{q
  \in Q} \m(q)$.

\subsection{Ackermannian complexity}

In the upcoming sections, we prove theorems about the existence of some numbers bounded by functions on the size of the input. To do this rigorously, we should refer to these individual bounds and precisely track dependencies between them. However, some of the bounds we use are based on the Ackermann function (which is non-primitive recursive), and tracking them precisely is tedious. Moreover, keeping all the constants in mind would be troublesome for the reader. That is why we choose to simply state that some number is of ``Ackermannian size'' or''Ackermannianly bounded''. In our reasoning, we use the following rules:
\begin{itemize}
\item The maximum, sum and product of Ackermannianly bounded constants yields Ackermannianly bounded constants;

%
\item The application of a function within the Ackermannian class of functions to a bound of Ackermannian size yields an Ackermannianly bounded constant.
\end{itemize}
Since we will make a constant number of such manipulations, the
Ackermannian bounds will be preserved.

For an introduction to high computational complexity classes (from the
fast-growing hierarchy), we refer the reader
to~\cite{lob1970hierarchies,DBLP:journals/toct/Schmitz16}.

\subsection{Petri nets}

A \emph{reset Petri net} $\PN$ is a tuple $(P, T, F, R)$ where
\begin{itemize}
\item $P$ is a finite set of elements called \emph{places},

\item $T$ is a finite set, disjoint from $P$, of elements called
  \emph{transitions},

\item $F \subseteq (P \times T) \cup (T \times P)$ is a set of
  elements called \emph{arcs},

\item $R \subseteq (P \times T)$ is a set of elements called
  \emph{reset arcs}.
\end{itemize}
A \emph{(standard) Petri net} is a reset Petri net with no reset arc
($R = \emptyset$).

\begin{example}
  \Cref{fig:example:pn} depicts a reset Petri net $\PN = (P, T, F, R)$ where $P
  = \{p_1, p_2, p_3, p_4\}$ (circles), $T = \{t_1, t_2, t_3\}$
  (boxes), each arc $(u, v) \in F$ is depicted by a directed edge, and
  the only reset arc is depicted by a dotted directed edge. \qed
\end{example}


Given a transition $t \in T$, we define $\pre{t} \defeq \{p \in P :
(p, t) \in F\}$ and $\post{t} \defeq \{p \in P : (t, p) \in F\}$. Both
of these sets will often be interpreted implicitly as mappings from
$P$ to $\{0, 1\}$. For example, we can write either ``$p \in
\post{t}$'' or ``$\post{t}(p) = 1$''. In \Cref{fig:example:pn}, we
have, e.g., $\pre{t_1} = \{p_1\}$, $\post{t_1} = \{p_2, p_3\}$ and
$\pre{t_3} = \{p_2\}$. Given a place $p \in P$, we define $\pre{p}
\defeq \{t \in T : (t, p) \in F\}$ and $\post{p} \defeq \{t \in T :
(p, t) \in F\}$.

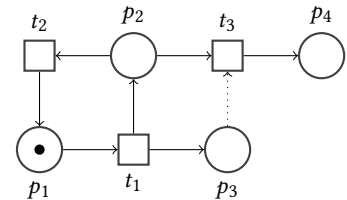
\begin{figure}[!b]
  \begin{center}
    \begin{center}
  \begin{tikzpicture}[auto, node distance=1.25cm]
    \tikzstyle{ctrans} = [transition]
    
    \node[place, label=below:$p_1$, tokens=1] (p1)  {};
    \node[transition, right of=p1, label=below:$t_1$] (t1) {};
    \node[place, above of=t1, label=above:$p_2$] (p2)  {};
    \node[transition, left of=p2, label=above:$t_2$] (t2) {};
    \node[place, right of=t1, label=below:$p_3$] (p3)  {};
    \node[transition, above of=p3, label=above:$t_3$] (t3) {};
    \node[place, right of=t3, label=above:$p_4$] (p4)  {};

    \path[->]
    (p1) edge node {} (t1)
    (t1) edge node {} (p2)
    (t1) edge node {} (p3)

    (p2) edge node {} (t2)
    (t2) edge node {} (p1)

    (p2) edge node {} (t3)
    (t3) edge node {} (p4)
    ;

    \path[->, dotted]
    (p3) edge node {} (t3)
    ;
  \end{tikzpicture}
\end{center}
  \end{center}
  \caption{Example of a reset Petri net where circles are places,
    boxes are transitions, solid edges are arcs, and dotted edges are
    reset arcs. The small filled circle depicts a
    token.}\label{fig:example:pn}
\end{figure}

A \emph{marking} is a mapping $\m \colon P \to \N$ that indicates the number
of \emph{tokens} in each place. Given a marking $\m$ and a transition $t$, we
let $\Rest{t}{\m}$ denote the marking obtained from $\m$ by emptying
all places that are reset by $t$. Formally, $\Rest{t}{\m} \defeq 0$ if
$(p, t) \in R$, and $\m(p)$ otherwise.

We say that a transition $t \in T$ is \emph{enabled} in marking $\m$
if $\m \geq \pre{t}$, i.e.\ if each place of $\pre{t}$ contains at
least one token. If $t$ is enabled in $\m$, then it can be
\emph{fired}. In words, upon firing $t$, a token is consumed from each
place of $\pre{t}$; then, all places of $\{p : (p, t) \in R\}$ are
emptied; and, finally, a token is produced in each place of
$\post{t}$. More formally, firing $t$ leads to the marking $\m'$
defined as follows, for every $p \in P$:
\[
\m'(p)
=
\begin{cases}
  \m(p) - \pre{t}(p) + \post{t}(p) & \text{if } (p, t) \notin R, \\
  \post{t}(p) & \text{otherwise}.
\end{cases}
\]
Equivalently, and more succinctly, $\m' = \Rest{t}{\m - \pre{t}} +
\post{t}$.

We write $\m \trans{t} \m'$ whenever $t$ is enabled in $\m$ and firing
$t$ from $\m$ leads to $\m'$. We write $\m \trans{} \m'$ if $\m
\trans{t} \m'$ holds for some $t \in T$. We write ${\trans{*}}$ to
denote the reflexive-transitive closure of ${\trans{}}$. Given a
subset $X$ of markings, we write $\m \trans{*} X$ to denote that there
exists $\m' \in X$ such that $\m \trans{*} \m'$.

Given a sequence of transitions $\pi = t_1 t_2 \cdots t_n$ and a
transition $s$, we define $|\pi| \defeq n$ and $|\pi|_s = |\{i \in
[1..n] : t_i = s\}|$.

\begin{example}
  Reconsider the reset Petri net of \Cref{fig:example:pn}. For the
  sake of brevity, let us write a marking $\{p_1 \colon a, p_2 \colon
  b, p_3 \colon c, p_4 \colon d\}$ as $(a, b, c, d)$. From marking
  $(1, 0, 0, 0)$, we can only fire $t_1$. Firing $t_1$ leads to $(0,
  1, 1, 0)$. From the latter, we can fire either $t_2$ or
  $t_3$. Firing $t_2$ leads to $(1, 0, 1, 0)$. From there, we can only
  fire $t_1$, which leads to $(0, 1, 2, 0)$. From the latter, we can
  fire either $t_2$ or $t_3$. Firing $t_3$ leads to $(0, 0, 0, 1)$,
  from which no transition is enabled. The described sequence can be
  written as
  \begin{multline*}
    (1, 0, 0, 0)
    \trans{t_1}
    (0, 1, 1, 0)
    \trans{t_2}
    (1, 0, 1, 0) \\
    \trans{t_1}
    (0, 1, 2, 0)
    \trans{t_3}
    (0, 0, 0, 1).
  \end{multline*}
  Thus, we have $(1, 0, 0, 0) \trans{t_1\, t_2\, t_1\, t_3} (0, 0, 0,
  1)$, or more succinctly $(1, 0, 0, 0) \trans{*} (0, 0, 0, 1)$. \qed
\end{example}

\subsection{Upward and downward closed sets}

The set of markings $\N^P$ is partially ordered by ${\leq}$. The
latter is a well-quasi-order, which means that it neither contains
infinite antichains nor infinite (strictly) decreasing sequences.

A set of markings $X \subseteq \N^P$ is \emph{upward closed} if for
every $\m \in X$ and every other marking $\m' \in \N^P$, it is the
case that $\m \leq \m'$ implies $\m' \in X$. The \emph{upward closure}
of $Y \subseteq \N^P$ is defined as the smallest upward closed set,
denoted $\upclosure{Y}$, such that $Y \subseteq \upclosure{Y}$.

Similarly, a subset $X \subseteq \N^P$ is \emph{downward closed} if
for every $\m \in X$ and every other marking $\m' \in \N^P$, we have
$\m' \leq \m \implies \m' \in X$. The \emph{downward closure} of
$Y \subseteq \N^P$, denoted $\downclosure{Y}$, is the smallest
downward closed set such that $Y \subseteq \downclosure{Y}$.

We extend upward and downward closures to individual markings, i.e.\
$\upclosure{\m} \defeq \upclosure{\{\m\}}$ and
$\downclosure{\m} \defeq \downclosure{\{\m\}}$.

It is worth noting that every upward closed set is characterized by
the set of its minimal elements. Since such a set is an antichain, it
must be finite, resulting in a finite representation for each upward
closed set. Similarly, every downward closed set can be finitely
represented by a finite representation of its complement, which is an
upward closed set.

Given a reset Petri net $\PN$, we say that from a marking $\m$, it is
possible to \emph{cover} a marking $\m'$ if there exists a marking
$\m'' \geq \m'$ such that $\m \trans{*} \m''$ holds in $\PN$. Let
$C_{\m'}$ be the set of all markings $\m$ from which it is possible to
cover a marking $\m'$. Observe that $C_{\m'}$ is upward closed. An
important result is that, for any given marking $\m'$, it is possible
to compute the minimal elements of the set $C_{\m'}$. This computation
can be achieved using the so-called ``backward coverability
algorithm''~\cite{DBLP:journals/tcs/FinkelS01}.

Additionally, it is crucial to note the following lemma.
\begin{lemma}[\cite{FigueiraFSS11}]\label{lem:boundOnCoverability}
If it is possible to cover
$\m'$ from $\m$, then it can be done with a run whose length is Ackermannianly bounded in the size of the Petri net and $\m'$.
\end{lemma}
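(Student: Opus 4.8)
The plan is to reduce the bound on the length of a covering run to a bound on the number of iterations of the backward coverability algorithm recalled above \cite{DBLP:journals/tcs/FinkelS01}, and then to control that number via the theory of bad sequences. For $k \in \N$, let $U_k$ be the set of markings from which $\m'$ can be covered using a run of length at most $k$; thus $U_0 = \upclosure{\m'}$, each $U_k$ is upward closed, $U_{k+1} = U_k \cup \mathrm{Pre}(U_k)$ where $\mathrm{Pre}$ is the one-step predecessor operator, and $\bigcup_k U_k = C_{\m'}$. Since $U_k \subseteq U_{k+1}$ and every upward closed set has a finite basis of minimal elements, the chain stabilises at some index $N$, and then $U_N = C_{\m'}$. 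Consequently, if $\m$ can cover $\m'$ at all, then $\m \in C_{\m'} = U_N$, which by definition of $U_N$ yields a covering run of length at most $N$. It therefore suffices to bound $N$.

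To bound $N$, I would track the minimal elements introduced by the iteration. Let $B_k$ denote the finite set of minimal elements of $U_k$, and order all elements of $\bigcup_k (B_k \setminus B_{k-1})$ by their first index of appearance. I claim this sequence is \emph{bad} for $(\N^P, \leq)$, i.e.\ it contains no pair $x, y$ with $x$ before $y$ and $x \leq y$: if $x$ first appears at iteration $a$ and $y$ at iteration $b > a$, then $x \in U_a \subseteq U_b$, so $x \leq y$ with $x \neq y$ would contradict the minimality of $y$ in $U_b$, while $x = y$ would contradict $y$ being new at iteration $b$. Hence the number of iterations that actually enlarge the basis — and thus $N$ — is at most the length of this bad sequence.

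It remains to argue that the bad sequence is \emph{controlled}, so that its length can be bounded by the length function theorem of \cite{FigueiraFSS11}. Computing a minimal predecessor of a marking $\m''$ through a transition $t$ leaves the places reset by $t$ at their forced value $\post{t}$ and sets every other place $p$ to $\pre{t}(p) + \max(0, \m''(p) - \post{t}(p))$; summing over $p$ shows that the total number of tokens $\m''(P)$ of any minimal predecessor grows by at most $|\pre{t}|$. Thus one backward step increases the token count of minimal basis elements by at most the constant $c \defeq \max_{t \in T} |\pre{t}|$, so after $k$ iterations every element of $B_k$ has at most $\m'(P) + kc$ tokens. The bad sequence is therefore affinely controlled, with offset $\m'(P)$, slope $c$ and dimension $|P|$, all polynomial in the size of the net and of $\m'$. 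Applying the length function theorem over Dickson's order $(\N^{|P|}, \leq)$ bounds its length — and hence $N$ — by a quantity of Ackermannian size in these parameters, which is exactly the claimed bound. The crux is this last quantitative step: establishing badness and affine control is routine, whereas the Ackermannian length bound for controlled bad sequences over $\N^d$ with $d = |P|$ part of the input is the substantial combinatorial content supplied by \cite{FigueiraFSS11}. A forward alternative — observing that the reversal of a shortest covering run is a bad sequence by monotonicity of reset nets — is tempting, but obtaining affine control along such a run is delicate because reset arcs can destroy unboundedly many tokens in a single step, which is precisely why routing the argument through the backward algorithm is preferable.
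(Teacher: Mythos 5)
Correct. The paper does not prove this lemma at all---it imports it wholesale by citing \cite{FigueiraFSS11}---and your reconstruction (backward coverability iteration, the new minimal elements forming a bad sequence, affine control via the minimal-predecessor computation with reset arcs handled correctly, and the length function theorem of \cite{FigueiraFSS11} supplying the Ackermannian bound on the number of iterations and hence on the run length) is precisely the standard derivation that the citation stands for, with the imported combinatorial core correctly identified.
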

Throughout the paper we will often use \Cref{lem:boundOnCoverability} without referring to it.
We will even use a stronger property that the set of minimal elements
of $C_{\m'}$ can be computed in Ackermannian time. This follows from~\cite{LazicS21}, where it is proved that the backward coverability algorithm works in Ackermannian time.


\subsection{Workflow nets and soundness}

A \emph{reset workflow net} is a tuple $(P, T, F, R, i, f)$ where
\begin{itemize}
\item $\PN = (P, T, F, R)$ is a reset Petri net,

\item $i \in P$ is a place called \emph{initial} that satisfies $\pre{i}
  = \emptyset$,

\item $f \in P$ is a place called \emph{final} that satisfies $\post{f}
  = \emptyset$, and $(f, t) \notin R$ for all $t \in T$ (no transition
  resets $f$),

\item each element of $P \cup T$ is on some path from $i$ to $f$ in
  the underlying graph of $\PN$ without considering reset arcs,
  i.e.\ in the graph $G \defeq (V, E)$ with vertices $V \defeq P \cup
  T$ and directed edges $E \defeq \{(u, v) \in V \times V : (u, v) \in
  F\}$.
\end{itemize}
A \emph{(standard) workflow net} is a reset workflow net with no reset
arc, i.e.\ with $R = \emptyset$. For example, \Cref{fig:example}
depicts a reset workflow net.

Given $k \in \Npos$, we say that a reset workflow net $\W$
is \emph{$k$-sound} if for every marking $\m$ such that $\{i \colon
k\} \trans{*} \m$, it is the case that $\m \trans{*} \{f \colon
k\}$. In other words, $\W$ is $k$-sound if, from $k$ tokens in the
initial place, no matter what is fired, it is always possible to end
up with $k$ tokens in the final place (and no token elsewhere). We
say that a reset workflow net $\W$ is \emph{generalised sound} if it
is $k$-sound for all $k \in
\Npos$.

A marking $\m$ is a \emph{witness of $k$-unsoundness} if $\{i \colon
k\} \trans{*} \m$ and $\m \not\trans{*} \{f \colon k\}$. A marking
$\m$ is a \emph{witness of unsoundness} if it is a witness of
$k$-unsoundness for some $k \in \Npos$.

\begin{example}
  Reconsider the reset Petri net of \Cref{fig:example:pn}. Taking $i
  \defeq p_1$ and $f \defeq p_4$ does not yield a reset workflow net
  for the two following reasons:
  \begin{itemize}
  \item we have: $\pre{p_1} = \{t_2\} \neq \emptyset$, and

  \item there is no path from $p_3$ to $p_4$ (along solid edges).
  \end{itemize}

  \Cref{fig:example} depicts a reset workflow net $\W$. The set of
  markings reachable from $\{i \colon 1\}$ in $\W$ is depicted in
  \Cref{fig:reach}. It is readily seen that any reachable marking $\m$
  can reach $\{f \colon 1\}$. Thus, $\W$ is $1$-sound. However, $\W$
  is not $2$-sound and hence not generalised sound. Indeed, we have
  \begin{multline*}
    \{i \colon 2\}
    \trans{ss}
    \{p_1 \colon 2, p_2 \colon 2\}
    \trans{t_2} {} \\
    \{p_1 \colon 2, p_2 \colon 1, q_2 \colon 1, q_3 \colon 1\}
    \trans{u_2}
    \{f \colon 1\}.
  \end{multline*}
  As $\{f \colon 1\}$ cannot reach any other marking, it cannot reach
  $\{f \colon 2\}$ as required. So, $\{f \colon 1\}$ is a witness of
  $2$-unsoundness, and hence of unsoundness. \qed
\end{example}

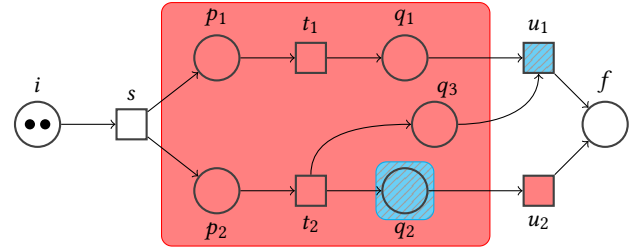
\begin{figure}
  \begin{center}
    \begin{center}
  \begin{tikzpicture}[auto, node distance=1.25cm]
    \tikzstyle{ctrans} = [transition]
    
    \node[place, label=above:$i$, tokens=2] (i)  {};

    \node[transition, right of=i, label=above:$s$] (t0) {};

    \node[place, above right of=t0, xshift=0.25cm, label=above:$p_1$] (p1) {};
    \node[place, below right of=t0, xshift=0.25cm, label=below:$p_2$] (p2) {};

    \node[transition, right of=p1, label=above:$t_1$] (t1) {};
    \node[transition, right of=p2, label=below:$t_2$] (t2) {};

    \node[place, right of=t1, label=above:$q_1$] (q1) {};
    \node[place, right of=t2, label=below:$q_2$] (q2) {};
    \node[place, right=3.5cm of t0, label={[xshift=5pt, yshift=-2pt]$q_3$}] (q3) {};

    \node[
      ctrans, right=1.25cm of q1, label=above:$u_1$,
      pattern=north east lines, pattern color=gray!75,
      preaction={fill=colB!50}
    ] (u1) {};

    \node[
      ctrans, right=1.25cm of q2, label=below:$u_2$,
      fill=colA!50
    ] (u2) {};

    \node[place, label=above:$f$, above right of=u2] (f)  {};

    \path[->]
    (i) edge node {} (t0)

    (t0) edge node {} (p1)
    (t0) edge node {} (p2)

    (p1) edge node {} (t1)
    (t1) edge node {} (q1)

    (p2) edge node {} (t2)
    (t2) edge node {} (q2)
    (t2) edge[out=90, in=180] node {} (q3)

    (q1) edge node {} (u1)
    (q3) edge[out=0, in=-90] node {} (u1)
    (u1) edge node {} (f)

    (q2) edge node {} (u2)
    (u2) edge node {} (f)
    ;
    
    \begin{pgfonlayer}{background}
      \node[fit=(p1)(p2)(q1)(q2)(q3),
        draw=colA, rounded corners, fill=colA!50, inner sep=12pt] {};

      \node[fit=(q2), pattern=north east lines, pattern color=gray!75,
        draw=colB, rounded corners, preaction={fill=colB!50},
        inner sep=2pt] {};
    \end{pgfonlayer}
  \end{tikzpicture}
\end{center}
  \end{center}
  \caption{Example of a reset workflow net. Reset arcs are depicted
    implicitly by colored patterns, rather than explicitly by dotted
    directed edges. In words, transition $u_1$ resets place $q_2$, and
    transition $u_2$ resets all places from $\{p_1, p_2, q_1, q_2,
    q_3\}$. Formally, $R = \{(q_2, u_1)\} \cup \{(r, u_2) : r \in
    \{p_1, p_2, q_1, q_2, q_3\}\}$. The two filled circles within
    place $i$ represent two tokens.}\label{fig:example}
\end{figure}

\begin{figure}[h!]
  \begin{center}
    \hspace*{25pt}
    \begin{tikzpicture}[auto]
  \node[] (a) {$\{i \colon 1\}$};

  \node[below=15pt of a] (b) {$\{p_1 \colon 1, p_2 \colon 1\}$};

  \node[below=5pt of b, xshift=-50pt] (c) {
    $\{q_1 \colon 1, p_2 \colon 1\}$
  };

  \node[below=5pt of b, xshift=50pt] (d) {
    $\{p_1 \colon 1, q_2 \colon 1, q_3 \colon 1\}$
  };

  \node[below=5pt of c, xshift=50pt] (e) {
    $\{q_1 \colon 1, q_2 \colon 1, q_3 \colon 1\}$
  };

  \node[below=15pt of e] (f) {$\{f \colon 1\}$};

  \path[->, black!60]
  (a) edge node[swap] {$s$} (b)

  (b) edge[out=180, in=90] node[swap] {$t_1$} (c)
  (b) edge[out=0,   in=90] node[]     {$t_2$} (d)

  (c) edge[out=-90, in=180] node[swap] {$t_2$} (e)
  (d) edge[out=-90, in=0]   node[]     {$t_1$} (e)

  (e) edge node[swap] {$u_1, u_2$} (f)

  (d) edge[out=0, in=0, looseness=1] node[] {$u_2$} (f)
  ;
\end{tikzpicture}
  \end{center}
  \caption{The set of markings reachable from $\{i \colon 1\}$ in the
    reset workflow net of \Cref{fig:example}. Each edge $\m \trans{t}
    \m'$ indicates that firing transition $t$ in marking $\m$ leads to
    marking $\m'$.}\label{fig:reach}
\end{figure}
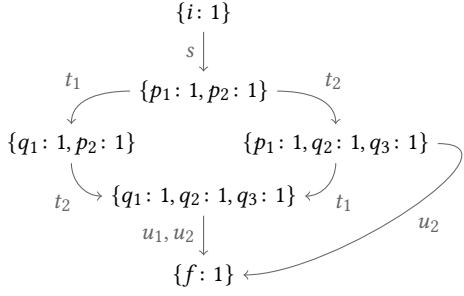

\subsection{Subnets}\label{ssec:subnets}

Let $\W = (P, T, F, R, i, f)$ be a reset workflow net, let $Q
\subseteq P$ and let $S \subseteq T$. The reset Petri net
\emph{obtained from $\W$ by removing places $Q$ and transitions $S$}
is the reset Petri net $(P, T, F, R)$ from which we remove $Q$, $S$
and any remaining isolated node (i.e.\ with no incoming and outgoing
arc.) More formally, it is the reset Petri net $\PN \defeq (P', T',
F', R')$ where
\begin{align*}
  P' &\defeq \{p \in P \setminus Q : (\pre{p} \cup \post{p})
  \not\subseteq S\} \\
  T' &\defeq \{t \in T \setminus S : (\pre{t} \cup \post{t})
  \not\subseteq Q\}, \\
  F' &\defeq \{(p, t) \in F : p \in P', t \in T'\} \cup \{(t, p) \in F
  : p \in P', t \in T'\}, \\
  R' &\defeq \{(p, t) \in R : p \in P', t \in T'\}.
\end{align*}

By definition, $\PN$ has no isolated node, i.e.\ with no incoming or
outgoing arc. This holds even if we only remove places ($S =
\emptyset$) or only remove transitions ($Q = \emptyset$). Indeed,
since $\W$ is a reset workflow net, each place $p \in P$ and each
transition $t \in T$ has at least one incoming arc or one outgoing
arc in $\W$.

\section{Generalised soundness is undecidable}
\label{sec:undecidability}
In this section, we prove the following result.

\begin{theorem}\label{thm:undecidability}
  The generalised soundness problem for reset workflow nets is
  undecidable.
\end{theorem}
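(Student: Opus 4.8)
The standard strategy for undecidability results on reset Petri nets is reduction from a known undecidable problem for reset nets, such as reachability or boundedness. Since generalised soundness quantifies over all token counts $k$, and asks whether completion is always possible, I want to encode an undecidable halting-style question so that the reset workflow net is generalised sound if and only if some machine does NOT halt (or reach a distinguished configuration). Let me sketch the approach.

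The plan is to reduce from the reachability problem for reset Petri nets, which is known to be undecidable (by Dufourd–Finkel–Schnoebelen, cited in the excerpt as \cite{DBLP:conf/icalp/DufourdFS98}). Equivalently, one can reduce from the halting problem of a counter machine (Minsky machine) simulated by a reset net, where resets replace zero-tests in a way that still keeps reachability undecidable. Given an instance, I would build a reset workflow net $\W$ whose runs from $\{i \colon k\}$ simulate $k$ independent or interacting threads of the target machine. The final place $f$ should be reachable in the ``correct'' quantity $k$ exactly when every simulated computation can be brought to a clean terminating configuration; a deviation (e.g.\ a faithful simulation reaching the designated target) must produce a witness of unsoundness, i.e.\ a reachable marking from which $\{f \colon k\}$ is unreachable.

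Concretely, I would first design the gadget that turns the reset net into a \emph{workflow} net: add a fresh initial place $i$ with $\pre{i} = \emptyset$ and a fresh final place $f$ with $\post{f} = \emptyset$, connect them via ``start'' and ``finish'' transitions, and ensure the structural condition that every node lies on an $i$-to-$f$ path. The delicate part is to guarantee that honest simulations can always clean up all auxiliary places and deposit exactly one token in $f$ per initial token, so that \emph{absence} of the bad event forces $k$-soundness for all $k$, whereas \emph{presence} of the bad event (the machine reaching its target) creates a reachable marking that is irrevocably stuck away from $\{f \colon k\}$. Because resets can erase tokens instantly, I must take care that a single ``cheating'' step cannot be silently undone or masked across the $k$ parallel tokens; typically one arranges a trap or a locking mechanism so that once the bad configuration is witnessed, no continuation can restore the exact final marking.

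The main obstacle I anticipate is the interaction between the universal quantifier over $k$ in generalised soundness and the reset semantics. Unlike $1$-soundness, generalised soundness gives the adversary arbitrarily many tokens, which enables arbitrarily many parallel simulation threads; I must ensure the reduction is \emph{robust} across all $k$ simultaneously, so that the reset net's reachability answer is reflected uniformly rather than being an artifact of a particular $k$. In particular I need the invariant that the number of ``in-flight'' tokens is conserved along honest runs (a token-conservation or weak-conservation property), so that reaching $\{f \colon k\}$ is possible precisely when all $k$ threads terminate correctly, and a single successful reachability in the source net breaks soundness for a suitable (and hence, by a padding argument, for all large) $k$. Handling this uniformity — and ensuring the reset-induced token loss is exploited only where intended — is where the careful, case-specific construction of transitions and resets will be required; the remaining verification that $\W$ is generalised sound iff the source instance is negative should then follow by a direct simulation/bisimulation argument in both directions.
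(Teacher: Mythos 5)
Your high-level outline (reduce from Minsky-machine reachability simulated with resets, wrap the simulator into a workflow net with fresh $i$ and $f$, let honest runs clean up, let the bad event produce a marking that can never reach $\{f \colon k\}$) does match the paper's plan, but there is a genuine gap at the crux. You never say how the net \emph{detects} that the simulated machine has reached its target configuration \emph{exactly}. Any transition or ``trap/locking mechanism'' you attach can only test that the positive part of the target is \emph{covered}; it cannot test that the counter places are empty. Since coverability is decidable for reset Petri nets, a reduction whose bad event triggers on covering the target cannot yield undecidability --- the entire difficulty is converting exact reachability into unsoundness using only reset-net primitives. The paper closes this hole with a budget construction: each counter $x_i$ is paired with a complement place $\budget{x_i}$; firing the start transition $k$ times turns the $k$ initial tokens into budgets of $k$ tokens on $\budget{x_1}$ and $\budget{x_2}$ (so the initial marking bounds the counters); zero-tests become resets of $x_i$; and the ``target reached'' transition consumes one token from each budget per firing while keeping $\qtgt$ marked, leaving the witness marking $\{\qtgt \colon 1, f \colon k\}$. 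Cheating resets destroy budget tokens irrecoverably, so a counting argument over the whole run (occurrences of the start and finish transitions must be equal, each finish removes exactly two tokens from the counter/budget places, each start adds at most two) forces every reset to have fired on an empty place, i.e.\ forces an honest simulation. This accounting, together with a case analysis decomposing an unsound run, is the heart of the backward direction (``not generalised sound $\Rightarrow$ the machine reaches the target'') and is far from the ``direct simulation/bisimulation argument'' your sketch defers to.

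Relatedly, your concern about making the reduction ``robust across all $k$ simultaneously,'' and the padding argument you invoke, point in the wrong direction. The paper resolves the universal quantifier by exploiting its negation: not generalised sound means not $k$-sound for \emph{some} $k$, and since the $k$ initial tokens serve as the counter budget, $\W$ fails $k$-soundness exactly when the machine reaches its target with counters bounded by (roughly) $k$. Thus the existential quantifier hidden in ``not generalised sound'' is matched against the existential quantifier over counter bounds hidden in unbounded reachability --- no uniformity in $k$ is needed on that side. What must hold for \emph{all} $k$ is the converse (no reachability $\Rightarrow$ $k$-sound for every $k$), and there the paper's mechanism is a dedicated ``flush'' transition that resets the entire simulation area and moves every pending token to $f$, so that a run can get stuck only by certifying the target with its budgets intact. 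Your sketch gestures at both ingredients (cleanup, lock) but without the budget accounting neither direction of the equivalence can be carried out.
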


We will give a reduction from the reachability problem in Minksy
machines. A \emph{(two-counter) Minsky machine} is a finite automaton
with two counters that can be incremented, decremented and
zero-tested. More formally, it is a pair $(Q, \Delta)$ where $Q$ is a
finite set of elements called \emph{control states}, and where $\Delta
\subseteq Q \times \{\inc{i}, \dec{i}, \zrt{i} : i \in \{1, 2\}\}
\times Q$ is a set of elements called \emph{transitions}. A
\emph{configuration} of such a machine is a triple $(p, a, b) \in Q
\times \N \times \N$ which we will write more concisely as
$p(\vec{v})$ where $\vec{v} = (a, b)$. For the following definition,
let $\neg i$ denote $3 - i$, i.e.\ the index of the other counter. A
transition $(p, \mathtt{oper}, q)$ allows to update a current
configuration in control state $p$ into a configuration in control
state $q$ with the expected semantic of $\mathtt{oper}$:
\begin{alignat*}{3}
  q(\vec{v}) &\trans{\mathmakebox[40pt][l]{(q, \inc{i}, q')}} q'(\vec{v}')\
  && \text{ if }
  \vec{v}'(i) = \vec{v}(i) + 1 &\text{ and } (*), \\
  q(\vec{v}) &\trans{\mathmakebox[40pt][l]{(q, \dec{i}, q')}} q'(\vec{v}')\
  && \text{ if }
  \vec{v}'(i) = \vec{v}(i) - 1 \geq 0 &\text{ and } (*), \\
  q(\vec{v}) &\trans{\mathmakebox[40pt][l]{(q, \zrt{i}, q')}} q'(\vec{v}')\
  && \text{ if }
  \vec{v}'(i) = \vec{v}(i) = 0 &\text{ and } (*),
\end{alignat*}
where ``$(*)$'' stands for ``$\vec{v}'(\neg i) = \vec{v}(\neg i)$''.

We write $q(\vec{v}) \trans{} q'(\vec{v}')$ if $q(\vec{v}) \trans{t}
q'(\vec{v}')$ for some $t \in \Delta$. We write $\trans{*}$ to denote
the reflexive-transitive closure of $\trans{}$. The
\emph{(control-state) reachability problem} asks, given $p, q \in Q$,
whether $p(\vec{0}) \trans{*} q(\vec{0})$. It is well known that this
problem is undecidable.

Let $q(\vec{v}) \trans{}_k q'(\vec{v}')$ denote the fact that
$q(\vec{v}) \trans{} q'(\vec{v}')$ and $0 \leq \vec{v}(i), \vec{v}'(i)
\leq k$ for both $i \in \{1, 2\}$. Let $\trans{*}_k$ denote the
reflexive-transitive closure of $\trans{}_k$. In words, $\trans{*}_k$
is the reachability relation where counters remain within
$[0..k]$. Clearly, $p(\vec{u}) \trans{*} q(\vec{v})$ holds iff there
exists $k \in \N$ such that $p(\vec{u}) \trans{*}_k q(\vec{v})$ holds.

\begin{proposition}\label{prop:minsky:resetpetri}
  Given a Minsky machine $\M = (Q, \Delta)$ and control states $\qsrc,
  \qtgt \in Q$, one can construct, in polynomial time, a reset Petri
  net $\PN$ with places $P \defeq Q \cup \{x_1, x_2, \budget{x_1},
  \budget{x_2}$\} such that the following holds for every $p, q \in Q$
  and $k \in \N$:
  \begin{itemize}
  \item $p(\vec{0}) \trans{*}_k q(\vec{0})$ holds in $\M$ iff $\{p
    \colon 1, \budget{x_1} \colon k, \budget{x_2} \colon k\} \trans{*}
    \{q \colon 1, \allowbreak \budget{x_1} \colon k, \allowbreak
    \budget{x_2} \colon k\}$ holds in $\PN$;

  \item $\{p \colon 1, \budget{x_1} \colon k, \budget{x_2} \colon k\}
    \trans{*} \m$ in $\PN$ implies $\m(Q) = 1$ and $\m(x_i) +
    \m(\budget{x_i}) \leq k$ for all $i \in \{1, 2\}$;

  \item Each node of $\PN$ is on some path from $\qsrc$ to $\qtgt$.
  \end{itemize}
\end{proposition}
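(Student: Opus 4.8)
The plan is to build a \emph{budget net}. Each counter $i$ is encoded by the pair of places $x_i, \budget{x_i}$ whose token sum is meant to equal $k$, so that $\m(x_i)$ holds the counter value and $\m(\budget{x_i}) = k - \m(x_i)$ holds the room left below $k$. For each $(p, \inc{i}, q) \in \Delta$ I add a transition consuming a token from $p$ and from $\budget{x_i}$ and producing one in $q$ and in $x_i$; for each $(p, \dec{i}, q)$ a transition consuming from $p$ and $x_i$ and producing in $q$ and $\budget{x_i}$; and for each $(p, \zrt{i}, q)$ a transition consuming from $p$, producing in $q$, and carrying a reset arc on $x_i$. Every transition keeps exactly one token among the control places $Q$, and since an increment is guarded by $\budget{x_i} \geq 1$, the guard is precisely what enforces the ceiling $k$, matching the semantics of $\trans{*}_k$.

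For the second property I would prove by induction on the run that, from $\{p \colon 1, \budget{x_1} \colon k, \budget{x_2} \colon k\}$, every reachable $\m$ satisfies $\m(Q) = 1$ and $\m(x_i) + \m(\budget{x_i}) \leq k$: each transition consumes and produces exactly one $Q$-token, increments and decrements merely move a token between $x_i$ and $\budget{x_i}$, and a zero-test transition only removes tokens from $x_i$, so the relevant sum is invariant at $1$ and non-increasing, respectively. The same bookkeeping gives the easy direction of the first property: a computation $p(\vec{0}) \trans{*}_k q(\vec{0})$ of $\M$ is simulated step by step, the bound $k$ ensuring every increment finds a token in $\budget{x_i}$ and every (honest) zero test finds $x_i$ already empty, so the run ends in $\{q \colon 1, \budget{x_1} \colon k, \budget{x_2} \colon k\}$.

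The crux is the converse of the first property, namely that the reset arcs yield \emph{faithful} zero tests although a zero-test transition is syntactically enabled regardless of $\m(x_i)$. Here I would invoke the monotonicity established above: for each $i$ the sum $\m(x_i) + \m(\budget{x_i})$ starts at $k$ and never increases, so if the run ends in $\{q \colon 1, \budget{x_1} \colon k, \budget{x_2} \colon k\}$ --- where this sum is again $k$ --- then it must have stayed equal to $k$ throughout. Hence no reset ever destroyed a token, i.e.\ every zero-test transition fired only when $\m(x_i) = 0$, and reading the run off transition by transition yields a genuine $k$-bounded computation $p(\vec{0}) \trans{*}_k q(\vec{0})$ of $\M$. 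I expect this ``budget-forces-honesty'' step to be the main and most delicate point, as it is exactly where resets purchase the zero-testing power that plain Petri nets lack.

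It remains to secure the structural third property. I would first normalise $\M$ by discarding every control state lying on no path from $\qsrc$ to $\qtgt$ in the underlying control graph; this changes neither whether $\qsrc(\vec{0}) \trans{*} \qtgt(\vec{0})$ holds (all that the reduction needs) nor the faithfulness argument, and it makes every remaining control place lie on such a path. Since each simulating transition sits between its source and target places, all $Q$-nodes and all simulating transitions then lie on $\qsrc$--$\qtgt$ paths. The four places $x_i, \budget{x_i}$ are attached to such a path through a constant number of auxiliary transitions; to keep these harmless I make each of them permanently disabled by placing two distinct $Q$-places in its preset, which can never be simultaneously marked because $\m(Q) = 1$. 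These transitions never fire, so the first two properties are untouched, yet they supply the missing static edges into and out of $x_i$ and $\budget{x_i}$. The whole construction is plainly polynomial in $\M$.
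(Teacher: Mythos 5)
Your construction and its correctness argument coincide with the paper's for the first two items: the same budget-place encoding (increments and decrements shuttle tokens between $x_i$ and $\budget{x_i}$, zero-tests become resets of $x_i$), the same inductive invariant $\m(Q)=1$ and $\m(x_i)+\m(\budget{x_i})\le k$, and the same faithfulness argument --- the sum $\m(x_i)+\m(\budget{x_i})$ is non-increasing, starts and ends at $k$, hence stays at $k$, so every reset fires on an empty place. The only genuine divergence is in the third, structural item. The paper preprocesses $\M$: besides trimming control states not on $\qsrc$--$\qtgt$ paths (as you do), it adds fresh control states $\qsrc'$ and $r$, where $\qsrc'$ carries \emph{fireable} increment/decrement self-loops on both counters followed by a chain of two zero-tests leading into $\qsrc$; the self-loops attach all four counter places to the path structure, and the zero-test chain guarantees the modified machine is equivalent to $\M$ (any run reaching $\qsrc$ must arrive with both counters at zero). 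You instead attach the counter places via auxiliary transitions that are \emph{permanently dead}, their presets containing two distinct $Q$-places, which can never be simultaneously marked since $\m(Q)=1$ is invariant (inductively: dead transitions need $\m(Q)\ge 2$ to fire, so they never do). Your variant buys a real simplification: the reachability relation from the relevant initial markings is literally unchanged, so items one and two need no re-verification, whereas the paper must argue that $\M'$ is behaviourally equivalent to $\M$. The small price is that your disabling trick presupposes $|Q|\ge 2$; the degenerate case $|Q|=1$ (where $\qsrc=\qtgt$ and reachability holds trivially via the empty run) must be special-cased, while the paper's gadget works uniformly since it always introduces new states. Both constructions are polynomial and both arguments are correct.
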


\begin{proof}
  We use classical notions: budget places and weak simulation of
  zero-tests. More precisely, each counter $x_i$ of $\M$ is
  represented by two places in $\PN$: $x_i$ and
  $\budget{x_i}$. Initially, $x_i$ is empty and $\budget{x_i}$
  contains $k$ tokens. Whenever $x_i$ is incremented, $\budget{x_i}$
  is decremented, and vice versa. This forces $x_i$ to remain within
  $[0..k]$. Each zero-test of $\M$ is simulated by a reset of
  $x_i$. If a reset occurs whenver $x_i$ is empty, then nothing
  happens. However, if $\PN$ ``cheats'' and resets $x_i$ whenever it
  is non empty, then $\{x_i, \budget{x_i}\}$ now contains less than
  $k$ tokens and it will never be possible to increase that number back.

  More formally, let us define $\PN = (P, T, F, R)$. We set $P \defeq
  Q \cup \{x_1, x_2, \budget{x_1}, \budget{x_2}$\} and $T \defeq
  \Delta$. For each transition $t = (p, \mathtt{oper}, q) \in \Delta$,
  we add the following arcs to $\PN$.

  \begin{itemize}
  \item \emph{Case $\mathtt{oper} = \inc{i}$}. We move the token from
    $p$ to $q$, increment $x_i$ and decrement its dual:

    \begin{center}
      \begin{tikzpicture}[auto, node distance=1cm]
        \node[place, label=left:$p$] (p) {};
        \node[transition, right of=p, label=above:$t$] (t) {};
        \node[place, right of=t, label=right:$q$] (q) {};
        \node[place, below of=p, label=left:$\budget{x_i}$] (xib) {};
        \node[place, below of=q, label=right:$x_i$] (xi) {};

        \path[->]
        (p)   edge node {} (t)
        (t)   edge node {} (q)
        (t)   edge node {} (xi)
        (xib) edge node {} (t)
        ;
      \end{tikzpicture}
    \end{center}
    \medskip

 \item \emph{Case $\mathtt{oper} = \dec{i}$}. We move the token from
   $p$ to $q$, decrement $x_i$ and increment its dual:

    \begin{center}
      \begin{tikzpicture}[auto, node distance=1cm]
        \node[place, label=left:$p$] (p) {};
        \node[transition, right of=p, label=above:$t$] (t) {};
        \node[place, right of=t, label=right:$q$] (q) {};
        \node[place, below of=p, label=left:$\budget{x_i}$] (xib) {};
        \node[place, below of=q, label=right:$x_i$] (xi) {};

        \path[->]
        (p)  edge node {} (t)
        (t)  edge node {} (q)
        (t)  edge node {} (xib)
        (xi) edge node {} (t)
        ;
      \end{tikzpicture}
    \end{center}
    \medskip
    
  \item \emph{Case $\mathtt{oper} = \zrt{i}$}. We move the token from
    $p$ to $q$, reset $x_i$ and leave $\budget{x_i}$ unchanged:
    
    \smallskip

    \begin{center}
      \begin{tikzpicture}[auto, node distance=1cm]
        \tikzstyle{ctrans} = [transition]

        \node[place, label=left:$p$] (p) {};
        
        \node[
          ctrans, right of=p, label=below:$t$,
          fill=colA!50
        ] (t) {};

        \node[place, right of=t, label=right:$q$] (q) {};
        \node[place, below of=p, label=left:$\budget{x_i}$] (xib) {};
        \node[place, below of=q, label=right:$x_i$] (xi) {};

        \path[->]
        (p)  edge node {} (t)
        (t)  edge node {} (q)
        ;

        \begin{pgfonlayer}{background}
          \node[fit=(xi),
            draw=colA, rounded corners, fill=colA!50, inner sep=2pt] {};
        \end{pgfonlayer}
      \end{tikzpicture}
    \end{center}
  \end{itemize}
  \medskip

  It is readily seen that starting from marking $\{\qsrc \colon 1,
  \allowbreak \budget{x_1} \colon k, \allowbreak \budget{x_2} \colon
  k\}$ there is always exactly one token in $Q$. Moreover, the two
  first types of transitions leave the number of tokens in $\{x_i,
  \budget{x_i}\}$ unchanged, while the third type of transitions may
  decreases the number of tokens in $\{x_i, \budget{x_i}\}$. So, the
  second item of the proposition holds.

  Let us explain why the first item holds.

  $\Rightarrow$) Assume $p(\vec{0}) \trans{\pi}_k q(\vec{0})$ holds in
  $\M$. We claim that
  \[
  \{p \colon 1, \budget{x_1} \colon k, \budget{x_2} \colon k\}
  \trans{\pi}
  \{q \colon 1, \budget{x_1} \colon k, \budget{x_2} \colon k\}
  \text{ holds in } \PN.
  \]
  Indeed, (i)~resets only occur on empty places, which maintains the
  invariant that $\{x_i, \budget{x_i}\}$ contains exactly $k$ tokens;
  (ii)~no increment or decrement is ever blocked in $\PN$ since we
  know that counters never exceed $k$ in $\M$.

  $\Leftarrow$) Assume that
  \[
  \{p \colon 1, \budget{x_1} \colon k, \budget{x_2} \colon k\}
  \trans{\pi}
  \{q \colon 1, \budget{x_1} \colon k, \budget{x_2} \colon k\}
  \text{ holds in } \PN.
  \]
  As each $\budget{x_i}$ starts and ends with $k$ tokens, this
  means that each reset that occured in $\pi$ did not consume any
  token. Thus, zero-tests were simulated faithfully. Consequently,
  $p(\vec{0}) \trans{\pi}_k q(\vec{0})$ holds in $\M$.

  It remains to prove the last item of the proposition, namely that
  each node of $\PN$ is on some path from $\qsrc$ to $\qtgt$. We
  preprocess $\M$ as follows:
  \begin{enumerate}
  \item each control state unreachable from $\qsrc$ in the underlying
    graph is removed from $\M$;\label{itm:rem:from}

  \item each control state that cannot reach $\qtgt$ in the underlying
    graph is removed from $\M$;\label{itm:rem:to}

  \item We add two new control states $\qsrc'$ and $r$ to $\M$, and
    these transitions:\label{itm:add}
    \medskip

    \begin{center}
      \begin{tikzpicture}[auto, node distance=2.5cm]
        \node[state] (qsrcp) {$\qsrc'$};
        \node[state, right of=qsrcp] (r) {$r$};
        \node[state, right of=r] (qsrc) {$\qsrc$};

        \path[->]
        (qsrcp) edge[loop, out=100, in=130, looseness=8] node[swap]
             {$\inc{1}$} (qsrcp)
        (qsrcp) edge[loop, out=-100, in=-130, looseness=8] node[]
             {$\dec{1}$} (qsrcp)

        (qsrcp) edge[loop, out=80, in=50, looseness=8] node[]
             {$\inc{2}$} (qsrcp)
        (qsrcp) edge[loop, out=-80, in=-50, looseness=8] node[swap]
             {$\dec{2}$} (qsrcp)

        (qsrcp) edge[] node {$\zrt{1}$} (r)
        (r) edge[] node {$\zrt{2}$} (qsrc)
        ;
      \end{tikzpicture}
    \end{center}
  \end{enumerate}
  The resulting machine $\M'$ is clearly equivalent,
  i.e.\ $\qsrc(\vec{0}) \trans{*}_k \qtgt(\vec{0})$ holds in $\M$ iff
  $\qsrc'(\vec{0}) \trans{*}_k \qtgt(\vec{0})$ holds in $\M'$.

  By the definition of the Petri net $\PN'$ obtained from $\M'$, each
  place of $Q$ is on some path from $\qsrc'$ to $\qtgt$ due to
  \eqref{itm:rem:from} and~\eqref{itm:rem:to}, and each place of
  $\{x_1, x_2, \budget{x_1}, \budget{x_2}\}$ as well since they can
  all reach $\qsrc$ in this fragment of $\PN'$ due to \eqref{itm:add}:
  \begin{center}
    \begin{tikzpicture}[auto, node distance=1.5cm, transform shape, scale=0.8]
      \tikzstyle{ctrans} = [transition]
      
      \node[place, label=above:$\qsrc'$] (qsrc) {};

      \node[
        ctrans, right of=qsrc,
        fill=colA!50
      ] (t1) {};

      \node[place, label=above:$r$, right of=t1] (r) {};

      \node[
        ctrans, right of=r,
        pattern=north east lines, pattern color=gray!75,
        preaction={fill=colB!50},
      ] (t2) {};

      \node[place, right of=t2, label=above:$\qsrc$] (qsrcp) {};

      \node[place, below of=qsrc, label=above:$x_2$] (x2) {};
      \node[place, below of=x2, label=below:$\budget{x_2}$] (x2b) {};

      \node[transition, right of=x2]  (s1) {};
      \node[transition, right of=x2b] (s2) {};

      \node[transition, left of=x2]  (u1) {};
      \node[transition, left of=x2b] (u2) {};

      \node[place, left of=u1, label=above:$x_1$] (x1) {};
      \node[place, left of=u2, label=below:$\budget{x_1}$] (x1b) {};
        
      \path[->]
      (p)  edge[bend left=5] node {} (s1)
      (s1) edge[bend left=5] node {} (p)
      (s1) edge node {} (x2b)
      (x2) edge node {} (s1)

      (p)   edge[bend left=5] node {} (s2)
      (s2)  edge[bend left=5] node {} (p)
      (s2)  edge node {} (x2)
      (x2b) edge node {} (s2)

      (p)  edge[bend left=5] node {} (u1)
      (u1) edge[bend left=5] node {} (p)
      (u1) edge node {} (x1b)
      (x1) edge node {} (u1)

      (p)   edge[bend left=5] node {} (u2)
      (u2)  edge[bend left=5] node {} (p)
      (u2)  edge node {} (x1)
      (x1b) edge node {} (u2)

      (qsrc) edge node {} (t1)
      (t1)   edge node {} (r)

      (r)  edge node {} (t2)
      (t2) edge node {} (qsrcp)
      ;

      \begin{pgfonlayer}{background}
        \node[fit=(x1),
          draw=colA, rounded corners, fill=colA!50, inner sep=2pt] {};
        
        \node[fit=(x2), pattern=north east lines, pattern color=gray!75,
          draw=colB, rounded corners, preaction={fill=colB!50},
          inner sep=2pt] {};
      \end{pgfonlayer}      
    \end{tikzpicture}
    \vspace{-20pt}
  \end{center}
\end{proof}

The following proposition establishes \Cref{thm:undecidability}.

\begin{proposition}
  Given a Minsky machine $\M = (Q, \Delta)$ and control states $\qsrc,
  \qtgt \in Q$, one can construct, in polynomial time, a reset
  workflow net $\W$ such that $\W$ is generalised sound iff
  $\qsrc(\vec{0}) \not\trans{*} \qtgt(\vec{0})$ holds in $\M$.
\end{proposition}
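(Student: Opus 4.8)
The plan is to extend the reset Petri net $\PN$ of \Cref{prop:minsky:resetpetri} into a reset workflow net $\W$ by adjoining the initial and final places $i$ and $f$ together with three kinds of transitions: an \emph{initialisation} reading from $i$ that installs a single control token on $\qsrc$ and loads a budget into $\budget{x_1}, \budget{x_2}$; the \emph{simulation}, which is exactly $\PN$; and a \emph{completion} part that pushes tokens into $f$, plus one extra transition enabled at $\qtgt$ whose effect depends on the budget being full. The design principle I would follow is a strict invariant: exactly $k$ tokens are \emph{flow} tokens (ultimately destined for $f$), while the budget tokens that the zero-test resets of $\PN$ manipulate are \emph{auxiliary} and do not count towards the $k$ tokens required in $f$. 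Since generalised soundness quantifies over all $k$, and since from $\{i \colon k\}$ the loadable budget grows with $k$, the statement ``$\W$ is not $k$-sound for some $k$'' should mirror ``the machine reaches $\qtgt$ within some budget'', i.e.\ $\qsrc(\vec{0}) \trans{*} \qtgt(\vec{0})$.

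For the forward implication (if the machine reaches $\qtgt$ then $\W$ is not generalised sound) I would invoke \Cref{prop:minsky:resetpetri}: from $\qsrc(\vec{0}) \trans{*}_k \qtgt(\vec{0})$ one reaches in $\PN$ the marking $\{\qtgt \colon 1, \budget{x_1} \colon k, \budget{x_2} \colon k\}$ with the budget \emph{full}, which certifies that no reset has destroyed a token. For $k$ large enough to load this budget, firing the extra $\qtgt$-transition from the full-budget marking yields a dead marking different from $\{f \colon k\}$ (e.g.\ placing one token too many in $f$, or stranding a token that can no longer reach $f$); this is a witness of $k$-unsoundness. For the converse (if the machine never reaches $\qtgt$ then $\W$ is $k$-sound for every $k$) I would show that every marking reachable from $\{i \colon k\}$ can still reach $\{f \colon k\}$: the $k$ flow tokens are conserved by every transition, so one can always discard the auxiliary budget by resets and route the flow tokens to $f$; and the extra transition, needing a full budget at $\qtgt$, is never enabled, since arriving at $\qtgt$ with a full budget would require a faithful run, contradicting $\qsrc(\vec{0}) \not\trans{*} \qtgt(\vec{0})$.

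The main obstacle is precisely the \emph{lossy} nature of the zero-test resets. A run may ``cheat'' by firing a zero-test while the corresponding counter is non-empty, destroying budget tokens, and I must ensure this neither breaks soundness in the converse direction nor spuriously springs the trap. The first point is exactly what forces the flow/auxiliary split: a cheat must destroy \emph{only} auxiliary tokens, so that the $k$ flow tokens survive and $\{f \colon k\}$ stays reachable from any cheated marking --- otherwise every net with a reachable positive counter would be declared unsound. The second, and I expect harder, point is to arrange that the trap is sprung \emph{only} by a faithful, full-budget arrival at $\qtgt$: a cheated arrival at $\qtgt$ (always possible, as $\qtgt$ is graph-reachable) must remain completable to $\{f \colon k\}$. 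Making the gadget distinguish a full budget from a merely partial one without an (unavailable) exact zero-test --- so that cheated arrivals cannot reach the bad marking --- is the delicate heart of the construction and is where I would spend most of the effort, leaning on the guarantee of \Cref{prop:minsky:resetpetri} that a full budget at $\qtgt$ is reachable exactly when the machine reaches $\qtgt$.
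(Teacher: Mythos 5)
Your high-level architecture matches the paper's: extend the net $\PN$ of \Cref{prop:minsky:resetpetri} with an initial place, a budget-loading phase, a flushing mechanism, and a discharge at $\qtgt$, so that a faithful, full-budget arrival at $\qtgt$ produces a stranded token witnessing unsoundness (the paper's witness is exactly $\{\qtgt \colon 1, f \colon k\}$). However, there is a genuine gap precisely where you flag ``the delicate heart of the construction'': you never supply the mechanism, and the placeholder you describe --- a single extra transition ``enabled at $\qtgt$ whose effect depends on the budget being full'', which in the converse direction is ``never enabled'' absent a full budget --- is not implementable. A reset Petri net transition cannot be enabled conditionally on a place holding exactly $k$ tokens for a run-dependent parameter $k$; that is exactly the kind of test whose absence makes the problem nontrivial, and reset arcs only destroy tokens, they do not test them.

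The paper resolves this not with a local test but with a global counting argument, and that argument is the actual content of the proof. Its discharge transition $t_3$ is enabled whenever $\qtgt$, the intermediate place $r$, and both budget places each hold a token; each firing consumes one token from $r$ and one from each $\budget{x_i}$, resets $\{x_1, x_2\}$, and puts one token in $f$. A flush transition $t_2$ (consuming from $r$, resetting all of $\PN$) guarantees that any reachable marking with a token still in $i$ or $r$ can be completed to $\{f \colon k\}$, so a $k$-unsoundness witness must have all $k$ tokens in $f$; taking $k$ minimal rules out $t_2$ occurring in the witness run, whence $t_1$ and $t_3$ occur equally often. Since each $t_1$ adds at most two tokens to $\{x_1, x_2, \budget{x_1}, \budget{x_2}\}$ and each $t_3$ removes at least two, the counts must balance exactly, which forces every reset in the run to fire on empty places --- i.e., forces faithfulness --- and a decomposition of the run then extracts $\qsrc(\vec{0}) \trans{*}_a \qtgt(\vec{0})$. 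In other words, ``full-budget detection'' is not a gadget at all but a conservation argument over entire runs; without it (or an equivalent), your converse direction does not go through, because in any implementable construction the discharge transition \emph{is} enabled after cheated runs, and you must instead show that all such runs still complete to $\{f \colon k\}$.
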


\begin{proof}
  We first sketch the reset workflow net $\W$, describe the
  construction, and show that it is indeed a reset workflow net.

  Let $\PN$ be the reset Petri net given by \Cref{prop:minsky:resetpetri}
  from $\M$. The reset workflow net $\W$ consists of $\PN$ together
  with the extra places $\{i, r, f\}$ and transitions $\{t_1, t_2,
  t_3\}$. \Cref{fig:worflow:reduction} depicts $\W$ where the solid
  red part corresponds to $\PN$. Places $i$ and $f$ are respectively
  the initial and final places of $\W$. When either of $t_1$, $t_2$ or
  $t_3$ is fired, all places from the corresponding colored area is
  reset. By \Cref{prop:minsky:resetpetri}, each node of $\W$ is on
  some path from $i$ to $f$.
    
  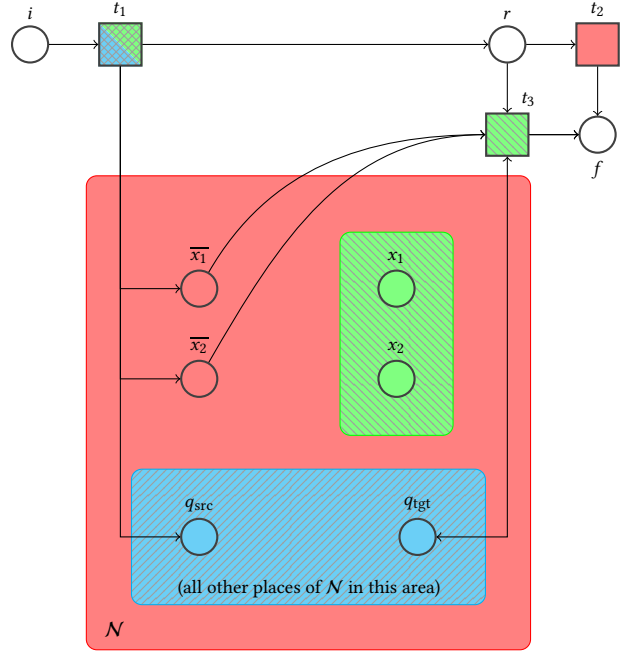
\begin{figure}[h]
    \centering
    \begin{tikzpicture}[
  auto, node distance=1.5cm, transform shape, scale=0.8]
  
  \tikzstyle{ctrans} = [transition, minimum size=20pt]

  \node[place, label=above:$i$]   (i)  {};

  \node[ctrans, label=above:$t_1$, right of=i,
    pattern=crosshatch, pattern color=gray!75] (t1) {};

  \begin{pgfonlayer}{background}
    \fill[colC!50]
    (t1.north west) -- (t1.north east) -- (t1.south east) -- cycle;

    \fill[colB!50]
    (t1.north west) -- (t1.south west) -- (t1.south east) -- cycle;
  \end{pgfonlayer}
  
  \node[place, label=above:$\budget{x_1}$, below right of=t1, xshift=0.25cm, yshift=-3cm] (x1) {};
  \node[place, label=above:$\budget{x_2}$, below of=x1] (x2) {};

  \node[place, label=above:$x_1$, right=2.65cm of x1,
    fill=colC!50] (x1o) {};
  \node[place, label=above:$x_2$, right=2.65cm of x2,
    fill=colC!50] (x2o) {};

  \node[place, label=above:$\qsrc$, below=2cm of x2, fill=colB!50] (qsrc) {};
  \node[place, label=above:$\qtgt$, right=3cm of qsrc, fill=colB!50] (qtgt) {};

  \node[place, label=above:$r$, right=5.75cm of t1] (r) {};

  \node[ctrans, label={[shift={(10pt, 0)}]$t_3$}, below of=r,
    pattern=north west lines, pattern color=gray!75,
    preaction={fill=colC!50}] (t3) {};

  \node[ctrans, label=above:$t_2$, right of=r, fill=colA!50] (t2) {};

  \node[place, label=below:$f$, right of=t3] (f) {};

  \node[below=8pt of qsrc, xshift=52pt] {(all other places of $\PN$ in this area)};

  \node[below=28pt of qsrc, xshift=-40pt] {\large$\PN$};

  \draw[->] (t1) |- (x1);
  \draw[->] (t1) |- (x2);
  \draw[->] (t1) |- (qsrc);

  \path[->]
  (i)  edge node {} (t1)
  (t1) edge node {} (r)

  (r)  edge node {} (t2)
  (t2) edge node {} (f)

  (r)  edge node {} (t3)
  (t3) edge node {} (f)

  (t3) edge node {} (f)

  (x1) edge[out=60, in=180] node {} (t3)
  (x2) edge[out=60, in=180] node {} (t3)
  ;

  \draw[<->] (t3) |- (qtgt);

  \begin{pgfonlayer}{background}
    \node[fit=(x1)(x1o)(x2)(x2o)(qsrc)(qtgt), 
      draw=colA, rounded corners, fill=colA!50, inner sep=1.25cm] {};

    \node[fit=(x1o)(x2o), pattern=north west lines, pattern color=gray!75,
      draw=colC, rounded corners, preaction={fill=colC!50},
      inner sep=0.5cm] {};

    \node[fit=(qsrc)(qtgt), pattern=north east lines, pattern color=gray!75,
      draw=colB, rounded corners, preaction={fill=colB!50},
      inner sep=0.65cm] {};
  \end{pgfonlayer}
\end{tikzpicture}    
    \caption{The reset workflow net $\W$. The bidirectional arc between
      $\qtgt$ and $t_3$ represents two arcs: one in each
      direction. Each transition $t_i$ resets the places within the
      part of the corresponding pattern and color. More precisely,
      $t_1$ resets all places of $\PN$ except for $\{\budget{x_1},
      \budget{x_2}\}$; $t_2$ resets all places of $\PN$; and $t_3$
      resets $\{x_1, x_2\}$.}
    \label{fig:worflow:reduction}
  \end{figure}

  
  Let us now show that there exists $k \in \N$ such that
  $\qsrc(\vec{0}) \trans{*}_k \qtgt(\vec{0})$ in $\M$ iff $\W$ is not
  generalised sound.

  \medskip\noindent $\Rightarrow$) Let $k \in \N$ be such that
  $\qsrc(\vec{0}) \trans{*}_k \qtgt(\vec{0})$ in $\M$. By
  \Cref{prop:minsky:resetpetri}, we have
  \[
  \{\qsrc \colon 1, \budget{x_1} \colon k, \budget{x_2} \colon k\}
  \trans{\pi}
  \{\qtgt \colon 1, \budget{x_1} \colon k, \budget{x_2} \colon k\}
  \text{ in } \PN.
  \]
  Thus, the following holds in $\W$:
  \begin{align*}
  \{i \colon k\}
  &\trans{t_1^k}
  \{\qsrc \colon 1, \budget{x_1} \colon k, \budget{x_2} \colon k, r
  \colon k\} \\
  &\trans{\pi}
  \{\qtgt \colon 1, \budget{x_1} \colon k, \budget{x_2} \colon k, r
  \colon k\} \\
  &\trans{t_3^k}
  \{\qtgt \colon 1, f \colon k\}.
  \end{align*}
  The latter marking witnesses $k$-unsoundness of $\W$. Indeed, by
  \Cref{prop:minsky:resetpetri}, the subset $Q$ of places of $\PN$,
  that corresponds to the control states of $\M$, cannot be emptied.

  \medskip\noindent $\Leftarrow$) Let $\{i \colon k\} \trans{\pi} \m$
  witness $k$-unsoundness of $\W$, where $k \in \Npos$. First, we show
  that $\m(f) = k$. Note that in any marking $\m'$ reachable from $\{i
  \colon k\}$, we have $\m'(i) + \m'(r) + \m'(f) = k$. Thus, $\m(f)
  \leq k$. For the sake of contradiction, suppose that $\m(f) <
  k$. Let $k_i \defeq \m(i)$ and $k_r \defeq \m(r)$. By the previous
  equality, we have $k_i + k_r > 0$. The following holds in $\W$:
  \[
  \m
  \trans{t_1^{k_i}\ t_2^{k_i + k_r}}
  \{f \colon k\}.
  \]
  Indeed, as $k_i + k_r > 0$, transition $t_2$ is fired at least once
  and hence $\PN$ is emptied (the solid red area of
  \Cref{fig:worflow:reduction}). This contradicts the assumption that
  $\m$ witnesses $k$-unsoundness.

  We have established that $\m(f) = k$, which implies that $\m(i) = 0$
  and $\m(r) = 0$. Without loss of generality we can assume that $k$
  is minimal i.e.\ $\W$ is $\ell$-sound for every $\ell < k$.

  \begin{claim}\label{claim:t2}
    Transition $t_2$ does not appear in $\pi$.
  \end{claim}

  For the sake of contradiction, suppose the claim does not hold. We
  split run $\pi$ into $\pi = \pi_1 \pi_2$ where $\pi_2$ is a maximal
  suffix that does not contain transition $t_2$. We have $\{i \colon
  k\} \trans{\pi_1} \m_1 \trans{\pi_2} \m$ for some marking $\m_1$.

  Let us compare the number of occurrences of $t_1$ and $t_3$ in
  $\pi_2$. As $t_2$ is the last transition of $\pi_1$, we have
  $\m_1(p) = 0$ for every place $p$ of $\PN$. By
  \Cref{prop:minsky:resetpetri}, the number of tokens in $\{x_i,
  \budget{x_i}\}$ cannot increase by firing transitions of
  $\PN$. Thus, we must have $|\pi_2|_{t_1} \geq |\pi_2|_{t_3}$. Since
  $\m(r) = 0$, we must have $|\pi_2|_{t_1} \leq
  |\pi_2|_{t_3}$. Consequently, $|\pi_2|_{t_1} =
  |\pi_2|_{t_3}$. Moreover, $|\pi_2|_{t_1} > 0$ as otherwise $\m = \{f
  \colon k\}$, which is obviously not a witness of $k$-unsoundness.

  From this and $\m(r) = 0$, we conclude that $\m_1(r) = 0$. This
  means that all places, except possibly $i$ and $f$, are empty in
  $\m_1$. Since $\m(f) = k$ and $|\pi_2|_{t_1} = |\pi_2|_{t_3} > 0$,
  the marking $\m_1$ is of the form $\m_1 = \{i \colon k - \ell, f
  \colon \ell\}$ where $0 < \ell < k$.

  As no transition consumes from $f$, we have $\{i \colon k - \ell\}
  \trans{\pi_2} \m'$ where $\m' \defeq \m - \{f \colon \ell\}$. Since
  $\m'(f) = \m(f) - \ell = k - \ell$, we conclude that $\W$ is not $(k
  - \ell)$-sound as $\m'$ is also a witness of unsoundness. This
  contradicts the minimality of $k$. So, \Cref{claim:t2} holds as
  desired.

  \begin{claim}\label{clm:3tokens}
    In run $\pi$,
    \begin{enumerate}
    \item Every occurrence of $t_1$ does not remove any token from
      places $\{x_1, x_2\}$;
      
    \item Every occurrence of $t_3$ consumes (exactly) two tokens from
      places $\{x_1, x_2, \budget{x_1}, \budget{x_2}\}$.
    \end{enumerate}
  \end{claim}
  
  From \Cref{claim:t2}, we have $|\pi|_{t_1} =
  |\pi|_{t_3}$. Moreover, each occurrence of transition $t_3$ removes
  at least two tokens from $\{x_1, x_2, \budget{x_1}, \budget{x_2}\}$,
  and each occurrence of transition $t_1$ adds at most two tokens to
  $\{x_1, x_2, \budget{x_1}, \budget{x_2}\}$. Hence,
  \Cref{clm:3tokens} holds as desired.

  \medskip
  
  Now, let us split $\pi$ into $\pi = \pi_3 \pi_4 \pi_5$ where
  $\pi_3\pi_4$ is the longest prefix of $\pi$ without $t_3$, and
  $\pi_4$ is a maximal suffix of $\pi_3 \pi_4$ that does not contain
  transition $t_1$. Let $\m_3$ and $\m_4$ be the markings such that
  $\{i \colon k\} \trans{\pi_3} \m_3 \trans{\pi_4} \m_4$. We have
  $\m_4(x_1) = \m_4(x_2) = 0$, as otherwise the first occurrence of
  $t_3$ removes at least three tokens from places $\{x_1, x_2,
  \budget{x_1}, \budget{x_2}\}$, which contradicts
  \Cref{clm:3tokens}. A similar argument shows that $\m_3(x_1) =
  \m_3(x_2) = 0$.

  Since the last transition of $\pi_3$ is $t_1$, since $\pi_3$
  contains no occurrence of $\{t_2, t_3\}$, since $\pi_4$ contains no
  occurrence of $\{t_1, t_2, t_3\}$, and since the first transition of
  $\pi_5$ is $t_3$, there exists $a > 0$ such that
  \begin{itemize}
  \item $\m_3 = \{i \colon k - a, r \colon a, \qsrc \colon 1,
    \budget{x_1} \colon a, \budget{x_2} \colon a\}$, and
    
  \item $\m_4 = \{i \colon k - a, r \colon a, \qtgt \colon 1,
    \budget{x_1} \colon a, \budget{x_2} \colon a\}$.
  \end{itemize} 
  This implies that $\m_3 \trans{\pi_4} \m_4$ induces a run
  $\qsrc(\vec{0}) \trans{*}_a \qtgt(\vec{0})$ of Minsky machine $\M$.
\end{proof}

\section{In between soundness}
\label{sec:between}
We start this section by discussing properties that reset workflow
nets can satisfy. So far, we mostly discussed generalised soundness
and $1$-soundness, or more generally, $k$-soundness for any $k \in
\Npos$.
We say that a reset workflow net is \emph{up-to-$k$-sound}~\cite[Definition~24]{phdthesisToorn} if it is $j$-sound for all $j \in [1..k]$. Observe that up-to-$1$-soundness is equivalent to $1$-soundness.
By definition, a reset workflow net which is generalised
soundness is also up-to-$k$ sound (for any $k \in
\Npos$).

\begin{figure}
  \begin{tikzpicture}[scale=0.85, transform shape]
  \draw[fill = gray!90] (0,0) rectangle (8,8);
  
  \node at (1.85,7.5) {\bf All reset workflow nets};
  
  \draw[fill = gray!60] (0.5,0) parabola bend (4,7) (7.5,0);
  \draw[pattern = horizontal lines light gray]  (1,0) parabola bend (4,5) (7,0);
  \draw[fill = white] (2.3,0) parabola bend (4,2.5) (5.7,0);
  
  \draw[] (2.3,0) parabola bend (4,2.5) (5.7,0);
  
  \draw[] (1,0) parabola bend (4,5) (7,0);
  
  \draw[] (0.5,0) parabola bend (4,7) (7.5,0);
  
  \node[align=center] at (4,0.5) {\bf generalised sound \\ \bf reset workflow nets};
  \node[align=center] at (4,3.3) {\bf $k$-in-between sound \\ \bf reset workflow nets};
  \node[align=center] at (4,5.75) {\bf up-to-$k$-sound \\ \bf reset workflow nets};
\end{tikzpicture}
  \caption{Classes of reset workflow nets: generalised sound,
    up-to-$k$-sound, $k$-in-between sound and all reset workflow
    nets. Properties with lighter colors also satisfy darker colored
    properties. For example, the class of generalised sound reset
    workflow nets is the most restrictive and it is contained in all
    other classes. Note that ``$k$-in-between sound'' is a class of
    properties, not a single one. So, in the figure, one should
    think of each horizontal line as one of
    these properties.}\label{fig:classes}
\end{figure}
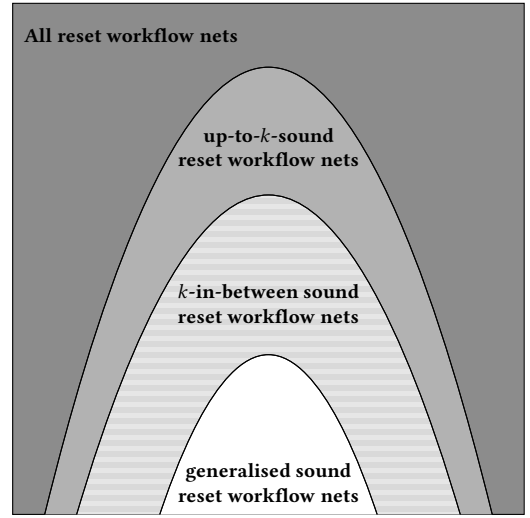

We say that a property $\mathcal{P}$, of reset workflow nets, is \emph{$k$-in-between sound} if: all generalised sound workflow nets satisfy $\mathcal{P}$; and every workflow net that satisfies $\mathcal{P}$ is up-to-$k$-sound. \Cref{fig:classes} depicts all of the aforementioned classes.


Our first main result is as follows.

\begin{theorem}\label{theorem:between}
For every $k \in \Npos$, there exists a decidable $k$-in-between sound property $\mathcal{P}_k$ of reset workflow nets. More precisely: given as input $k \in \Npos$ and a reset workflow net, there is an algorithm deciding $\mathcal{P}_k$ running in Ackermannian time.
\end{theorem}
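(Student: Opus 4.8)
The goal is to exhibit, for each $k$, a decidable property $\mathcal{P}_k$ that is sandwiched between generalised soundness and up-to-$k$-soundness. The defining constraints are: every generalised sound net must satisfy $\mathcal{P}_k$, and every net satisfying $\mathcal{P}_k$ must be up-to-$k$-sound. The natural strategy is to start from up-to-$k$-soundness itself and tighten it just enough to make it decidable while keeping all generalised sound nets inside. Up-to-$k$-soundness asks that, for every $j \in [1..k]$ and every reachable marking $\m$ from $\{i \colon j\}$, we have $\m \trans{*} \{f \colon j\}$. The obstruction to deciding this directly is that the set of reachable markings may be infinite, and reachability (not just coverability) of an exact target is the hard ingredient. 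So the plan is to replace the reachability-to-$\{f\colon j\}$ test with a coverability-style test that is computable via the backward coverability algorithm, and to replace the quantification over all reachable markings with a quantification over a computable overapproximation.

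Concretely, I would proceed as follows. First, for each $j \in [1..k]$, characterise the ``good'' markings as those from which $\{f \colon j\}$ can still be reached, and use \Cref{lem:boundOnCoverability} together with the Ackermannian-time backward coverability algorithm to compute a finite representation of the relevant upward/downward closed sets. The key observation I would aim to exploit is that in a reset workflow net the target $\{f \colon j\}$ is special: since $\post{f} = \emptyset$ and no transition resets $f$, once tokens are in $f$ they stay, so reaching $\{f \colon j\}$ exactly is equivalent to covering a carefully chosen marking combined with a check that no tokens are stranded elsewhere. This lets me phrase the per-$j$ condition as a coverability question, which is decidable in Ackermannian time. Second, I would define $\mathcal{P}_k$ as the conjunction over $j \in [1..k]$ of a decidable approximation of ``all reachable markings are good.'' The approximation must be sound in both directions required by the sandwich: it must be implied by generalised soundness and must itself imply up-to-$k$-soundness.

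The main obstacle, and where the real content lies, is making the overapproximation of the reachable set both computable and tight enough. One cannot simply enumerate reachable markings. The plan is to use a downward-closed or upward-closed invariant: compute the set of markings from which $\{f\colon j\}$ is coverable (its minimal elements are computable), then test whether every marking reachable from $\{i\colon j\}$ lies in this set. Checking an inclusion of the reachability set into an upward-closed set is itself a coverability question (one asks whether the complement, a downward-closed set, is reachable), which is decidable. The delicate point is bridging the gap between ``coverable'' and ``exactly reachable'': I expect the argument to require showing that for the specific target $\{f\colon j\}$ in a reset workflow net, coverability of the right auxiliary marking is equivalent to exact reachability, using the monotone behaviour of $f$ and the fact that resets can only decrease token counts elsewhere. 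Establishing this equivalence — so that the decidable coverability test faithfully captures $j$-soundness on the overapproximated reachable set, while remaining loose enough that generalised sound nets always pass — is the crux; everything else (the finite conjunction over $j$, closure under the Ackermannian manipulations from the preliminaries) is routine bookkeeping that preserves the Ackermannian time bound.
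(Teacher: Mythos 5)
Your plan stays entirely inside the reset net and tries to assemble $\mathcal{P}_k$ from coverability technology alone; the two load-bearing claims this rests on are both false for reset nets. \emph{(a)} Testing whether every marking reachable from $\{i \colon j\}$ lies in the upward-closed set ``can cover $\{f \colon j\}$'' is \emph{not} a coverability question: it asks whether the downward-closed complement is reachable, and coverability is reachability of \emph{upward}-closed sets. For reset Petri nets, this query --- ``can we reach a marking from which $f$ can no longer be marked sufficiently?'' --- is precisely the property the paper calls \emph{(iii)} in its intuition for $\mathcal{P}_k$, and which it identifies as the very reason $1$-soundness is undecidable for reset workflow nets~\cite{vvtSVVW08}; no backward-coverability argument applies to it. \emph{(b)} Even granting decidability, ``every reachable marking covers $\{f \colon j\}$'' does not imply $j$-soundness, so the upper half of your sandwich fails. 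Concretely, take the reset-free workflow net with arcs $(i,t_1)$, $(t_1,f)$, $(t_1,p)$, $(p,t_2)$, $(t_2,f)$: its reachable markings from $\{i \colon 1\}$ are exactly $\{i \colon 1\}$, $\{p \colon 1, f \colon 1\}$ and $\{f \colon 2\}$, each of which covers $\{f \colon 1\}$, yet $\{f \colon 1\}$ is never reached, so the net is not $1$-sound. Your proposed repair --- upgrading ``coverable'' to ``exactly reachable'' via monotonicity of $f$ plus a check that ``no tokens are stranded elsewhere'' --- is exactly the undecidable ingredient: if such a bridge could be implemented by decidable means, your $\mathcal{P}_1$ would coincide with $1$-soundness, contradicting its known undecidability. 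With this approach you can have decidability, or the implication to up-to-$k$-soundness, but not both.

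The paper escapes this trap by leaving the reset net. The reset net itself is only interrogated through genuinely decidable queries: non-resetability of $i$ and $f$, existence of a \emph{full reset run} (found via backward coverability, \Cref{lemma:fullrun_ackermann}), and the single per-$k$ test, the \emph{coverability-clean} condition that $\{i \colon k\}$ cannot strictly cover $\{f \colon k\}$ --- this last check matches your intuition and is what rejects the counterexample above. All reachability-flavoured quantification over reachable markings is instead performed on the \emph{skeleton} $\W^s$, the Petri net \emph{without resets} obtained by deleting redundant places, redundant transitions and all resetable places. The paper proves (\Cref{proposition:skeleton}) that generalised soundness (or even up-to-$K$ soundness for an Ackermannian $K$) forces $\W^s$ to be a generalised-sound workflow net --- checkable in PSPACE since $\W^s$ has no resets --- and phrases its condition (5) as a reachability query in a reset-free Petri net, decidable in Ackermannian time. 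The glue making skeleton-level checks meaningful for $\W$ is the transfer machinery (\Cref{lemma:reset_marking,lemma:reset_marking2,lem:OnlyMReachable}), which converts skeleton runs into runs of $\W$ padded with extra tokens; this machinery, where most of the actual work lies, has no counterpart in your sketch.
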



\begin{remark}
In particular, given a reset workflow net $\W$ and $k \in \Npos$, there is an algorithm that correctly outputs: either that $\W$ is up-to-$k$-sound; or that $\W$ is not
generalised sound.
More precisely, if $\W$ is not up-to-$k$-sound, then it outputs that $\W$ is not generalised sound; if $\W$ is generalised sound, then it outputs that it is up-to-$k$-sound; otherwise, it can output either of the two properties (both hold).
\end{remark}

We deliberately postpone the formal definition of property
$\mathcal{P}_k$ as it is technical. Instead, we give intuition on
where it comes from. First observe that soundness is a conjunction of
three simpler informal properties: {\it (i)}~It is impossible to
strictly cover the final marking; {\it (ii)}~It is impossible to reach
markings with tokens only in the final place, but with insufficiently
many of them; and {\it (iii)}~It is impossible to reach a marking that
has tokens in places other than $\final$ and from which it is
impossible to produce more tokens in $\final$.

Now, on the one hand, to decide $1$-soundness for~{\it (i)} we can check if it is
impossible to strictly cover $\{\final \colon 1\}$ as coverability is
decidable. Property~{\it (ii)} holds as the workflow net cannot be
emptied, indeed, firing any transition always produces tokens. So, the
reason for undecidability of $1$-soundness is the hardness of
property~{\it (iii)}. On the other hand, for generalised soundness, it
can be proven that property~{\it (ii)} is decidable, and, assuming
that~{\it (i)} holds {\it (iii)}~is decidable, so the essential reason
of the undecidability of the generalised soundness is~{\it (i)}. The
idea behind $\mathcal{P}_k$ is to exploit the decidable parts, which intuitively cover all three parts of soundness.
A bit more precisely, we combine the test for~{\it (ii)} for
generalised soundness, the test for~{\it (i)} for $k$-soundness,
and~{\it (iii)} for generalised soundness assuming that~{\it (i)}
holds. This mixture of decidable properties gives rise to a relation
that is $k$-in-between.

Surprisingly, the property $\mathcal{P}_k$ defined in this way for
sufficiently large $k$ coincides with to up-to-$k$ soundness. More
formally:


\begin{theorem}\label{theorem:upto}
  Given a nonredundant reset workflow net $\W$, there is a computable number $k'
  \in \Npos$ (Ackermannian in the size of $\W$) such that, for all $k
  \ge k'$, $\W$ satisfies $\mathcal{P}_k$ iff $\W$ is
  up-to-$k$ sound.
\end{theorem}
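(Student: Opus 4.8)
The plan is to prove the two implications separately, exploiting the fact that one of them is already free. Since $\mathcal{P}_k$ is a $k$-in-between sound property (\Cref{theorem:between}), every reset workflow net satisfying $\mathcal{P}_k$ is automatically up-to-$k$ sound, for \emph{every} $k \in \Npos$ and with no nonredundancy hypothesis. Hence the whole content of the statement lies in the converse direction, up-to-$k$ sound $\Rightarrow \mathcal{P}_k$, and it is precisely here that the Ackermannian threshold $k'$ and nonredundancy are needed. I would establish this direction by contraposition: assuming $\W$ fails $\mathcal{P}_k$ for some $k \ge k'$, I produce a witness of $j$-unsoundness with $j \le k$, which contradicts up-to-$k$ soundness.

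Recall that $\mathcal{P}_k$ is a conjunction of three tests approximating the defects~{\it (i)}--{\it (iii)}. The first test is tailored to $k$-soundness (no strict covering of the $\final$-component of $\{\final \colon k\}$), so up-to-$k$ soundness, which already entails $k$-soundness, implies it outright, with no need for a bound. The subtlety is that the tests for~{\it (ii)} and~{\it (iii)} are phrased in the generalised-soundness style, quantifying implicitly over all scaling factors $j$, whereas up-to-$k$ soundness only controls $j \le k$. The crux is therefore to show that whenever one of these generalised-style tests fails, it \emph{already} fails at some Ackermannianly bounded scale $j$. I would extract this bound from \Cref{lem:boundOnCoverability} together with the Ackermannian-time backward coverability algorithm: a failure of the~{\it (ii)}-test is certified by a run reaching a deficient all-$\final$ marking, and a failure of the~{\it (iii)}-test (under the assumption that~{\it (i)} holds) is certified by a reachable marking carrying tokens outside $\final$ from which no further token can ever enter $\final$; in both cases the certifying run can be taken of Ackermannian length, bounding the scale $j$ at which the defect first appears. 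Setting $k'$ to be the maximum of these Ackermannian thresholds, for any $k \ge k'$ such a witness has $j \le k' \le k$, so it already witnesses $j$-unsoundness for $j$ inside the up-to-$k$ range, contradicting up-to-$k$ soundness.

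Nonredundancy enters exactly at the point of turning an abstract failure of a conjunct into a genuine unsoundness witness: it guarantees that the offending places and transitions are actually exercised on some path from $i$ to $f$, so that a marking blocking the production of $\final$-tokens really obstructs reaching $\{\final \colon j\}$ and does not arise from a spuriously removable subnet. I would use it to make the correspondence between the certificates of $\neg\mathcal{P}_k$ and honest witnesses of $j$-unsoundness exact, both for the~{\it (ii)}- and the~{\it (iii)}-conjunct.

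The main obstacle I anticipate is the uniform boundedness step, that is, collapsing the ``for all $j$'' quantification hidden in the generalised tests down to a single Ackermannianly bounded level. The technically hardest case should be the~{\it (iii)}-test: deciding that no additional $\final$-token can ever be produced is itself a coverability-flavoured question conditioned on~{\it (i)}, and one must simultaneously bound the scale $j$ and the length of the blocking run while preserving the nonredundancy-based correspondence with a true $j$-unsoundness witness. The Ackermannian size of $k'$ is then simply the footprint of \Cref{lem:boundOnCoverability} and of the backward coverability algorithm propagating through these reductions.
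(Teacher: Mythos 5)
Your high-level architecture does match the paper's: the direction $\mathcal{P}_k \Rightarrow$ up-to-$k$ soundness is indeed free from \Cref{theorem:between}; the coverability-clean test (your~{\it (i)}) is the only $k$-dependent conjunct and is implied outright by $k$-soundness; and the entire content is showing that the remaining, $k$-independent conjuncts can only fail at an Ackermannianly bounded scale. The gap is in how you discharge that last step. You assert that a failure of the {\it (ii)}- or {\it (iii)}-test ``can be certified by a run of Ackermannian length'' in $\W$ itself, citing \Cref{lem:boundOnCoverability} and backward coverability. But those certificates are not coverability certificates: for~{\it (ii)} the witness is a run of the \emph{reset} net reaching an exact marking $\{\final \colon \ell\}$ with $\ell$ too small, and for~{\it (iii)} it is a run reaching the set $X^{\final}$ of ``dead'' markings, which is built from a \emph{downward}-closed set. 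Coverability bounds apply only to upward-closed targets. For reset nets, reachability-type questions into such sets admit no computable run-length bound at all --- reaching the zero marking is already undecidable, and the paper's own discussion pinpoints test~{\it (iii)} as precisely the reason $1$-soundness of reset workflow nets is undecidable. If Ackermannian-length certificates for {\it (iii)}-failures in $\W$ followed from coverability tools, $1$-soundness would become decidable. So your key sentence is not merely unproven; the tools you invoke cannot prove it.

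This is exactly why the paper never bounds runs of $\W$ directly. In \Cref{def:Pk}, the {\it (ii)}/{\it (iii)}-style conjuncts (properties~(3) and~(5)) are phrased over the skeleton $\W^s$, which by \Cref{proposition:skeleton} is a workflow net \emph{without} resets once up-to-$K$ soundness is established; there, Ackermannian shortest-run bounds for reachability-type queries are available. A short skeleton counterexample is then lifted back to $\W$ by \Cref{lem:OnlyMReachable} --- which requires the full reset run, i.e.\ conjunct~(2), to have been derived first --- and converted into $j$-unsoundness for an Ackermannianly bounded $j$ via \Cref{claim:X}. This skeleton-plus-lifting mechanism, together with the resulting order in which the conjuncts must be extracted from up-to-$k'$ soundness (first~(1)--(4), and only then~(5)), is the missing idea in your proposal. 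Nonredundancy enters through this construction (the skeleton is formed by deleting redundant and resetable nodes), not merely to ensure that offending places are ``exercised on some path from $i$ to $f$.''
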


The rest of the section is organized as follows. First, in
\Cref{ssec:redundancy}, we introduce the notion of ``nonredundancy'',
which allows us to identify unimportant places and transitions. Next,
in \Cref{ssec:skeleton}, we define the ``skeleton'' of a reset
workflow net, which is a crucial object in deciding property~${\it
  (ii)}$ for generalised soundness. In \Cref{sec:skeleton1}, we show
that the skeleton of a nonredundant reset workflow net is a workflow
net (without resets). In \Cref{sec:skeleton2}, we show that if a
nonredundant reset workflow net is generalised sound, then its
skeleton is also generalised sound. This last property implies
property~${\it (ii)}$ for generalised soundness. Finally,
\Cref{ssec:thm:proofs} makes use of these results to prove
\Cref{theorem:between,theorem:upto}; Here, in particular, we show how
property~${\it (iii)}$ can be tested assuming that~${\it (i)}$ holds,
although it is not given explicitly to keep the argument shorter.

\subsection{Nonredundancy}\label{ssec:redundancy}

We provide a technical definition of redundancy, following similar
definitions for workflow nets (without
resets)~\cite{HeeSV04}. Intuitively, it allows to ignore useless
places and transitions from the net without changing its set of
reachable markings.

Formally, given a reset workflow net $\W$, we say that
\begin{enumerate}
  \item a place $p$ is \emph{nonredundant} if there exist $k \in \N$
    and a marking $\m$ such that $\{\initial \colon k\} \trans{*} \m$
    and $\m(p) \geq 1$;\label{itm:redund:a}
    
  \item a transition $t$ is \emph{nonredundant} if there exist $k \in
    \N$ and a marking $\m$ such that $\{\initial \colon k\} \trans{*}
    \m$ and $t$ is enabled in $\m$.\label{itm:redund:b}
\end{enumerate}
If \Cref{itm:redund:a} does not hold for a place $p$, then we say that
$p$ is \emph{redundant}, and likewise for a transition
$t$.

\begin{proposition}\label{proposition:redundancy}
  Given a reset workflow net $\W$, one can compute its set of
  redundant places and transitions. The procedure works in
  Ackermannian time. Further, for every nonredundant transition $t$
  and place $p$, there are numbers $k_t, k_p \in \N$ bounded
  Ackermannianly and runs of length at most Ackermannian such that
  $\{\initial \colon k_t\} \trans{*} \upclosure{(\pre{t})}$ and
  $\{\initial \colon k_p\} \trans{*} \upclosure{\{p \colon 1\}}$.
\end{proposition}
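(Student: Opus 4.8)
The plan is to recognize nonredundancy as an instance of coverability and then to read off the desired witnesses from the computed coverability sets. Recall from the preliminaries that, for a target marking $\m'$, the set $C_{\m'}$ of markings from which $\m'$ can be covered is upward closed, and its finitely many minimal elements can be computed in Ackermannian time by the backward coverability algorithm. Unfolding the definitions, a place $p$ is nonredundant exactly when there is some $k \in \N$ with $\{\initial \colon k\} \in C_{\{p \colon 1\}}$, and a transition $t$ is nonredundant exactly when there is some $k \in \N$ with $\{\initial \colon k\} \in C_{\pre{t}}$, since $t$ is enabled in $\m$ iff $\m \geq \pre{t}$, i.e.\ iff $\m$ covers $\pre{t}$.

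First I would argue that this existential quantification over $k$ collapses to an inspection of minimal elements. The only starting markings of interest are those of the form $\{\initial \colon k\}$, i.e.\ supported solely on $\initial$. If $\{\initial \colon k\} \in C_{\{p \colon 1\}}$, then $\{\initial \colon k\}$ dominates some minimal element $\m_0$ of $C_{\{p \colon 1\}}$; since $\{\initial \colon k\}(q) = 0$ for every $q \neq \initial$, this forces $\m_0(q) = 0$ for all $q \neq \initial$, so $\m_0 = \{\initial \colon j\}$ with $j \leq k$. Conversely, any minimal element of this form witnesses nonredundancy via $k \defeq j$. Hence $p$ is nonredundant iff some minimal element of $C_{\{p \colon 1\}}$ is supported only on $\initial$, and likewise $t$ is nonredundant iff some minimal element of $C_{\pre{t}}$ is supported only on $\initial$. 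Running the backward coverability algorithm once for each of the $|P|$ targets $\{p \colon 1\}$ and each of the $|T|$ targets $\pre{t}$, and testing this support condition, computes the set of redundant places and transitions. As $|P| + |T|$ is polynomial in the input and each invocation runs in Ackermannian time, the whole procedure runs in Ackermannian time.

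It then remains to extract the bounded witnesses. For a nonredundant place $p$, let $\m_0 = \{\initial \colon j\}$ be a minimal element of $C_{\{p \colon 1\}}$ supported only on $\initial$, and set $k_p \defeq j$; define $k_t$ analogously from $C_{\pre{t}}$. Because the minimal elements are produced by an Ackermannian-time computation, their entries---in particular $k_p$ and $k_t$---are Ackermannianly bounded. By construction $\{\initial \colon k_p\} \in C_{\{p \colon 1\}}$, that is $\{\initial \colon k_p\} \trans{*} \upclosure{\{p \colon 1\}}$, and similarly $\{\initial \colon k_t\} \trans{*} \upclosure{(\pre{t})}$. Finally, \Cref{lem:boundOnCoverability} guarantees that each such covering is realized by a run whose length is Ackermannian in the size of the net and of the target; since the targets $\{p \colon 1\}$ and $\pre{t}$ have size bounded by that of the net, these run lengths are Ackermannianly bounded as claimed.

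The main obstacle is less a deep difficulty than a matter of bookkeeping: one must ensure that the Ackermannian guarantees survive the chaining of the two external results. Concretely, the values $k_p, k_t$ come from the backward coverability computation, whereas the run lengths come from \Cref{lem:boundOnCoverability}, whose bound is phrased in terms of the net and the target rather than the starting marking---so it is important that the targets stay small (bounded by the net) even though the starting marking $\{\initial \colon k_p\}$ may be large. The closure rules for Ackermannian bounds recalled in the preliminaries then absorb the polynomially many invocations and the constant number of compositions.
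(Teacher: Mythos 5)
Your proposal is correct and follows essentially the same route as the paper's proof: both reduce nonredundancy to coverability queries $\{\initial \colon k\} \in C_{\{p \colon 1\}}$ and $\{\initial \colon k\} \in C_{\pre{t}}$, solved via the backward coverability algorithm, with Ackermannian bounds on $k_p, k_t$ and on run lengths coming from the cited coverability results. In fact, your observation that the existential over $k$ collapses to checking whether some minimal element of the coverability set is supported only on $\initial$ makes explicit a step the paper's proof leaves implicit.
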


\begin{proof}
  Let us show how to compute the set of nonredundant places and
  transitions.  Recall that $C_{\m'}$ is the upward closed set of all
  markings $\m$ from which it is possible to cover the marking $\m'$;
  one may compute the set of minimal elements of $C_{\m'}$ with the
  backward coverability algorithm. As mentioned in the preliminaries,
  every upward closed subset of $\N^P$ is equal to a finite union of
  elements $\upclosure{\vec{x}_i}$ with $\vec{x}_i \in \N^P$.

  From the definition of nonredundancy, a place $p$ is nonredundant
  iff there exists some $k_p$ such that $\{\initial \colon k_p\}$ is
  in $C_{\{p \colon 1\}}$; this is decidable, using the backwards
  coverability algorithm~\cite{LazicS21}, and one may compute such a
  $k_p$. Similarly, a transition $t$ is nonredundant iff there exists
  $k_t$ such that $\{\initial \colon k_t\}$ is in $C_{\pre{t}}$, and
  this is also decidable and $k_t$ is still computable. Now, let $K$
  be the sum of all $k_p$ and $k_t$ for $p\in P$ and $t\in T$. Observe
  that, for all markings $\m \in \bigcup_{p\in P, t\in T} \{\{p \colon
  1\} + \pre{t}\}$, there is a run, from the initial marking
  $\{\initial \colon K\}$, that covers $\m$. The Ackermannian bounds
  follow from the bounds on the coverability
  problem~\cite{FigueiraFSS11}.
\end{proof}

We make the following observation on nonredundant transitions.

\begin{claim}\label{claim:noreset_if}
  Let $\W$ be a reset workflow net which is generalised
    sound. Nonredundant transitions cannot reset either $i$ or $f$.
  Moreover, this claim still holds if we relax the requirement of $\W$
  being generalised sound to a weaker one, namely to ``$\W$ is
  up-to-$K$ sound'', where number $K$ is at most Ackermannian.
\end{claim}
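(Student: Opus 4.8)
The plan is to prove two things: that a nonredundant transition cannot reset $i$, and that it cannot reset $f$. The claim for $f$ is essentially definitional, so the real content is about $i$. For the final part, I would verify that the argument uses only up-to-$K$ soundness for a suitable Ackermannian $K$, rather than full generalised soundness.

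First I would dispose of $f$. By the definition of a reset workflow net, we have the standing requirement that $(f,t)\notin R$ for all $t\in T$, so \emph{no} transition resets $f$, redundant or not. Thus this half of the claim holds trivially and does not even require generalised soundness.

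Next I would handle $i$, which is the crux. Suppose for contradiction that some nonredundant transition $t$ resets $i$, i.e.\ $(i,t)\in R$. By \Cref{proposition:redundancy}, since $t$ is nonredundant, there is an Ackermannianly bounded $k_t$ and a run $\{i \colon k_t\} \trans{*} \m$ with $t$ enabled in $\m$. The key observation is that firing $t$ empties $i$, and since $\pre{i}=\emptyset$, place $i$ can never regain tokens afterwards. The plan is to fire $t$ from $\m$ to reach some marking $\m'$ with $\m'(i)=0$, and then to drive the system toward completion. I would argue that from $\m'$ one can reach a marking that is a witness of unsoundness for some value $\ell$: because firing $t$ has destroyed tokens that were needed to eventually assemble $\ell$ tokens in $f$ (the total token ``budget'' in $i$ is irrecoverably lost), completion to $\{f \colon \ell\}$ becomes impossible for the corresponding $\ell$. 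This would contradict up-to-$K$ soundness once $K$ is chosen at least as large as the relevant $\ell$. The main obstacle I anticipate is bookkeeping which $\ell$ is actually witnessed as unsound and confirming it falls within $[1..K]$: one must track how many tokens are lost when $i$ is reset and relate that to the initial count, ensuring the witness lies below the Ackermannian threshold $K$ so that up-to-$K$ soundness (not merely generalised soundness) is enough to derive the contradiction.

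Finally, for the relaxation to up-to-$K$ soundness, I would set $K$ to be the maximum over all transitions $t$ of the bounds $k_t$ supplied by \Cref{proposition:redundancy}, together with whatever additional slack the witness construction requires; since \Cref{proposition:redundancy} guarantees these are Ackermannianly bounded and we take a maximum of finitely many such quantities, $K$ remains Ackermannian by the closure rules stated in the preliminaries. The contradiction derived above only ever invokes soundness at values $\ell \le K$, so up-to-$K$ soundness suffices in place of generalised soundness, which establishes the strengthened form of the claim.
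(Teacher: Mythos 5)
There is a genuine gap in your treatment of the initial place $i$, and it is precisely where the paper's key idea lives. Your plan is: take the run $\{i \colon k_t\} \trans{*} \m$ enabling $t$ (from \Cref{proposition:redundancy}), fire $t$, and argue that the tokens wiped out of $i$ make completion to $\{f \colon \ell\}$ impossible. Two things go wrong. First, at the single level $k_t$ the reset may destroy \emph{nothing}: in every witnessing run, place $i$ may already be empty (up to $\pre{t}(i)$) by the time $t$ fires, in which case firing $t$ is harmless and no unsoundness witness arises at all. Second, and more fundamentally, even when tokens \emph{are} destroyed, the inference ``lost budget $\Rightarrow$ completion impossible'' is a non sequitur: workflow nets do not conserve token counts (transitions may produce more tokens than they consume --- see, e.g., transition $s$ in \Cref{fig:example}), so losing tokens from $i$ does not by itself prevent reaching $\{f \colon \ell\}$. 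Your proposal flags ``bookkeeping which $\ell$ is witnessed'' as the obstacle, but the obstacle is not bookkeeping; it is that the argument has no valid mechanism for producing a contradiction.

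The paper's proof supplies that mechanism by comparing \emph{two} initial token counts. It takes the witnessing run $\{i \colon k\} \trans{\pi} \m' \trans{t} \m$ and replays the same $\pi\, t$ from the larger marking $\{i \colon K\}$, $K > k$. Since $\pre{i} = \emptyset$, the surplus $K - k$ tokens just sit in $i$ until $t$ fires, and the reset of $i$ then erases exactly the difference: both runs land in the \emph{same} marking $\m$. Now $k$-soundness gives $\m \trans{*} \{f \colon k\}$, and $\{f \colon k\}$ is a dead marking (no transition consumes from $f$, and every transition has a nonempty preset), so it cannot reach $\{f \colon K\}$; since $\m$ is also reachable from $\{i \colon K\}$, this contradicts $K$-soundness. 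Note that the contradiction never invokes any token-conservation property --- it only uses that two different soundness levels funnel into one marking. Your Ackermannian accounting for $K$ (maximum of the $k_t$ plus slack) is fine as far as it goes, but without the two-level collapse argument there is no contradiction for it to feed into.
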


\begin{claimproof}
  For the final place $f$, the claim follows by definition. Let us
  consider the case of the initial place $i$. Towards a contradiction,
  suppose that $\W$ has a nonredundant transition $t$ such that
  $t$ resets place $i$. By \Cref{proposition:redundancy}, there exists
  a number $k \in \Npos$ which is at most Ackermannian such that
  \[
  \{\initial \colon k\} \trans{\pi } \m' \trans{t} \m,
  \]
  for some run $\pi$ that does not use transition $t$, and some
  markings $\m'$ and $\m$. As $k$ is Ackermannian, we can safely
  assume that $k < K$.  Thus, the following holds for $K > k$:
  \[
  \{\initial \colon K\} \trans{\pi} \m' + \{\initial \colon x\} \trans{t} \m,
  \]
  for $x \ge K-k > 0$.
  Note that $t$ resets place $\initial$, and that $k$ is defined in such a way that at least one token is lost. By generalised
  soundness, or up-to-$K$ soundness of $\W$, we have $\m \trans{*} \{f
  \colon k\}$ and $\m \trans{*} \{f \colon K\}$. This contradicts
  generalised soundness of $\W$ and up-to-$K$ soundness.
\end{claimproof}

\subsection{Skeletons of reset workflow nets}\label{ssec:skeleton}

In this subsection, we consider a relaxation of generalised
soundness. Given a reset workflow net $\W$, we will define a
  workflow net $\W^s$ obtained by removing redundancy and resetable
  places. Intuitively, generalised soundness of $\W$ should imply
generalised soundness of $\W^s$. As we shall see, this requires some
work. We start by introducing the notation.

Let $\W = (P, T, F, R, i, f)$ be a reset workflow net. We say that
place $p \in P$ is \emph{resetable} if there exists a
nonredundant transition $t \in T$ such that $(p, t) \in
R$. The \emph{skeleton} of a reset workflow net $\W$ is the Petri
  net obtained by removing redundant places, redundant transitions and
  resetable places, and next removing isolated transitions (as
defined in \Cref{ssec:subnets} of the preliminaries).

We denote this Petri net by $\W^s = (P^s, T^s, F^s)$. If there are well-defined initial and final places $i^s, f^s$ such that $(P^s, T^s, F^s, i^s, f^s)$ is a workflow net, then, slightly abusing the notation, we write that $\W^s = (P^s, T^s, F^s, i^s, f^s)$ is a workflow net.
We will devote the forthcoming \Cref{sec:skeleton1,sec:skeleton2} to proving the following.

\begin{proposition}\label{proposition:skeleton}
Let $\W$ be a reset workflow net which is generalised sound. It is the case that
\begin{enumerate}
 \item\label{skeleton1} the skeleton $\W^s$ is a workflow net;
 \item\label{skeleton2} $\W^s$ is also generalised sound.
\end{enumerate}
Moreover, this claim still holds if we relax the requirement of $\W$
being generalised sound to a weaker one, namely to ``$\W$ is up-to-$K$
sound'', where number $K$ is at most Ackermannian.
\end{proposition}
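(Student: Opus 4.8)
The plan is to isolate the easy structural facts from the two genuinely hard points---connectivity of $\W^s$ for item~\ref{skeleton1}, and preservation of soundness for item~\ref{skeleton2}---and to base everything on one observation: along any run fireable from some $\{i \colon k\}$ only nonredundant transitions fire, and by the definition of resetability together with \Cref{claim:noreset_if} no nonredundant transition resets a surviving place (a nonredundant, non-resetable place). Hence resets never touch the surviving places on reachable runs, so the restriction of such a run to $P^s$ evolves exactly as a reset-free run of $\W^s$. I would first record the purely structural consequences. Every reset arc $(p,t) \in R$ vanishes in $\W^s$: either $t$ is redundant and is deleted, or $t$ is nonredundant, whence $p$ is resetable and is deleted; so $\W^s$ is genuinely reset-free, justifying the notation $(P^s,T^s,F^s)$. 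Place $i$ survives, being nonredundant (witness $\{i \colon 1\} \trans{*} \{i \colon 1\}$) and, by \Cref{claim:noreset_if}, non-resetable; place $f$ survives, being nonredundant by $1$-soundness ($\{i \colon 1\} \trans{*} \{f \colon 1\}$) and non-resetable since no transition resets $f$ by definition. Taking $i^s \defeq i$ and $f^s \defeq f$, the requirements $\pre{i^s} = \emptyset$ and $\post{f^s} = \emptyset$ are inherited from $\W$, since deleting nodes only deletes arcs.

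The core of item~\ref{skeleton1} is to show that every node of $\W^s$ lies on a directed $i$-to-$f$ path. For forward reachability I would argue by induction on the length of a witnessing run: given a surviving place $p \neq i$, pick a run $\{i \colon k\} \trans{*} \m$ with $\m(p) \ge 1$ and let $t$ be the last fired transition producing into $p$; since resets cannot erase the surviving token of $p$, such $t$ exists, is nonredundant, and has the surviving neighbour $p \in \post{t}$, so $t \in T^s$ and $(t,p) \in F^s$. To prolong the path backwards I need a surviving place in $\pre{t}$, and symmetrically, to route forward to $f$, a surviving place in $\post{t}$. The main obstacle of item~\ref{skeleton1} is precisely this claim: every nonredundant transition keeps a surviving pre-place and a surviving post-place. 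I would prove it by a pumping argument analogous to \Cref{claim:noreset_if}: were all pre-places (or all post-places) of a nonredundant $t$ resetable, then from a sufficiently large $\{i \colon K\}$ one could manufacture a reachable marking from which the nonredundant transitions witnessing resetability strand a positive number of tokens that can no longer reach $f$, contradicting $k$-soundness for some $k \le K$. The delicacy is scheduling these resets so that exactly the intended tokens are lost while the global count is provably off; the bound $K$ is Ackermannian as it only aggregates coverability witnesses (\Cref{lem:boundOnCoverability,proposition:redundancy}). Once the claim is in place the forward induction closes, a symmetric induction tracing token consumption (using soundness to guarantee the surviving token can always be driven onward) gives a path to $f$, and in particular $i$ and $f$ are non-isolated; this settles item~\ref{skeleton1}.

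For item~\ref{skeleton2} I would work with the projection $\pi(\m) \defeq \m|_{P^s}$. The over-approximation direction is immediate from the key observation: any $\{i \colon k\} \trans{*} \m$ in $\W$ projects to $\{i^s \colon k\} \trans{*} \pi(\m)$ in $\W^s$, the isolated nonredundant transitions (no-ops on $P^s$) being simply skipped. The converse, needed to complete an arbitrary $\W^s$-run, is harder: since dropping resetable pre-conditions makes $\W^s$ strictly more permissive than $\W$, a marking $\m^s$ with $\{i^s \colon k\} \trans{*} \m^s$ need not be $\W$-reachable at budget $k$. The plan is to prove a matched-budget realizability lemma: every such $\m^s$ is the projection of some $\hat{\m}$ with $\{i \colon k\} \trans{*} \hat{\m}$ in $\W$. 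Granting it, $k$-soundness of $\W$ gives $\hat{\m} \trans{*} \{f \colon k\}$, whose projection is $\{f^s \colon k\}$, so projecting this completion back yields $\m^s \trans{*} \{f^s \colon k\}$ in $\W^s$, i.e.\ $\W^s$ is $k$-sound for every $k$. The lemma is the principal obstacle: one simulates the $\W^s$-run inside $\W$ and, whenever a surviving transition needs a token in a deleted pre-place---necessarily resetable, since surviving transitions have no redundant pre-place---interleaves a short covering side-computation (Ackermannian by \Cref{proposition:redundancy}) on spare tokens that are later reset away. Reconciling these spare tokens against the matched budget and the exact final count $k$ is where generalised soundness of $\W$ is genuinely used, and where the monotonicity-based scheduling of the side-computations must be carried out.

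Finally, for the ``moreover'' I would observe that every appeal to soundness above is made at a bounded level: the pumping witnesses of item~\ref{skeleton1} and the covering side-computations of item~\ref{skeleton2} all have Ackermannian length, so they exercise $k$-soundness only for values $k \le K$ with $K$ Ackermannian (\Cref{lem:boundOnCoverability,proposition:redundancy,claim:noreset_if}). Consequently ``$\W$ is generalised sound'' may be replaced throughout by ``$\W$ is up-to-$K$ sound'' for a single Ackermannian $K$, which is exactly the relaxed hypothesis in the statement.
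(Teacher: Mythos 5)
Your plan for item~\eqref{skeleton2} hinges on a ``matched-budget realizability lemma'': every marking $\m_s$ with $\{\initial \colon k\} \transS{*} \m_s$ in $\W^s$ is the projection of some marking reachable in $\W$ from the \emph{same} budget $\{\initial \colon k\}$. This lemma is false, and no choice of budget repairs it. Counterexample: take places $i,p,q,r,s,f$ and transitions (written pre-set${}\to{}$post-set) $a \colon \{i\} \to \{p\}$, $c \colon \{p\} \to \{f\}$, $d \colon \{i\} \to \{q,r\}$, $b \colon \{p,q\} \to \{s\}$, $g \colon \{s\} \to \{f\}$, and $e \colon \{r\} \to \{f\}$, where $e$ resets $q$. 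This is a reset workflow net and it is generalised sound: the weight $\m(i)+\m(p)+\m(r)+\m(s)+\m(f)$ is invariant under all firings, $\m(q) \le \m(r)$ holds in every reachable marking, and any reachable marking is completed to $\{\final \colon k\}$ by routing the tokens of $p$, $s$, $r$, $i$ into $\final$ (the first firing of $e$ wipes $q$). All six transitions are nonredundant and $q$ is the unique resetable place, so in $\W^s$ transition $b$ becomes $\{p\} \to \{s\}$, and $\{i \colon 1\} \transS{*} \{s \colon 1\}$ via $a\,b$. Yet \emph{no} marking of $\W$ reachable from \emph{any} $\{i \colon n\}$ projects to $\{s \colon 1\}$: marking $s$ requires firing $b$, hence a marking with tokens in both $p$ and $q$, hence (as $\m(q) \le \m(r)$) weight at least $2$, forcing $n \ge 2$; but a marking projecting to $\{s \colon 1\}$ has weight exactly $1 \neq n$. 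This is precisely why the paper proves the weaker \Cref{lem:OnlyMReachable}: a skeleton run from $\{\initial \colon l\}$ is realized in $\W$ only from the larger budget $\{\initial \colon l+k'\}$, and what is reached is $\m_s + \{\final \colon k'\}$ --- the auxiliary tokens do not vanish, they are parked in $\final$ (here $\{i \colon 2\} \trans{*} \{s \colon 1, f \colon 1\}$ via $a\,d\,b\,e$). That weaker statement still yields item~\eqref{skeleton2} because soundness of $\W$ can be invoked at the higher level $l+k'$, and arranging the simulation so that all garbage ends up in resetable places or in $\final$ is exactly what the paper's ``full reset run'' $\zeta$ (\Cref{lemma:reset_marking,lemma:reset_marking2,cor:full_reset}) is engineered for.

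There is a second gap in item~\eqref{skeleton1}. Its key claim --- every skeleton transition retains a surviving pre-place and a surviving post-place (the paper's \Cref{lemma:transition_path}) --- is exactly where you concede an unresolved ``delicacy'', and the mechanism you sketch is off: if all pre-places of a nonredundant $u$ are resetable, firing $u$ does not strand or lose tokens; it \emph{gains} a token on a surviving place for free, since a nonredundant transition neither consumes from nor resets any surviving place in this situation. The paper's contradiction instead compares two runs from the same budget, with and without the firing of $u$, cleans both with the full reset run $\zeta$ to obtain $\Res{\m_u'} \geq \Res{\m_u} + \{p \colon 1\}$, and applies \Cref{claim:badconf}: by soundness and monotonicity of reset nets, a strict excess on a non-resetable place can never be disposed of. So the $\zeta$-machinery missing from your item~\eqref{skeleton2} is also what is needed here. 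Your connectivity induction itself (tracing producers back to $\initial$ and consumers forward to $\final$) is a legitimate alternative to the paper's siphon/trap argument (\Cref{lem:SiphonTrap,lem:PathFromItoP,lem:PathFromPtoF}), but only once that claim is actually proved; the ``moreover'' part inherits both gaps, since its Ackermannian bookkeeping is layered on the two arguments above.
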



Note that generalised soundness for workflow nets (without resets) is decidable~\cite{HeeSV04}, and belongs to PSPACE~\cite{BlondinMO22}. So, \Cref{proposition:skeleton} implies that generalised soundness for the skeleton workflow net is a decidable relaxation of generalised soundness for $\W$.

Before we prove \Cref{proposition:skeleton}, we need a definition that allows us to associate markings in $\W$ with markings in $\W^s$.

The function $\Res{NoArg} \colon \N^{\Places} \to \N^{\Places}$ is defined by
\[
\Res{\vec{x}}(p)  =
\begin{cases}
  \vec{x}(p) & \text{if } p \in P^s, \\
           0 & \text{otherwise}.
\end{cases}
\]
Note that $\Res{\vec{x}}$ is equal to applying $\Rest{t}{\cdot}$ to $\vec{x}$ for each nonredundant transition $t$ (in any order).
\Cref{lemma:reset_marking2} will provide some intuition on why $\Res{\vec{x}}$ corresponds to a marking in $\W^s$.

As we often consider runs in the reset workflow $\W$, as well as runs in its skeleton $\W^s$, we introduce the notation ${\transS{}}$ to denote runs specifically in $\W^s$.

\begin{lemma}\label{lemma:reset_marking}
  Let $\W$ be a reset workflow net which is generalised sound. There
  exist $z \in \N$ and a run $\{i \colon z\} \trans{\zeta} \{f \colon
  z\}$, where $\zeta$ contains each nonredundant transition of
    $\W$. Moreover, $z$ and $|\zeta|$ are at most Ackermannian.

  Further, the claim still holds if we relax the assumption of $\W$
  being generalised sound to a weaker one, namely to ``$\W$ is
  up-to-$K$ sound'', where $K$ is at most Ackermannian (the same as
  the bound on $z$).
\end{lemma}

\begin{proof}
  For every nonredundant transition $t$ of $\W$, there is a run firing $t$, i.e.\ $\{i \colon z_t\} \trans{\rho_t t} \m_t$ for some $z_t \in \N$ and some marking $\m_t$. We define $z \defeq \sum_{t \in T} z_t$. Let $\rho$ be the concatenation (in any order) of runs $\rho_t t$. Since $\W$ is up-to-$K$ sound, by \Cref{claim:noreset_if}, the initial place $i$ cannot be reset by any nonredundant transition of $\W$. Thus, $\{i \colon z\} \trans{\rho} \m'$ for some marking $\m'$. Since $\W$ is generalised sound (or up-to-$z$-sound, as $z\le K$), there is a run $\delta$ such that $\m' \trans{\delta} \{f \colon z\}$. We conclude by taking $\zeta \defeq \rho \delta$.
 The Ackermannian bounds on $\rho$ follow from \Cref{proposition:redundancy}. The Ackermannian bounds on $\delta$ follow from a bound on the maximal length of the coverability run in the reset Petri net (\Cref{lem:boundOnCoverability} with target marking $\{f \colon z\}$).
 Thus, $|\zeta| = |\rho\delta|$ is at most Ackermannian.
\end{proof}

\begin{remark}\label{remark:reset_marking}
From the proof of \Cref{lemma:reset_marking}, we can see that if the initial place of $\W$ cannot be reset, then
there exist $z \in \N$ and $\{i \colon z\} \trans{\zeta} \m$, for some marking $\m$, that contains each nonredundant transition.
\end{remark}

We say that a marking $\m \in \N^P$ of $\W$ is \emph{reachable} if
  there exists $k \in \Npos$ such that $\{i \colon k\} \trans{*} \m$.

\begin{lemma}\label{lemma:reset_marking2}
  Let $\zeta$ be as in \Cref{lemma:reset_marking}, and let $\m$ be a
    reachable marking of $\W$. It is the case that $\{i \colon z\} +
  \m \trans{\zeta} \{f \colon z\} + \Res{\m}$.
\end{lemma}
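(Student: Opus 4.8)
The plan is to prove the statement by firing $\zeta$ from the augmented marking $\{i \colon z\} + \m$ step by step, tracking how the ``excess'' $\m$ is transformed while the base run $\{i\colon z\}\trans{\zeta}\{f\colon z\}$ of \Cref{lemma:reset_marking} proceeds underneath it. First I would record that enabledness is monotone in reset nets: if a transition $t$ is enabled in a marking $\m'$, then it is enabled in every marking $\m' + \vec{d}$, since $\m' \geq \pre{t}$ implies $\m' + \vec{d} \geq \pre{t}$. Consequently $\zeta$, being fireable from $\{i\colon z\}$, remains fireable from $\{i\colon z\} + \m$, so $\{i\colon z\} + \m \trans{\zeta} \m''$ for some marking $\m''$; it then remains only to identify $\m''$.

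The crux is a simple algebraic observation: the reset operator is additive, i.e.\ $\Rest{t}{\vec{a} + \vec{b}} = \Rest{t}{\vec{a}} + \Rest{t}{\vec{b}}$, because it merely zeroes out the coordinates reset by $t$ and leaves the others untouched. Writing $\zeta = t_1 \cdots t_n$ and letting $\{i\colon z\} = \m_0 \trans{t_1} \m_1 \trans{t_2} \cdots \trans{t_n} \m_n = \{f\colon z\}$ be the base run, I would prove by induction on $j$ that the marking reached after the prefix $t_1\cdots t_j$ from $\{i\colon z\} + \m$ equals $\m_j + \vec{d}_j$, where $\vec{d}_0 \defeq \m$ and $\vec{d}_j \defeq \Rest{t_j}{\vec{d}_{j-1}}$. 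The inductive step is the clean computation $\Rest{t_j}{(\m_{j-1} - \pre{t_j}) + \vec{d}_{j-1}} + \post{t_j} = \bigl(\Rest{t_j}{\m_{j-1} - \pre{t_j}} + \post{t_j}\bigr) + \Rest{t_j}{\vec{d}_{j-1}} = \m_j + \vec{d}_j$, using additivity together with the fact that the excess tokens are never consumed by $\pre{t_j}$ nor augmented by $\post{t_j}$ --- they only ever get reset. Hence $\m'' = \{f\colon z\} + \vec{d}_n$, where $\vec{d}_n$ is $\m$ with the resets of all of $t_1,\dots,t_n$ applied in order.

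It remains to show $\vec{d}_n = \Res{\m}$, which is where I expect the only real subtlety. Every transition occurring in $\zeta$ is fired from a marking reachable from $\{i\colon z\}$ and is therefore nonredundant, while by \Cref{lemma:reset_marking} $\zeta$ contains every nonredundant transition; thus the multiset of resets applied along $\zeta$ performs exactly the resets of the nonredundant transitions. By the note following the definition of $\Res{\cdot}$, applying precisely these resets yields $\Res{\cdot}$, so every non-resetable place $p$ keeps its value $\m(p)$ while every resetable place is emptied. To reconcile this ``apply-all-nonredundant-resets'' description of $\vec{d}_n$ with the definition of $\Res{\m}$ on the places outside $P^s$, I would finally invoke that $\m$ is reachable: redundant places carry no token in any reachable marking, so they are already $0$ in $\m$. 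The main obstacle is thus purely bookkeeping --- checking that the coordinates zeroed out in $\vec{d}_n$ coincide exactly with those set to $0$ by $\Res{\cdot}$ --- and it is dispatched by combining the cited note with the reachability of $\m$.
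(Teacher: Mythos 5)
Your proof is correct, and it reaches the conclusion by a genuinely different decomposition than the paper's. The paper first notes that $\{i \colon z\} + \m \trans{\zeta} \n$ for some $\n$, and then identifies $\n$ place by place with three separate arguments: on $P^s$ the effect of $\zeta$ is additive, since no fired (hence nonredundant) transition resets such places; redundant places stay empty along the run because every intermediate marking is reachable; and for a nonredundant resetable place $p$, it splits $\zeta = \rho_1 t \rho_2$ at an occurrence of a transition $t$ resetting $p$ (which exists precisely because $\zeta$ contains all nonredundant transitions) and argues by induction that the base run and the augmented run agree on $p$ from that point onward. You instead establish one global invariant --- after the prefix $t_1 \cdots t_j$ the augmented run sits at $\m_j + \vec{d}_j$, with $\vec{d}_0 = \m$ and $\vec{d}_j = \Rest{t_j}{\vec{d}_{j-1}}$ --- which follows from the additivity $\Rest{t}{\vec{a}+\vec{b}} = \Rest{t}{\vec{a}} + \Rest{t}{\vec{b}}$ and subsumes all three of the paper's cases in a single computation; the case analysis resurfaces only in the final identification $\vec{d}_n = \Res{\m}$, where you invoke exactly the same three facts the paper relies on (every transition occurring in $\zeta$ is nonredundant, $\zeta$ contains every nonredundant transition, and reachability of $\m$ forces redundant places to be empty). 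Your route buys uniformity and a stronger statement --- the exact marking after \emph{every} prefix of $\zeta$, which is close in spirit to the excess-tracking computations the paper redoes by hand later, e.g.\ in the proof of \Cref{lem:OnlyMReachable} --- and it never needs \Cref{claim:noreset_if} for the final place; the paper's route keeps each sub-argument elementary and makes it transparent which hypothesis each kind of place actually needs. One caveat you share with the paper, rather than a gap of your own: both proofs implicitly read ``$p \notin P^s$'' as ``$p$ is redundant or resetable'', although the skeleton construction could in principle also discard a nonredundant, non-resetable place all of whose adjacent transitions are redundant; under the lemma's standing soundness hypotheses no such place can exist, so both arguments are fine as stated.
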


\begin{proof}
  We have $\{i \colon z\} + \m \trans{\zeta} \n$ for some marking
  $\n$.  We need to prove that $\n = \{f \colon z\} + \Res{\m}$. It is
  easy to see that $\n(p) = (\{f \colon z\} + \Res{\m})(p)$ for all $p
  \in P^s$ as these places are not resetable, and the effect on
  them is the sum of effects on all transitions in $\zeta$. Thus, we
  need to prove that $\n(p) = 0$ for all $p \notin P^s$.

  Let $p \notin P^s$. If $p$ is redundant, then $\n(p) = \m(p) = 0$
  since $p$ is not marked in any reachable marking. Further, $f$
  cannot be reset by \Cref{claim:noreset_if}, and $p \neq f$ as $f$ is
  nonredundant by generalised soundness (or up-to-$K$ soundness) of
  $\W$.

  Thus, we may assume that $p$ is a nonredundant place, which is
    reset by a nonredundant transition $t$. By definition, we can
  decompose $\zeta$ as $\zeta = \rho_1 t \rho_2$.  Let $\{i \colon z\}
  \trans{\rho_1 t} \n_1$ and $\{i \colon z\} + \m \trans{\rho_1 t}
  \n_2$. We know that $\n_1(p) = \n_2(p) = 0$. Consider two runs, with the same sequence of transitions: $\rho_2$, but different starting points: $\n_1$ and $\n_2$. By induction, one can
  prove that the markings in both runs will always have the same value in $p$. Indeed, they are the same initially, and the same sequence of transitions are applied afterwards. Thus, in the end
  $\n(p) = (\{f \colon z\} + \Res{\m})(p) = 0$ as required.
\end{proof}

The next two subsections are devoted to proving the two items of
\Cref{proposition:skeleton}.

\subsection{Skeletons are workflow nets}\label{sec:skeleton1}

In this subsection, we prove the first item of \Cref{proposition:skeleton}.
We fix a reset workflow net $\W = (P, T, F, R, i, f)$ as in the statement of \Cref{proposition:skeleton} and its skeleton $(P^s, T^s, F^s)$. Overall, our aim is to prove that $\W^s = (P^s, T^s, F^s, i,f)$ is a well-defined workflow net.

It follows directly from \Cref{claim:noreset_if} that no nonredundant transition of $\W$ resets $i$ or $f$. Thus, the following holds.

\begin{lemma}
  It is the case that $i, f \in P^s$.
\end{lemma}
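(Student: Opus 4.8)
The plan is to show that the initial place $i$ and the final place $f$ both survive the three-stage removal process that produces the skeleton $\W^s$, namely (a) removal of redundant places, (b) removal of resetable places, and (c) removal of isolated transitions. The only way a place can fail to belong to $P^s$ is to be either redundant or resetable, so it suffices to verify that $i$ and $f$ are both nonredundant and non-resetable.

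First I would handle nonredundancy. The place $f$ is nonredundant by the generalised soundness (or up-to-$K$ soundness) assumption: since $\W$ is $1$-sound, starting from $\{i \colon 1\}$ we can reach $\{f \colon 1\}$, which marks $f$, so condition~\eqref{itm:redund:a} of nonredundancy is met. For $i$, the marking $\{i \colon 1\}$ itself is reachable from $\{i \colon 1\}$ by the empty run and already places a token in $i$, so $i$ is trivially nonredundant as well.

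Next I would handle non-resetability, and this is where the quoted \Cref{claim:noreset_if} does the real work. By definition, a place is resetable only if some \emph{nonredundant} transition resets it. \Cref{claim:noreset_if} states precisely that, under generalised soundness (or up-to-$K$ soundness with $K$ at most Ackermannian), no nonredundant transition can reset $i$ or $f$. Hence neither $i$ nor $f$ is resetable, so neither is removed in stage~(b).

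Combining these observations, $i$ and $f$ are neither redundant nor resetable, so both are retained through stages (a) and (b); since stage~(c) removes only transitions, they remain in $P^s$. I do not anticipate a genuine obstacle here, as the statement is essentially a direct corollary of \Cref{claim:noreset_if} together with the elementary nonredundancy of $i$ and $f$; the one point requiring care is simply to confirm that the only mechanisms by which a place is excluded from $P^s$ are redundancy and resetability (isolated-node removal in \Cref{ssec:subnets} deletes only nodes with no arcs, and both $i$ and $f$ lie on paths from $i$ to $f$ and hence carry arcs), so that ruling out both exclusion mechanisms genuinely guarantees $i, f \in P^s$.
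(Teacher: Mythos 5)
Your core argument is the same as the paper's: the paper proves this lemma in a single sentence, observing that \Cref{claim:noreset_if} rules out resetability of $i$ and $f$, with their nonredundancy left implicit (as you note, it is trivial for $i$ and follows from $1$-soundness for $f$). In that respect your write-up is, if anything, more explicit than the original.

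The one genuine issue is the step you yourself flag as ``requiring care'': your parenthetical justification for it is not valid. Under the subnet operation of \Cref{ssec:subnets}, a place $p \notin Q$ is kept only when $(\pre{p} \cup \post{p}) \not\subseteq S$, i.e., when at least one of its incident transitions survives the removal; merely ``carrying arcs in $\W$'' is not sufficient, since a nonredundant, non-resetable place is still deleted if \emph{all} of its neighbouring transitions are redundant. To close this point for $i$ and $f$, argue as follows: fix a run $\{i \colon 1\} \trans{*} \{f \colon 1\}$ (it exists by $1$-soundness, and every transition fired in it is nonredundant, being enabled in a marking reachable from $\{i \colon 1\}$). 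The token initially in $i$ must disappear along this run, and since nonredundant transitions cannot reset $i$ by \Cref{claim:noreset_if}, some fired transition consumes from $i$; hence $\post{i} \not\subseteq S$. Symmetrically, the token ending up in $f$ must be produced by some fired transition, so $\pre{f} \not\subseteq S$. This is a minor repair --- and the paper's own one-line proof glosses over the same point entirely --- but as written your parenthetical does not establish what it needs to.
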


The following claim is useful and trivial by definition.

\begin{claim}\label{claim:projection}
Consider a run $\m \trans{\rho} \m'$ of $\W$. Let $\m_s$, $\m'_s \in \N^{P^s}$ be the markings obtained by projecting $\m$ and $\m'$ onto $P^s$; and let $\rho_s$ be the run in $\W^s$ obtained from $\rho$ (i.e. transitions are restricted to $P^s$, and possibly isolated transitions are removed). It is the case that $\m_s \transS{\rho_s} \m'_s$.
\end{claim}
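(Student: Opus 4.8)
The plan is to proceed by induction on the length of $\rho$, reducing the claim to a single-step statement about how firing one transition of $\W$ acts on the places of $P^s$. The base case ($|\rho| = 0$) is immediate: then $\m = \m'$, hence $\m_s = \m'_s$, and the empty run witnesses $\m_s \transS{} \m'_s$. For the inductive step I would write $\rho = \sigma t$ with $\m \trans{\sigma} \n \trans{t} \m'$, obtain $\m_s \transS{\sigma_s} \n_s$ from the induction hypothesis, and then argue about the last transition $t$. Because the statement is applied to runs issuing from reachable markings, every transition occurring in $\rho$ is enabled at a reachable marking and is therefore nonredundant; so $t$ is either a surviving transition of $T^s$ or an isolated one.

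The crucial structural fact I would establish first is that $\W^s$ carries no reset arc, so that firing a surviving transition in $\W^s$ obeys the ordinary (reset-free) firing rule. This holds because every place of $P^s$ is non-resetable while every transition of $T^s$ is nonredundant; hence if some $t \in T^s$ reset a place $p \in P^s$, then $p$ would be resetable, a contradiction. Consequently, for $t \in T^s$ and $p \in P^s$ we have $(p,t) \notin R$, so the effect of $t$ on $p$ in $\W$ is exactly $-\pre{t}(p) + \post{t}(p)$, which coincides with the reset-free effect of $t$ on $p$ in $\W^s$, since the restricted arcs $F^s$ agree with $F$ on pairs with both endpoints in $P^s \cup T^s$. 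Enabledness is likewise preserved: $\m \geq \pre{t}$ in $\W$ immediately yields $\m_s \geq \pre{t}|_{P^s}$ in $\W^s$. This gives $\n_s \transS{t} \m'_s$, and taking $\rho_s = \sigma_s t$ completes the case $t \in T^s$.

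The remaining case is when $t$ does not survive into the skeleton. As $t$ is nonredundant, $t \notin T^s$ can only mean that $t$ became isolated, i.e.\ $\pre{t} \cup \post{t} \subseteq Q$ where $Q$ is the set of removed (redundant or resetable) places. Then $t$ neither consumes from nor produces into any place of $P^s$, and moreover $t$ resets no place of $P^s$ (again because such places are non-resetable while $t$ is nonredundant). Hence $t$ leaves the projection onto $P^s$ unchanged, so $\n_s = \m'_s$ and one takes $\rho_s = \sigma_s$, matching the convention that isolated transitions are dropped from $\rho_s$.

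The only delicate point is the bookkeeping around reset arcs: one must verify both that no surviving transition can reset a surviving place and that a dropped (isolated) transition has no side effect on $P^s$ through a reset either. Both follow from the single observation that $P^s$ consists precisely of non-resetable places while all transitions in play are nonredundant, so I expect this to be the main—though ultimately routine—obstacle, which is why the claim can fairly be called trivial by definition.
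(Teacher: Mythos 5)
Your proof is correct and matches the paper's treatment: the paper gives no argument at all, declaring the claim ``trivial by definition,'' and your induction is precisely the unfolding of the definitions that makes it so (surviving transitions of $T^s$ cannot reset places of $P^s$, so their effect on $P^s$ is the ordinary reset-free one, and a dropped nonredundant transition is isolated and hence touches no place of $P^s$). Your observation that the claim must be read for runs all of whose transitions are nonredundant (e.g., runs issuing from reachable markings, which covers every use in the paper) is a sound and worthwhile precision that the paper leaves implicit.
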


We do not know yet whether $\W^s$ is a workflow net, but the definition of nonredundancy makes sense for $\W^s$ even if it is not a workflow net. Below, we note that nonredundancy of places in $\W$ easily implies nonredundancy of places in $\W^s$.

\begin{claim}\label{claim:nonredundant_skeleton}
For every place $p \in P^s$, there exist $k \in \N$, with $k \le K$, and a run $\{i \colon k\} \transS{*} \m_s$ such that $\m_s(p) > 0$ and $\m_s \transS{*} \{\final \colon k\}$.
\end{claim}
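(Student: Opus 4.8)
The plan is to realize the required run entirely inside $\W$ first and then transport it into the skeleton by projection. Since $p \in P^s$, the place $p$ survived the construction of $\W^s$, so in particular $p$ is a \emph{nonredundant} place of $\W$ (redundant places are discarded when forming the skeleton). This is exactly the hypothesis needed to invoke \Cref{proposition:redundancy}. Throughout I would fix the Ackermannian bound $K$ to be large enough to dominate all the per-place and per-transition numbers produced by \Cref{proposition:redundancy}, so that the $k$ obtained below satisfies $k \le K$ and up-to-$K$ soundness can be used freely.

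Concretely, I would proceed in three steps. First, apply \Cref{proposition:redundancy} to the nonredundant place $p$ to obtain a number $k \le K$ together with a run $\{i \colon k\} \trans{*} \m$ of $\W$ such that $\m \in \upclosure{\{p \colon 1\}}$, i.e.\ $\m(p) \ge 1$. Second, observe that $\m$ is reachable from $\{i \colon k\}$ and that $k \le K$; since $\W$ is up-to-$K$ sound it is in particular $k$-sound, so there is a completing run $\m \trans{*} \{\final \colon k\}$ of $\W$. Third, apply the projection \Cref{claim:projection} to each of these two runs. Because $i, \final, p$ all belong to $P^s$, projection onto $P^s$ preserves the token counts that matter: the source $\{i \colon k\}$ projects to $\{i \colon k\}$, the intermediate marking $\m$ projects to some $\m_s \in \N^{P^s}$ with $\m_s(p) = \m(p) > 0$, and the target $\{\final \colon k\}$ projects to $\{\final \colon k\}$ itself. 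Concatenating the two projected runs yields $\{i \colon k\} \transS{*} \m_s \transS{*} \{\final \colon k\}$ with $\m_s(p) > 0$, which is precisely the statement of the claim.

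I do not expect a genuine obstacle here, only bookkeeping. The two delicate points are (a) guaranteeing $k \le K$, which holds because all the numbers supplied by \Cref{proposition:redundancy} are Ackermannianly bounded and $K$ is chosen to dominate them, and (b) confirming that the projection does not lose the tokens on $p$, $i$, or $\final$ — this is immediate precisely because these three places lie in $P^s$. Everything else is carried by \Cref{claim:projection}, which transports any run of $\W$ to a run of $\W^s$ with no extra argument, so no separate analysis of enabledness or of resets on removed places is needed.
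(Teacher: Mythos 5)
Your proof is correct and follows essentially the same route as the paper's: nonredundancy of $p$ (inherited from membership in $P^s$) gives a run $\{i \colon k\} \trans{*} \m$ with $\m(p) > 0$ and $k \le K$, up-to-$K$ soundness completes it to $\{\final \colon k\}$, and \Cref{claim:projection} transports both runs into $\W^s$. The paper's version is just terser; your explicit appeal to \Cref{proposition:redundancy} for the bound $k \le K$ and your check that $i$, $\final$, $p$ survive the projection are the same ingredients, spelled out.
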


\begin{claimproof}
By definition of $\W^s$, place $p$ is nonredundant in $\W$, and hence we have $\{\initial \colon k\} \trans{*} \m$ where $\m(p) > 0$. Since $\W$ is generalised sound or up-to-$k$ sound for $k\le K$, there is a run $\m \trans{*} \{\final \colon k\}$.
These two runs and \Cref{claim:projection} conclude the proof.
\end{claimproof}

To show \Cref{proposition:skeleton}~\eqref{skeleton1}, it remains to prove that all places from $P^s$ and all transitions from $T^s$ are on a path from $i$ to $f$ in $\W^s$. The following lemma reduces the problem to checking this property for places only.

\begin{lemma}\label{lemma:transition_path}
Let $t \in T^s$. It is the case that $\pre{t} \neq \emptyset$ and $\post{t} \neq \emptyset$.
\end{lemma}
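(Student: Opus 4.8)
The plan is to prove the equivalent statement that, in $\W^s$, the transition $t$ has at least one \emph{non-resetable} input place and at least one non-resetable output place. First I would record that every input and output place of a nonredundant transition is itself nonredundant: since $t$ is nonredundant, \Cref{proposition:redundancy} yields a reachable marking $\m$ enabling $t$, so every $p \in \pre{t}$ carries a token in $\m$, and firing $t$ from $\m$ puts a token in every $q \in \post{t}$ (the reset of $t$ is applied before $\post{t}$ is added, so $\m'(q) \geq 1$ even when $t$ also resets $q$). Since the removed places are exactly the redundant and the resetable ones, and reset arcs never contribute to $\pre{t}$ or $\post{t}$ (these are defined through $F$), the sets $\pre{t}$ and $\post{t}$ meet $P^s$ precisely in their non-resetable elements. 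As $\W$ is a workflow net, $\pre{t} \neq \emptyset$ and $\post{t} \neq \emptyset$ already in $\W$, so the lemma reduces to showing that not all of $\pre{t}$ is resetable and not all of $\post{t}$ is resetable.

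Next I would argue by contradiction, handling the output case (the input case being symmetric). Suppose every $q \in \post{t}$ is resetable, and fix one such $q$ together with a nonredundant transition $u$ with $(q, u) \in R$. The intended contradiction is that the reset of $q$ by $u$ can be forced to destroy a token, producing a witness of unsoundness. The difficulty is that reset nets are not monotone, so I cannot merely add tokens to a reachable marking to manufacture a surplus on $q$. To circumvent this I would use the completion run $\zeta$ of \Cref{lemma:reset_marking} together with the transport identity $\{i \colon z\} + \m \trans{\zeta} \{f \colon z\} + \Res{\m}$ of \Cref{lemma:reset_marking2}, which relocate a reachable marking into a clean context without spurious resets on $P^s$; combined with the bounded enabling runs for $t$ and $u$ from \Cref{proposition:redundancy}, this lets me reach, from $\{\initial \colon k\}$ for a suitable $k \le K$, a marking $\m^\star$ in which $u$ is enabled and $q$ carries strictly more tokens than $u$ consumes from it along $F$. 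Firing $u$ from $\m^\star$ then empties $q$ while $u$ replaces only boundedly many tokens, so at least one token present in $\m^\star$ is irretrievably destroyed.

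To turn this loss into a genuine witness I would exploit the tightness forced by $k$-soundness: no marking reachable from $\{\initial \colon k\}$ may strictly cover $\{\final \colon k\}$ (such a strict cover is a witness of unsoundness), and since $\final$ is a sink the maximal number of tokens ever deliverable to $\final$ from any reachable marking equals $k$ exactly. The destroyed token lowers this capacity below $k$, so the marking obtained after firing $u$ cannot reach $\{\final \colon k\}$ and witnesses $k$-unsoundness, contradicting up-to-$K$ (a fortiori generalised) soundness for $k \le K$. The same scheme applied to a resetter of an input place of $t$ settles the preset. I expect the main obstacle to be exactly this last conversion—turning a non-monotone, reset-induced token loss into an irrecoverable deficit at the sink—because token loss alone is harmless in a net that can recreate tokens; the soundness hypothesis of \Cref{proposition:skeleton} is essential here, as already witnessed by a reset workflow net whose transition into $\final$ resets the unique output place $q$ of an earlier transition: such a net can be $1$-sound yet fails to be $2$-sound, and there $t$ has an entirely resetable postset, so the conclusion genuinely fails without the assumption.
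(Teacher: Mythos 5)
Your opening reduction is fine: since every place in $\pre{t} \cup \post{t}$ is nonredundant, the lemma is indeed equivalent to saying that not all of $\pre{t}$, and not all of $\post{t}$, is resetable. The genuine gap is in your contradiction step: the inference from ``firing the resetter $u$ destroys a token on $q$'' to ``the resulting marking cannot reach $\{\final \colon k\}$'' is false, because tokens sitting on \emph{resetable} places can be destroyed harmlessly in a generalised sound net. Concretely, take places $i, p, q, f$ and transitions $a$ (consuming $i$, producing $p$ and $q$), $b$ (consuming $p$, producing $f$, resetting $q$), and $c$ (consuming $p$ and $q$, producing $f$). This is a valid reset workflow net, and it is generalised sound: every reachable marking $\m$ satisfies $\m(i)+\m(p)+\m(f)=k$ and $\m(q)\le\m(p)$, and firing all possible $a$'s, then $c$'s, then $b$'s lands exactly on $\{f \colon k\}$. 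Here $q$ is a resetable output place of the nonredundant transition $a$, and from $\{i \colon 1\}$ one reaches $\m^\star = \{p \colon 1, q \colon 1\}$, where $b$ is enabled and resets the marked place $q$: firing $b$ destroys that token and yields $\{f \colon 1\}$ --- no witness of unsoundness. So a reset-induced token loss does not lower the ``capacity'' of delivering tokens to $\final$; that capacity is not additive in tokens. Note that your argument only ever uses \emph{one} resetable place of $\post{t}$ together with its resetter; if it were sound, it would prove that no nonredundant transition has any resetable output place, which the example refutes. (A secondary gap: the tools you cite to build $\m^\star$ cannot do so in general, since \Cref{lemma:reset_marking2} preserves markings only on $P^s$ and $\Res{\m}(q)=0$ for the resetable place $q$.)

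The paper's proof places the irrecoverable deficit on the \emph{skeleton} places rather than on the place where tokens vanish, and this is where the hypothesis that \emph{all} of $\post{t}$ avoids $P^s$ is actually used. Suppose $\post{t}=\emptyset$ in $\W^s$ (the preset case is analogous); since $t$ is not isolated, there is some $p \in \pre{t} \cap P^s$. Taking the original transition $u$ and a run $\{i \colon z_u\} \trans{\rho_u} \m_u \trans{u} \m_u'$, every output place of $u$ is redundant or resetable and $u$ resets no place of $P^s$, so $\Res{\m_u} \ge \Res{\m_u'} + \{p \colon 1\}$. Appending the run $\zeta$ of \Cref{lemma:reset_marking} to both $\m_u$ and $\m_u'$ (via \Cref{lemma:reset_marking2}) produces two markings reachable from the same marking $\{i \colon z_u + z\}$, namely $\{f \colon z\} + \Res{\m_u}$ and $\{f \colon z\} + \Res{\m_u'}$, which differ by at least one token on the non-resetable place $p$. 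By \Cref{claim:badconf} this contradicts (up-to-$K$ or generalised) soundness: a surplus on a non-resetable place survives any completing run, since all transitions fired from reachable markings are nonredundant and hence never reset $p$, so the surplus can be carried into a strict cover of $\{\final \colon k\}$. This restored monotonicity on $P^s$ is exactly what your approach lacks, and it explains why the deficit must be measured through the projection $\Res{\m}$ rather than on the resetable place $q$.
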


We will need the following technical claim. Note that although its
statement deals with places $P^s \subseteq P$ of the skeleton, the claim
deals with runs from $\W$.

\begin{claim}\label{claim:badconf}
Let $p \in P^s$. There is no $k \in \N,$ $k\le K$ such that $\W$ has two
  runs $\{i \colon k\} \trans{\pi} \m$ and $\{i \colon k\}
  \trans{\pi'} \m'$ with $\m' \geq \m + \{p \colon 1\}$.
\end{claim}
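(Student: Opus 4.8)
The plan is to argue by contradiction. Suppose that for some $k \le K$ there are runs $\{i \colon k\} \trans{\pi} \m$ and $\{i \colon k\} \trans{\pi'} \m'$ with $\m' \ge \m + \{p \colon 1\}$ and $p \in P^s$. Since $\W$ is up-to-$K$ sound and $k \le K$, it is $k$-sound, so there is a completing run $\m \trans{\sigma} \{f \colon k\}$. The idea is to replay $\sigma$ from the larger marking $\m'$, show that the surplus token in $p$ survives the replay, and thereby produce a reachable marking that strictly dominates $\{f \colon k\}$; such a marking cannot complete, contradicting $k$-soundness.

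Before the replay I would record three structural facts. First, since $\post{f} = \emptyset$ no transition consumes from $f$, and no transition resets $f$, so the number of tokens in $f$ is non-decreasing along every run. Second, every transition of $\W$ lies on a path from $i$ to $f$, hence has a nonempty postset (I must invoke this directly, not via \Cref{lemma:transition_path}, which depends on the present claim). Third, any transition fired from a reachable marking is by definition nonredundant, and a place of $P^s$ is never reset by a nonredundant transition (this is exactly how the skeleton discards resetable places, and \Cref{claim:noreset_if} handles $i$ and $f$).

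The replay step uses the standard monotonicity of reset nets: from $\m' \ge \m$ and $\m \trans{\sigma} \{f \colon k\}$ one gets $\m' \trans{\sigma} \n'$ with $\n' \ge \{f \colon k\}$. The delicate bookkeeping is the coordinate $p$: firing a transition can only erase the surplus in a coordinate that it resets, but every transition of $\sigma$ is fired at a reachable marking, hence is nonredundant, and therefore does not reset $p$. Consequently the surplus in $p$ is preserved throughout, giving $\n'(p) \ge \{f \colon k\}(p) + 1$. Thus $\n'$ is reachable from $\{i \colon k\}$ and $\n' > \{f \colon k\}$.

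To finish, $k$-soundness forces $\n' \trans{*} \{f \colon k\}$, and I would contradict this. By the non-decreasing $f$-count, from $\n'(f) \ge k$ and a final $f$-count of $k$ we conclude $\n'(f) = k$ and that no transition of this run produces into $f$. If $p = f$ this already contradicts $\n'(f) \ge k + 1$. Otherwise $p \ne f$, so $\n'(p) \ge 1$ is a token outside $f$; since no fired transition produces into $f$ yet every transition has a nonempty postset, each firing leaves at least one token outside $f$, so the run can never reach $\{f \colon k\}$ (which has none) unless it is empty---impossible, as $\n' \ne \{f \colon k\}$. I expect the only subtle point to be the surplus-preservation step, where one must ensure resets do not silently destroy the extra token; this is precisely where the hypotheses ``$p \in P^s$'' and ``only nonredundant transitions fire along reachable runs'' are essential. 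The monotonicity and the non-decreasing $f$-count are routine.
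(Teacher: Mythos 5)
Your proof is correct and takes essentially the same approach as the paper's: use soundness to complete $\m$ to $\{f \colon k\}$, replay that completing run from $\m'$, with the surplus token in $p$ surviving because $p \in P^s$ is not resetable, and then contradict soundness with the resulting reachable marking $\m'' \geq \{f \colon k\} + \{p \colon 1\}$. The only difference is one of detail: the paper simply asserts $\m'' \not\trans{*} \{f \colon k\}$, whereas you spell out the justification (non-decreasing token count in $f$ and nonempty postsets), which is a correct elaboration rather than a different argument.
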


\begin{claimproof}
For the sake of contradiction, suppose two such runs exist.
  By generalised soundness or up-to-$K$ soundness of $\W$, there is a run $\m \trans{\gamma} \{\final \colon
  k\}$. Since $\m' \geq \m + \{p \colon 1\}$ and $p$ is not resetable, we have $\{\initial
  \colon k\} \trans{\pi'\gamma} \m''$ for some $\m'' \ge \{\final \colon k, p \colon 1\}$. This contradicts general soundness or up-to-$K$ soundness as $\m'' \not \trans{*} \{\final \colon
  k\}$.
\end{claimproof}

\begin{proof}[Proof of \Cref{lemma:transition_path}]
Towards a contradiction, suppose there exists $t \in T^s$ such that $\pre{t} =
  \emptyset$. By definition, there exists $p \in \post{t}$ for some $p \in P^s$ (otherwise $t$ would be an isolated transition in $\W^s$). Let $u\in T$ be the original nonredundant transition from which $t$ is obtained.
  Let $z, z_u, \zeta,$ and $\rho_u$ be as in the proof of \Cref{lemma:reset_marking}.
  We consider these runs of $\W$:
  \begin{enumerate}
  \item $\{i \colon z_u\} \trans{\rho_u} \m_u \trans{u} \m_u'$ (it exists by nonredundancy);

  \item\label{eq1} $\{i \colon z_u + z\} \trans{\rho_u} \m_u + \{i \colon z\}
    \trans{\zeta} \{f \colon z\} + \Res{\m_u}$;

  \item\label{eq2} $\{i \colon z_u + z\} \trans{\rho_uu} \m_u' + \{i \colon z\} \trans{\zeta} \Res{\m_u'} + \{f \colon z\}$.
  \end{enumerate}
  The last two runs are obtained by \Cref{lemma:reset_marking2}. Observe that
  \[
  \Res{\m_u'} \geq \Res{\m_u} + \{p \colon 1\}.
  \]
  Indeed: since $\pre{t} = \emptyset$, we have $\m_u'(r) \ge \m_u(r)$
  for all $r \in P^s$; and, since $p \in \post{t}$, we also have
  $\m_u'(p) \ge \m_u(p)+1$.  By \Cref{claim:badconf}, the
  runs~\eqref{eq1} and~\eqref{eq2} contradict the generalised
  soundness of $\W$ and up-to-$K$ soundness assuming that $K>z_u+z$.

  The proof of the case where there is a transition $t \in T^s$ such
  that $\post{t} = \emptyset$ is essentially the
  same.
\end{proof}

To prove \Cref{proposition:skeleton} \eqref{skeleton1}, it remains to prove that all places in $P^s$ are on a path from $i$ to $f$. We need to recall some standard definitions from Petri net theory (e.g.\ see~\cite{Rei13}).

A \emph{siphon} is a subset of places $S \subseteq P$ such that for
every transition $t$: if $\post{t} \cap S \neq \emptyset$ then
$\pre{t} \cap S \neq \emptyset$. Similarly, a \emph{trap} is a subset
of places $S'$ such that for every transition $t$: if $\pre{t} \cap S'
\neq \emptyset$ then $\post{t} \cap S' \neq \emptyset$. By definition,
it is readily seen that every unmarked siphon remains unmarked and
every marked trap remains marked. More formally:

\begin{lemma}\label{lem:SiphonTrap}
Let $S$ be a siphon and let $S'$ be a trap of a Petri net (without resets). For every $\m \trans{*} \m'$, the following holds:
\begin{enumerate}
 \item if $\m(S) = 0$, then $\m'(S) = 0$;

 \item if $\m(S') > 0$, then $\m'(S') > 0$.
\end{enumerate}
\end{lemma}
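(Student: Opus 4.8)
The plan is to prove the single-step version of each claim and then lift it to arbitrary runs by a straightforward induction on the length of $\m \trans{*} \m'$. Since $\trans{*}$ is the reflexive-transitive closure of $\trans{}$, it suffices to establish both items for a single firing $\m \trans{t} \m'$: the reflexive (length-zero) case is immediate, and the inductive step simply chains the single-step property. I would therefore fix a transition $t$ with $\m \trans{t} \m'$ and, crucially exploiting the absence of resets, use the explicit update rule $\m'(p) = \m(p) - \pre{t}(p) + \post{t}(p)$ for every place $p$, recalling that $\pre{t}(p), \post{t}(p) \in \{0, 1\}$.

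For the siphon claim, I would argue by contradiction: assume $\m(S) = 0$ but $\m'(S) > 0$, so there is a place $p \in S$ with $\m'(p) > 0$. Because $\m(p) = 0$, the update rule forces $\post{t}(p) = 1$, i.e.\ $p \in \post{t} \cap S$, so $\post{t} \cap S \neq \emptyset$. The siphon property then yields $\pre{t} \cap S \neq \emptyset$, giving some $q \in \pre{t} \cap S$ with $\pre{t}(q) = 1$. But enabledness of $t$ requires $\m(q) \geq \pre{t}(q) = 1$, contradicting $\m(S) = 0$. Hence $\m'(S) = 0$.

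For the trap claim, I would assume $\m(S') > 0$ and split on whether $t$ consumes from $S'$. If $\pre{t} \cap S' = \emptyset$, then firing $t$ removes no token from $S'$ and possibly adds some, so $\m'(S') \geq \m(S') > 0$. Otherwise $\pre{t} \cap S' \neq \emptyset$, and the trap property furnishes a place $p \in \post{t} \cap S'$. For that $p$ we have $\post{t}(p) = 1$, so $\m'(p) = \m(p) - \pre{t}(p) + 1$; if $\pre{t}(p) = 1$ then enabledness gives $\m(p) \geq 1$ and thus $\m'(p) \geq 1$, while if $\pre{t}(p) = 0$ then $\m'(p) = \m(p) + 1 \geq 1$. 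Either way $\m'(p) > 0$, so $\m'(S') > 0$.

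I expect no genuine obstacle here, as this is a classical fact. The only point deserving care is that the entire argument leans on the hypothesis that the net has \emph{no resets}: both the additive update rule and, in the trap case, the guarantee that a freshly produced token actually survives in $p$ would fail if a reset could empty $p$ upon firing $t$. I would make sure the proof never silently assumes a non-reset behaviour beyond what the stated hypothesis grants, since this is precisely the place where the skeleton being reset-free (established earlier) is what makes the lemma applicable.
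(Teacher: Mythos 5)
Your proof is correct and is exactly the standard argument: the paper itself gives no proof of this lemma, stating only that ``[b]y definition, it is readily seen that every unmarked siphon remains unmarked and every marked trap remains marked,'' and your single-step analysis plus induction on the run length is precisely the elementary reasoning being left implicit there. Your closing remark about the argument genuinely needing the absence of resets is also well taken, since the lemma is applied to the skeleton $\W^s$, which is reset-free by construction.
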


The next two lemmas conclude the proof of \Cref{proposition:skeleton} \eqref{skeleton1}.

\begin{lemma}\label{lem:PathFromItoP}
For every $p \in P^s$, there is a path from $\initial$ to $p$ in $\W^s$.
\end{lemma}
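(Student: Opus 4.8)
The plan is to prove the contrapositive via a classical siphon argument, using the fact (\Cref{lem:SiphonTrap}) that unmarked siphons stay unmarked. Since $\W^s$ has no reset arcs (resetable places were removed in forming the skeleton), \Cref{lem:SiphonTrap} is applicable to it. Define
\[
S \defeq \{p \in P^s : \text{there is no directed path from } i \text{ to } p \text{ in } \W^s\}.
\]
My goal is to show $S = \emptyset$.

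First I would show that $S$ is a siphon of $\W^s$. Suppose $t \in T^s$ satisfies $\post{t} \cap S \neq \emptyset$, say $p \in \post{t} \cap S$; I must produce some place of $\pre{t}$ in $S$. Towards a contradiction, assume $\pre{t} \cap S = \emptyset$. By \Cref{lemma:transition_path}, $\pre{t} \neq \emptyset$, so pick $q \in \pre{t}$; since $q \notin S$, there is a path from $i$ to $q$ in $\W^s$, and appending the edges $q \to t \to p$ yields a path from $i$ to $p$, contradicting $p \in S$. Hence $\pre{t} \cap S \neq \emptyset$, so $S$ is a siphon.

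Next I would use nonredundancy to derive a contradiction if $S$ were nonempty. Note that $i \notin S$ (the empty path witnesses reachability of $i$ from itself), so the marking $\{i \colon k\}$ does not mark $S$, i.e.\ $\{i \colon k\}(S) = 0$ for every $k$. Now suppose for contradiction that $S \neq \emptyset$ and pick $p \in S$. Since $p \in P^s$, \Cref{claim:nonredundant_skeleton} provides $k$ and a run $\{i \colon k\} \transS{*} \m_s$ in $\W^s$ with $\m_s(p) > 0$. Because $S$ is an unmarked siphon in the initial marking $\{i \colon k\}$, \Cref{lem:SiphonTrap} gives $\m_s(S) = 0$; but $p \in S$ and $\m_s(p) > 0$ imply $\m_s(S) > 0$, a contradiction. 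Therefore $S = \emptyset$, which is exactly the statement that every $p \in P^s$ lies on a path from $i$.

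The main obstacle is ensuring that the two ingredients feeding the siphon argument are genuinely available: that every transition of $\W^s$ has a nonempty preset (so that the siphon-closure step goes through) and that every place of $\W^s$ is actually reachable by a run of $\W^s$ (so that \Cref{lem:SiphonTrap} can be contradicted). Both are already discharged — the former by \Cref{lemma:transition_path} and the latter by \Cref{claim:nonredundant_skeleton}, which lifts nonredundancy in $\W$ to a concrete run in $\W^s$ via \Cref{claim:projection} — so the remaining work is only to assemble them. The one point requiring care is that \Cref{lem:SiphonTrap} is stated for reset-free Petri nets, but this is precisely satisfied since $\W^s$ contains no reset arcs by construction.
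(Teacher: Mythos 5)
Your proof is correct and takes essentially the same route as the paper: a siphon argument in $\W^s$ combining \Cref{lemma:transition_path} (nonempty presets), \Cref{claim:nonredundant_skeleton} (every place of $P^s$ can be marked) and \Cref{lem:SiphonTrap}. The only cosmetic difference is the choice of siphon --- you take the complement of the set of places reachable from $\initial$, handling all places at once, whereas the paper fixes $p$ and takes the set of places from which $p$ is reachable; both sets are siphons for the same reason and the contradiction is derived identically.
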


\begin{proof}
  Let $p \in P^s$. Let $X \subseteq P^s$ be the set of places from
  which there is a path to $p$ in $\W^s$. Observe that $X$ is a siphon
  in $\W^s$. Indeed, if $\post{t}\cap X\neq \emptyset$ then
  $\pre{t}\subseteq X$ and $\pre{t}\neq \emptyset$ because of
  \Cref{lemma:transition_path}.

  Note that $p \in X$ and, by \Cref{claim:nonredundant_skeleton}, the place $p$ can be marked in $\W^s$.
Initially, only place $\initial$ is marked. Thus, by \Cref{lem:SiphonTrap}, it must be the case that $\initial \in X$ as otherwise $p$ could not be marked.
\end{proof}

\begin{lemma}\label{lem:PathFromPtoF}
For every $p \in P^s$, there is a
  path from $p$ to $\final$ in $\W^s$.
\end{lemma}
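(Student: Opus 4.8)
The plan is to mirror the proof of \Cref{lem:PathFromItoP}, but dualize every notion: replace ``siphon'' by ``trap'', replace ``reachability of a path to $p$'' by ``reachability of a path from $p$'', and replace the siphon invariant (unmarked siphons stay unmarked) by the trap invariant (marked traps stay marked). So first I would let $Y \subseteq P^s$ be the set of places reachable from $p$ by a directed path in $\W^s$ (including $p$ itself). The goal is to show $\final \in Y$.

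The key structural step is to verify that $Y$ is a \emph{trap} in $\W^s$. Recall a trap satisfies: for every transition $t$, if $\pre{t} \cap Y \neq \emptyset$ then $\post{t} \cap Y \neq \emptyset$. So suppose $\pre{t} \cap Y \neq \emptyset$, i.e.\ some place in $\pre{t}$ can reach $p$. By \Cref{lemma:transition_path}, we know $\post{t} \neq \emptyset$, so $t$ has at least one output place $q \in \post{t}$. Any place in $\pre{t}$ already reaches $t$, and $t$ reaches $q$, so $q$ can reach whatever $\pre{t}$ reaches; in particular, since some input place reaches $p$, and the path from that input place to $p$ passes through $t$ and then onward, every output place of $t$ lies on a path to $p$ as well — more carefully, $q \in \post{t}$ can reach $p$ through $t$ only if we are careful about direction. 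The cleaner argument: if $r \in \pre{t} \cap Y$, then $r$ reaches $p$; but in fact the whole point is that $t$ fires consuming from $r$ and producing into all of $\post{t}$, and since $r$ reaches $p$ via a path that (by definition of reachability) starts $r \to t \to q' \to \cdots \to p$ for some $q' \in \post{t}$, we get $q' \in Y$, hence $\post{t} \cap Y \neq \emptyset$. Thus $Y$ is a trap.

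Now I would invoke the marking argument. By \Cref{claim:nonredundant_skeleton}, there is some $k \le K$ and a run $\{i \colon k\} \transS{*} \m_s$ with $\m_s(p) > 0$, and crucially a continuation $\m_s \transS{*} \{\final \colon k\}$. Since $\m_s(p) > 0$ and $p \in Y$, the trap $Y$ is marked in $\m_s$. By \Cref{lem:SiphonTrap}, a marked trap stays marked, so $Y$ remains marked in $\{\final \colon k\}$. But the only place marked in $\{\final \colon k\}$ is $\final$ (here $k > 0$ since $p$ is nonredundant forces $k \ge 1$). Therefore $\final \in Y$, which is exactly the statement that there is a path from $p$ to $\final$ in $\W^s$.

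The main obstacle I anticipate is the verification that $Y$ is genuinely a trap, because the direction of reachability must be handled precisely: one must argue that whenever an \emph{input} place of $t$ can reach $p$, then the path witnessing this necessarily routes through some \emph{output} place of $t$ (so that $\post{t} \cap Y \neq \emptyset$), rather than merely concluding from $\pre{t}\neq\emptyset$. This uses $\post{t} \neq \emptyset$ from \Cref{lemma:transition_path} together with the fact that any outgoing path from an input place of $t$ that reaches $p$ and passes through $t$ must next visit a place in $\post{t}$. A secondary, purely formal point is ensuring $k \ge 1$ so that ``marked trap in $\{\final \colon k\}$'' forces $\final \in Y$; this is immediate since nonredundancy of $p$ guarantees $k \in \Npos$.
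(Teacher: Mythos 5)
Your overall route is exactly the paper's: define $Y$ as the set of places reachable from $p$, show it is a trap using \Cref{lemma:transition_path}, then combine \Cref{claim:nonredundant_skeleton} with \Cref{lem:SiphonTrap} to force $\final \in Y$. The second half (the marking argument, including the point that $k \geq 1$) is correct. But your verification that $Y$ is a trap contains a genuine error: you gloss $\pre{t} \cap Y \neq \emptyset$ as ``some place in $\pre{t}$ can reach $p$'', which inverts the definition you just gave --- $r \in Y$ means there is a path \emph{from} $p$ \emph{to} $r$, not from $r$ to $p$. Building on this inverted reading, you then claim that a path witnessing ``$r$ reaches $p$'' must start $r \to t \to q' \to \cdots \to p$ for some $q' \in \post{t}$; this is false (nothing forces a path leaving $r$ to go through this particular transition $t$, since $r$ may have other outgoing arcs), and the ``main obstacle'' you flag at the end is precisely this non-fact. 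So, as written, the key structural step does not go through.

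The fix is immediate, and it is what the paper does: with the correct reading, $r \in \pre{t} \cap Y$ gives a path $p \to \cdots \to r \to t$, and appending the arc $t \to q$ shows $q \in Y$ for \emph{every} $q \in \post{t}$; hence $\post{t} \subseteq Y$, and since $\post{t} \neq \emptyset$ by \Cref{lemma:transition_path}, indeed $\post{t} \cap Y \neq \emptyset$. No claim about which route an existing path takes is needed --- you simply extend a path through $t$. (Compare with \Cref{lem:PathFromItoP}: there the relevant set really is ``places that can reach $p$'' and it is a siphon; you appear to have carried that direction of reachability over when dualizing, while keeping the trap terminology.)
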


\begin{proof}
  Let $p \in P^s$. Let $X \subseteq P^s$ be the set of places to
  which there is a path from $p$ in $\W^s$. Observe that $X$ is a trap in
  $\W^s$. Indeed, if $\pre{t} \cap X \neq \emptyset$, then $\post{t}
  \subseteq X$, and $\post{t} \neq \emptyset$ because of
  \Cref{lemma:transition_path}.

By \Cref{claim:nonredundant_skeleton}, there exist $k \in \N$ and a marking $\m_s \in \N^{P^s}$ such that $\{\initial
  \colon k\} \transS{*} \m_s$, where $\m_s(p) > 0$ and $\m_s \transS{*} \{\final \colon k\}$.
Note that $p \in X$. Since $\m_s \transS{*} \{\final \colon k\}$ by \Cref{lem:SiphonTrap}, we get $\final \in X$.
\end{proof}

\subsection{Skeletons preserve generalised soundness}\label{sec:skeleton2}

In this subsection, we will
prove \Cref{proposition:skeleton} \eqref{skeleton2}, i.e.\ we show
that if $\W$ is generalised sound (or up-to-$K$ sound for some
sufficiently large $k$), then its skeleton $\W^s$ is also generalised
sound. To do so, we will prove a general lemma, which will also be
useful in the next section. The lemma will not need the assumption of
generalised soundness, but only a weaker property. We start by
defining two properties for reset workflow nets.

We say that a reset workflow net $\W$ has a \emph{full reset run} if there exist $z \in \N$ and a run $\zeta$ such that these three conditions all hold:
\begin{itemize}
\item $\{i \colon z\} \trans{\zeta} \{f \colon z\}$ and $\zeta$ contains
each nonredundant transition of $\W$;

\item if $\m$ is a reachable marking of $\W$, then $\{i \colon z\} + \m \trans{\zeta} \{f \colon z\} + \Res{\m}$;

\item for every decomposition $\zeta = \rho t \rho'$, where $\{i \colon z\} \trans{\rho} \m$, we have $\Res{\m} \trans{*} \{\final \colon z\}$.
\end{itemize}
We call $\zeta$ a \emph{full reset run}. Below, we note that having such a run is a weaker condition than generalised soundness.

\begin{corollary}\label{cor:full_reset}
If $\W$ is generalised sound, then it has a full reset run. Furthermore, the assumption of $\W$ being generalised sound can be relaxed to the weaker assumption that $\W$ is up-to-$K$-sound for an Ackermannianly bounded number $K$.
\end{corollary}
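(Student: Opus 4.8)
The plan is to check the three bullets in the definition of a full reset run separately, reusing the lemmas of this subsection. The first bullet is delivered verbatim by \Cref{lemma:reset_marking}, which, under generalised soundness (or up-to-$K$ soundness for Ackermannian $K$), produces $z \in \N$ and a run $\{i \colon z\} \trans{\zeta} \{f \colon z\}$ whose transition sequence contains every nonredundant transition of $\W$. The second bullet is delivered verbatim by \Cref{lemma:reset_marking2} applied to this same $\zeta$: for every reachable marking $\m$ we have $\{i \colon z\} + \m \trans{\zeta} \{f \colon z\} + \Res{\m}$. Hence the only genuine work is the third bullet.

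For the third bullet, I would fix a decomposition $\zeta = \rho t \rho'$ with $\{i \colon z\} \trans{\rho} \m$ and prove $\Res{\m} \trans{*} \{f \colon z\}$. The idea is to pay one extra copy of $z$ initial tokens in order to replace $\m$ by $\Res{\m}$ through \Cref{lemma:reset_marking2}, and then let soundness undo the damage. First, every transition of $\zeta$, and in particular of $\rho$, is fired at a reachable marking and is therefore nonredundant; so by \Cref{claim:noreset_if} none of them resets $i$, and monotonicity on the non-reset place $i$ yields $\{i \colon 2z\} \trans{\rho} \{i \colon z\} + \m$. Composing with \Cref{lemma:reset_marking2} gives $\{i \colon 2z\} \trans{\rho} \{i \colon z\} + \m \trans{\zeta} \{f \colon z\} + \Res{\m}$, so $\{f \colon z\} + \Res{\m}$ is reachable from $\{i \colon 2z\}$.

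Now I would invoke soundness at level $2z$ (available in the generalised case, and in the up-to-$K$ case as soon as $K \ge 2z$, still Ackermannian since $z$ is): the reachable marking $\{f \colon z\} + \Res{\m}$ reaches $\{f \colon 2z\}$. Finally I would strip the frozen tokens: because $\post{f} = \emptyset$ and no transition resets $f$, the $z$ tokens sitting in $f$ at the start of this last run never take part in any firing, so deleting them from both endpoints preserves the run and gives $\Res{\m} \trans{*} \{f \colon z\}$, as required.

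The hard part is exactly this passage from $\m$ to $\Res{\m}$: one cannot simply erase the non-skeleton tokens and transport a soundness run of $\m$ downward, since resets destroy monotonicity when token counts decrease. The doubling device --- carrying a fresh batch of $z$ initial tokens and letting $\zeta$ perform the resets ``for free'' via \Cref{lemma:reset_marking2} --- is what sidesteps this, and the only delicate points are the token bookkeeping and the freezing of the $f$-tokens. The Ackermannian value of $K$ is then obtained by choosing it to dominate $2z$ together with the bounds already required by \Cref{claim:noreset_if}, \Cref{lemma:reset_marking} and \Cref{lemma:reset_marking2}.
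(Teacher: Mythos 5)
Your proof is correct and takes essentially the same route as the paper's: both obtain the first two bullets directly from \Cref{lemma:reset_marking,lemma:reset_marking2}, and both establish the third by running $\rho\zeta$ from $\{\initial \colon 2z\}$ to reach $\Res{\m} + \{\final \colon z\}$ and then invoking soundness at level $2z$ (with $K \geq 2z$ in the relaxed case). The extra details you supply --- that all transitions of $\rho$ are nonredundant so \Cref{claim:noreset_if} licenses carrying the spare $z$ tokens in $\initial$ through $\rho$, and that the $z$ frozen tokens in $\final$ can be stripped from the final soundness run --- are exactly the steps the paper leaves implicit.
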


\begin{proof}
The first two conditions follow directly from \Cref{lemma:reset_marking,lemma:reset_marking2}. Let $\zeta$ be a run fulfilling these conditions, and consider a decomposition $\zeta = \rho t \rho'$ with $\{i \colon z\} \trans{\rho} \m$. We have $\{\initial \colon 2z\} \trans{\rho\zeta} \Res{\m} + \{\final \colon z\}$. Since $\W$ is generalised sound or up-to-$K$ sound for $K\geq 2z$, we get the third condition, i.e.\ $\Res{\m} \trans{*} \{\final \colon z\}$.
\end{proof}

\begin{lemma}\label{lemma:fullrun_ackermann}
  There is an algorithm that, given a reset workflow net $\W$,
  outputs: either that $\W$ has a full reset run; or that $\W$ is
  not generalised sound. The procedure works in Ackermannian
  time. Moreover, if there exists $\{i \colon z\} \trans{\zeta} \{f
  \colon z\}$ witnessing a full reset run, then there is also one
  where $z$ is at most Ackermannian. If the algorithm outputs that
  $\W$ is not generalised sound then it computes $K \in \Npos$, which is at most Ackermannian, for which $\W$ is not $K$-sound.
\end{lemma}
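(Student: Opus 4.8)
The plan is to design an algorithm that attempts to construct a full reset run, and whenever the construction fails, extract from the failure a concrete witness of unsoundness together with a computable, Ackermannianly bounded level $K$ at which soundness fails. The three conditions defining a full reset run correspond naturally to the three ingredients assembled in \Cref{lemma:reset_marking,lemma:reset_marking2} and \Cref{cor:full_reset}, so I would check them one at a time, exploiting the fact (from \Cref{proposition:redundancy}, \Cref{lem:boundOnCoverability} and the Ackermannian backward coverability algorithm of \cite{LazicS21}) that every relevant covering run, every relevant witness, and every relevant bound can be computed in Ackermannian time and is itself Ackermannianly bounded.

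\medskip
\noindent\textbf{Step 1: compute redundancy and a candidate $\zeta$.} First I would run the procedure of \Cref{proposition:redundancy} to obtain the nonredundant places and transitions, together with the Ackermannianly bounded witnesses $z_t$ and covering runs $\rho_t t$ firing each nonredundant transition $t$. As in the proof of \Cref{lemma:reset_marking}, I set $z \defeq \sum_t z_t$ and let $\rho$ be the concatenation of the runs $\rho_t t$. The first subtlety is that \Cref{lemma:reset_marking} used \Cref{claim:noreset_if} to guarantee that $i$ is not reset; here I cannot assume soundness a priori, so I would instead explicitly test whether any nonredundant transition resets $i$ (a syntactic check). If one does, then the proof of \Cref{claim:noreset_if} gives a concrete witness of unsoundness at an Ackermannianly bounded level, so the algorithm halts and outputs ``not generalised sound'' with the computed $K$. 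Otherwise $\{i \colon z\} \trans{\rho} \m'$ is a valid run and I proceed.

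\medskip
\noindent\textbf{Step 2: close the run and verify the three conditions.} I now try to complete $\rho$ to a run reaching $\{f \colon z\}$, i.e.\ I test coverability (and exact reachability to $\{f \colon z\}$) of $\{f \colon z\}$ from $\m'$ using backward coverability; by \Cref{lem:boundOnCoverability} this run $\delta$, if it exists, has Ackermannian length, giving the candidate $\zeta \defeq \rho \delta$ and establishing the first bullet of the definition. The second bullet is not an extra test: by the argument of \Cref{lemma:reset_marking2}, \emph{any} $\zeta$ containing all nonredundant transitions and satisfying $\{i\colon z\}\trans{\zeta}\{f\colon z\}$ automatically satisfies $\{i\colon z\}+\m\trans{\zeta}\{f\colon z\}+\Res{\m}$ for every reachable $\m$, since resetable places are killed and skeleton places accumulate the summed effect. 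For the third bullet I iterate over the (Ackermannianly many) prefixes $\rho t$ of $\zeta$, compute the intermediate marking $\m$ with $\{i\colon z\}\trans{\rho}\m$, and test reachability $\Res{\m}\trans{*}\{f\colon z\}$, again via coverability/backward search in Ackermannian time.

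\medskip
\noindent\textbf{Step 3: extract $K$ on failure.} The main obstacle is the extraction of a concrete unsoundness level $K$ whenever one of the tests in Steps~1 and~2 fails, and proving that this $K$ is sound (in the logical sense) — i.e.\ that the net really is not $K$-sound, so that the output is always correct. The contrapositive reasoning is exactly that of \Cref{cor:full_reset}: if $\W$ were up-to-$K$-sound for a suitably large Ackermannian $K$, then \Cref{lemma:reset_marking,lemma:reset_marking2} would force $\delta$ to exist (failure of Step~2's first test would contradict up-to-$z$-soundness, since $\m'$ is reachable from $z$ tokens yet cannot reach $\{f\colon z\}$), and the computation $\{i\colon 2z\}\trans{\rho\zeta}\Res{\m}+\{f\colon z\}$ would force $\Res{\m}\trans{*}\{f\colon z\}$, contradicting failure of the third bullet at level $K\ge 2z$. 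Thus in each failing branch the failing reachability query itself \emph{is} a witness of $\ell$-unsoundness for an explicitly computed $\ell\le \max(z,2z)$, which is Ackermannianly bounded because $z$ is; I output this $\ell$ as $K$. Since every bound invoked is Ackermannian and only a constant number of Ackermannian-time subroutines are chained, the overall running time is Ackermannian, and the ``at most Ackermannian'' claims on $z$ and on the output $K$ follow from the stated closure rules for Ackermannian bounds.
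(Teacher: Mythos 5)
Your architecture tracks the paper's proof step for step: compute redundancy, reject immediately if a nonredundant transition resets $i$ (via \Cref{claim:noreset_if}), concatenate the covering runs $\rho_t t$ into $\rho$, close it with a run to $\{f \colon z\}$, treat the second bullet of the full-reset-run definition as automatic (as the paper does), and enumerate prefixes of $\zeta$ for the third bullet, extracting unsoundness levels through the contradiction with \Cref{cor:full_reset}. However, there is a genuine gap at the central computational step. You repeatedly claim to ``test reachability \dots via coverability/backward search'': exact reachability of $\{f \colon z\}$ from $\m'$ in Step~2, and of $\{f \colon z\}$ from $\Res{\m}$ in the third-bullet check. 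Exact reachability is \emph{undecidable} for reset Petri nets --- this is precisely the source of undecidability of $1$-soundness for reset workflow nets --- and the backward coverability algorithm decides only coverability. The paper circumvents this with a one-sided check: it runs backward coverability to obtain a concrete covering run ($\delta'$, resp.\ $\xi$), executes that run step by step, and checks whether it lands \emph{exactly} on $\{f \colon z\}$; if it overshoots, the algorithm outputs ``not generalised sound''. This output may be issued even when some other run does reach $\{f \colon z\}$ exactly, which is permissible because the two allowed outputs overlap --- but it forces the correctness argument to handle the overshoot case, which your proposal never does.

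Concretely, your Step~3 justifies failure of Step~2's test by ``$\m'$ is reachable from $z$ tokens yet cannot reach $\{f \colon z\}$'' --- that is the uncomputable semantics of the test. Under the implementable semantics (the particular covering run found overshoots), ``cannot reach $\{f \colon z\}$'' simply does not follow; instead one argues that the reached marking $\m'' > \{f \colon z\}$ is itself a witness of $z$-unsoundness, because tokens in $f$ are never consumed or reset and every fired transition deposits at least one token, so no marking strictly covering $\{f \colon z\}$ can ever reach it. The third bullet needs still more care, as the paper explicitly warns: $\Res{\m}$ need not be a reachable marking, so an overshoot from $\Res{\m}$ is not directly a witness of unsoundness; one must embed it into the reachable context $\{i \colon z + z'\} \trans{*} \Res{\m} + \{f \colon z'\}$ supplied by the hypothetical full reset run of \Cref{cor:full_reset} and derive the contradiction there. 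You invoke this embedding only for the ``no covering run exists'' direction; without the exact-versus-cover distinction, your tests are not effectively implementable and the correctness of the failure branches is unproven.
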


\begin{proof}
If the initial place is resetable, then the algorithm outputs that $\W$ is not generalised sound by \Cref{claim:noreset_if}. It also gives an Ackermannian bound for $K$ such that $\W$ is not $K$-sound.

Otherwise, we invoke~\Cref{remark:reset_marking} and obtain that from a marking $\{\initial \colon z\}$, there is a run $\{\initial \colon z\}\trans{\delta}\m$
that executes all nonredundant transitions of $\W$ and is of length at most Ackermannian. Because $\delta$ has a bounded length, we can find it in Ackermannian time. Next, we check if it is possible to cover $\{\final \colon z\}$ from $\m$, using the backward coverability algorithm running in Ackermannian time. If not, then $\W$ is not up-to-$z$ sound and not generalised sound. Otherwise, the algorithm produces a run $\m\trans{\delta'}\upclosure{\{\final \colon z\}}$ of Ackermannian length. We have to check whether $\m \trans{\delta'} \{\final \colon z\}$ holds. If not, then $\W$ is not up-to-$z$ sound and not generalised sound. Otherwise, we pick $\zeta \defeq \delta\delta'$.
%
%


The latter satisfies the first two conditions of a full reset run. We need to check whether it satisfies the third condition.

We do this exhaustively: for any decomposition of $\zeta = \rho\rho'$,
we take a marking $\m$ such that $\{\initial \colon z\} \trans{\rho} \m$
and check a stronger property: whether for a run $\Res{\m}\trans{\xi}\upclosure{\{{\final \colon z}\}}$ it is the case that $\Res{\m}\trans{\xi}\{\final \colon z\}$. Note that this is not as trivial as for $\delta'$ since $\Res{\m}$ does not have to be a reachable configuration.

We prove the property by contradiction. If it does not hold, then $\W$ is not generalised sound and not up-to-$K$ sound. Indeed, due to~\Cref{cor:full_reset}, we know that if $\W$ is up-to-$K$ sound, then it must have a full reset run $\{\initial \colon z'\}\trans{\zeta'}\{\final \colon z'\}$. But, then, there would be a run $\{\initial \colon z+z'\}\trans{\rho}\m + \{\initial \colon z'\} \trans{\zeta'} \Res{\m} +\{\final \colon z'\} \trans{\xi} \{\final \colon z + z'\} + \n$ for some nonempty marking $\n$. This contradicts both generalised soundness and up-to-$K$ soundness assuming $K > z + z'$.

To achieve this check, we use the backward coverability algorithm to find $\xi$, and then we execute $\xi$ step by step. This process works in Ackermannian time.
%
%
%
%
\end{proof}

Below we discuss markings both in $\W$ and its skeleton $\W^s$. For convenience, if $\m_s$ is a marking over $\N^{P^s}$, then we also use it as a marking over $\N^P$, where $\m_s(p) = 0$ for $p \not \in P^s$.

\begin{lemma}\label{lem:OnlyMReachable}
  Let $\W$ be a reset workflow net with a full reset run $\{i \colon z\} \trans{\zeta} \{f \colon z\}$. Let $\{\initial \colon l\} \transS{\pi} \m_s$ be a run of the skeleton $\W^s$, where $l \in \N$.
  There exists $k' \in \N$ such that $\{\initial
  \colon l+k'\} \trans{*} \m_s + \{\final \colon k'\}$ holds in $\W$. Moreover, $k' \le 2z |\pi|$.
\end{lemma}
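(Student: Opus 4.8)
The plan is to induct on $|\pi|$, maintaining the invariant that from $\{\initial \colon l + 2zj\}$ one can reach, in $\W$, the marking $\m_j + \{\final \colon 2zj\}$, where $\m_j$ is the $j$-th marking along $\pi$ (extended by $0$ outside $P^s$) and no token lies on a resetable or redundant place. The base case $j=0$ is the empty run. For the step, I first observe that every transition used in the runs I build is nonredundant (they are drawn from $\zeta$, from the underlying transitions $u_j$, and from the runs supplied by the full reset run), so $\initial$ is never reset; hence I may inject $2z$ extra tokens into $\initial$ and carry them along, reaching $\m_{j-1} + \{\final \colon 2z(j-1)\} + \{\initial \colon 2z\}$. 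Moreover the accumulated $\final$-tokens are inert, since $\final$ is never reset and has no outgoing arc, so it suffices to drive $\m_{j-1} + \{\initial \colon 2z\}$ to $\m_j + \{\final \colon 2z\}$. Throughout I use the elementary fact that, since the resets occurring along these runs only touch resetable places, which are disjoint from the supports of $\m_{j-1}, \m_j$ and the injected blocks, adding such a marking as a spectator preserves enabledness and shifts every reached marking by exactly that spectator (a routine induction on the run).

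The firing of $u_j$ is realized with \emph{two} copies of $\zeta$, which is where the factor $2z$ originates. Write $\zeta = \alpha\, u_j\, \beta$ for an occurrence of $u_j$ (it exists, as $u_j$ is nonredundant and $\zeta$ contains every nonredundant transition), with $\{\initial \colon z\} \trans{\alpha} \n$ and $\n \geq \pre{u_j}$. Using the first block, I run $\alpha$ with $\m_{j-1}$ and the second block as spectators, reaching $\n + \m_{j-1}$ together with the untouched second block. I then fire $u_j$: since $\n \geq \pre{u_j}$, the resetable preconditions of $u_j$ are supplied by $\n$, and because skeleton places are non-resetable the resulting marking restricted to $P^s$ is exactly $\m_j + \Res{\n}$ — that is, the target $\m_j$ together with the \emph{junk} $\Res{\n}$ on the skeleton places, plus some junk on resetable places.

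The crux, and the main obstacle, is that firing $u_j$ advances $\m_{j-1}$ to $\m_j$ only modulo this leftover junk $\Res{\n}$ on $P^s$, which no reset can remove. This is precisely where the third condition of a full reset run is used: applied to the decomposition $\zeta = \alpha\, u_j\, \beta$ it yields a run $\Res{\n} \trans{*} \{\final \colon z\}$, so the junk can be driven into $z$ tokens on $\final$. Replaying this run from the current marking, with $\m_j$ and the second block as $P^s$-spectators (preserved by the same argument), reaches $\m_j + \{\final \colon z\}$ on the skeleton, still carrying the second block and some resetable junk. I then discharge the second block by running $\zeta$ once more: the second condition of a full reset run gives $\{\initial \colon z\} + \m \trans{\zeta} \{\final \colon z\} + \Res{\m}$ for the reachable marking $\m = \m_j + \{\final \colon z\}$ plus the resetable junk, and since $\Res{\m}$ wipes that junk this lands exactly on $\m_j + \{\final \colon 2z\}$, re-establishing the clean invariant. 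Two blocks are consumed and two blocks' worth of $\final$-tokens are produced, so $k'_j = k'_{j-1} + 2z$; taking $j = |\pi|$ yields $k' = 2z|\pi|$.
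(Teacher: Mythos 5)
Your proof is correct in substance and follows essentially the same route as the paper's: the same induction on $|\pi|$, the same budget of $2z$ tokens per skeleton step, the same decomposition $\zeta = \alpha\, u_j\, \beta$ around the nonredundant preimage of the skeleton transition, and the same use of the second and third conditions of full reset runs to dispose of the junk. The one structural difference is the order of the two cleanup phases. The paper fires $\zeta$ \emph{immediately after} the preimage transition: the second condition is then applied to the marking reached from $\{i \colon z\}$ by $\alpha u_j$, which is reachable in $\W$ by construction (with the skeleton marking riding along as a spectator on $P^s$), and only afterwards is the third-condition run $\Res{\n} \trans{*} \{\final \colon z\}$ replayed, at which point the only spectators left live on $P^s$ and $\final$. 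You replay the third-condition run first and fire $\zeta$ last, and this reordering is what creates the one thin spot in your write-up: you invoke the second condition ``for the reachable marking $\m_j + \{\final \colon z\}$ plus the resetable junk'', but that marking is exactly the kind whose reachability this lemma is in the business of establishing, so its reachability cannot simply be asserted. It is true, and repairable in two ways: either observe that the second block is a pure spectator in everything you build before the final $\zeta$, so the same construction launched from $\{\initial \colon l + 2z(j-1) + z\}$ reaches that marking, which makes it reachable and legitimizes the second condition; or bypass the second condition altogether by arguing directly that $\zeta$ contains every nonredundant transition, hence resets every resetable place, so firing it wipes the junk while the $P^s$- and $\final$-spectators survive, landing on $\m_j + \{\final \colon 2z\}$.

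A second assertion of yours needs care, but it is shared with the paper: that the transitions of the third-condition run are nonredundant (you need this so that $\m_j$ and the $\{\initial \colon z\}$ block survive the replay). Since $\Res{\n}$ is not known to be reachable, this is not automatic from your parenthetical justification; it follows because $\Res{\n} + \{\final \colon z\}$ \emph{is} reachable (run $\alpha$ then $\zeta$ from $\{i \colon 2z\}$, using the second condition on the reachable marking $\n$), tokens on $\final$ are inert, and transitions fired along runs from reachable markings are nonredundant. The paper's own replay of the third-condition run with $\m_s'$ as a $P^s$-spectator rests on the same implicit fact, so this is an omission you share with the paper rather than a defect of your route. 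With these repairs your argument goes through and gives the same bound $k' = 2z|\pi|$.
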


Before we prove \Cref{lem:OnlyMReachable}, we show how it implies \Cref{proposition:skeleton} \eqref{skeleton2}.

\begin{proof}[Proof of \Cref{proposition:skeleton} \eqref{skeleton2}]
Let $\W$ and $\W^s$ be as described in the proposition.
Suppose that $\W^s$ is not generalised sound. Because of~\cite[Theorem 5.1]{BlondinMO22}, there exists an exponentially bounded number $l$ such that $\{\initial \colon l\} \transS{*} \m_s$ in $\W^s$ and from $\m_s$ it is not possible to reach $\{\final \colon l\}$. Observe that if such $l$ and $\m_s$ exist then the shortest run $\pi$ from $\{\initial \colon l\}$ to $\m_s$ is at most Ackermannian (due to bounds on the length of shortest runs in Petri nets). The bound on the length of the shortest run between two configurations in a Petri net is a consequence of the KLM~\cite{DBLP:conf/stoc/Kosaraju82} algorithm combined with the Ackermannian bound on its complexity~\cite{DBLP:conf/lics/LerouxS19}.
By \Cref{lem:OnlyMReachable} and \Cref{cor:full_reset}, there exists $k' \in \N$ bounded by $2z|\pi|$, i.e. Ackermannianly, such that $\{\initial
  \colon l+k'\} \trans{*} \m_s + \{\final \colon k'\}$ in $\W$. Since $\W$ is generalised sound or up-to-$K$ sound for $K>l+k'$, there is a run $\m_s \trans{*} \{\final \colon l\}$ in $\W$. By \Cref{claim:projection}, this yields to the contradiction with the assumption that $\m_s\not\transS{*}\{\final \colon l\}$ in $\W^s$.
\end{proof}

\begin{proof}[Proof of \Cref{lem:OnlyMReachable}]
  We proceed by induction on the length of the run from $\{\initial
  \colon k\}$ to $\m_s$. If the length is $0$, then the claim is trivial
  and $k' = 0$.

  Suppose the induction claim holds for every marking $\m_s$ reachable
  via a run of length at most $i$. Let $t$ be a transition in $\W^s$ and let
  $\pi' = \pi t$ be a run of length $i + 1$ such that $\{\initial
  \colon k\} \transS{\pi} \m_s \trans{t} \m_s'$. By the induction hypothesis,
  there exists $k'' \in \N$ such that $\{\initial \colon
  k'' + k\} \trans{*} \m_s + \{\final \colon
  k''\}$. Thus, it is sufficient to prove that
  there exists $k_t \in \N$ such that $\{\initial \colon k_t\} + \m_s
  \trans{*} \m_s' + \{\final \colon k_t\}$.

If we treat $\m_s$ and $\m_s'$ as markings over $\N^P$, then $\Res{\m_s} = \m_s$ and $\Res{\m_s'} = \m_s'$.

  Let $\zeta$ be a full reset run in $\W$.
  Let $t'$ be the nonredundant transition in $\W$ from which $t$, in the skeleton, is defined.
 Since $\zeta$ has all nonredundant transitions of $\W$, $t'$ also occurs in it and we can decompose $\zeta = \rho t' \rho'$.
 We denote $\{\initial \colon z\} \trans{\rho} \m_t \trans{t'} \m_t'$.
Consider the following run:
\begin{align*}
  \{\initial \colon 2z\} + \m_s
  &\trans{\rho}
  \m_s + \m_t + \{\initial \colon z\} \\
  &\trans{t'}
  \m_s' + \Rest{t'}{\m_t} + \n +\{\initial \colon z\},
\end{align*}
where $\n(p) = \post{t'}(p)$ if $p \not \in P^s$ and $\n(p) = 0$ otherwise. The last transition splits the effect of $t'$ between $\m_s$ and $\m_t$ as follows. The places $p \in P^s$ are updated by changing $\m_s$ to $\m_s'$. The remaining places are updated by changing $\m_t$ to $\Rest{t'}{\m_t} + \n$. Note that $\Res{\m_t} = \Res{\Rest{t'}{\m_t} + \n}$. Thus, we get
\begin{align*}
  \{\initial \colon z\} + \m_s' + \Rest{t'}{\m_t} + \n
\trans{\zeta}
\m_s' +  \Res{\m_t} + \{\final \colon z\}.
\end{align*}
Finally, recall that by definition of full reset runs (third condition) $\Res{\m_t} \trans{*} \{\final \colon z\}$.
Altogether, we get the following as required:
\[
\{\initial \colon 2z\} + \m_s \trans{*} \m_s' + \{\final \colon 2z \}.
\]
The bound on $k'$ follows directly from the proof.
\end{proof}

\subsection{Proofs of \Cref{theorem:between,theorem:upto}}\label{ssec:thm:proofs}

We may now prove \Cref{theorem:between,theorem:upto} simultaneously.
The constant $k'$ in \Cref{theorem:upto} will be defined at the end of the proof. There, we will observe that if $k$ is large enough then the characterization of $\mathcal{P}_k$ in \Cref{theorem:upto} holds.

Fix $k \in \Npos$. We start with the definition of property $\mathcal{P}_k$.

\begin{definition}\label{def:Pk}
Property $\mathcal{P}_k$ is defined as the conjunction of these properties:
\begin{enumerate}
 \item Places $\initial$ and $\final$ are not resetable;
 
 \item There is a full reset run $\{\initial \colon z\} \trans{\zeta} \{\final \colon z\}$ in $\W$;
 
 \item $\W^s$ is a workflow net which is generalised sound;
 
 \item It is not possible to strictly cover $\{\final \colon k\}$ starting from $\{\initial \colon k\}$, where ``strictly cover'' means reaching some marking $\m > \{\final \colon k\}$. We call this property \emph{coverability-clean};
 
 \item The last property is more complex.
 Consider the following set of markings:
 \begin{align*}
   X \defeq \set{\m \in \N^P \setminus \{\vec{0}\} : \m \not \trans{*} \upclosure{\{\final \colon 1\}}}.
\end{align*}
In words, these are the nonzero markings that cannot mark place $\final$. Let $F \defeq \{\{f \colon \ell\} : \ell \in \N\}$ and $X^{\final} \defeq X + F$, i.e., markings of $X$ with arbitrarily many tokens added to $\final$.
Let $\Res{X^{\final}} \defeq \{\Res{\m} : \m \in X^{\final}\}$.
The last property requires that
 $\{\initial \colon j\} \not\transS{*}
\Res{X^{\final}}\setminus F$ holds for all $j \ge 1$.
\end{enumerate}
\end{definition}

First observe that properties (1--4) are implied by generalised
soundness, as well as up-to-$k$ soundness for sufficiently large
$k$. It is clear for~(4), while properties~(1--3) follow respectively
from \Cref{claim:noreset_if}, \Cref{cor:full_reset} and
\Cref{proposition:skeleton}.

To prove that $\mathcal{P}_k$ is $k$-in-between, it suffices to show
that for any workflow satisfying properties~(1--4), these two claims,
capturing property~(5), hold:

\begin{claim}\label{claim:GeneralOK}
If $\{\initial \colon j\} \transS{*} \Res{X^{\final}}\setminus F$ holds for some $j \ge 1$, then $\W$ is not generalised sound.
\end{claim}

\begin{claim}\label{claim:KsoundOK}
If $\{i \colon j\} \not \transS{*} \Res{X^{\final}} \setminus F$ holds for all $j \in [1..k]$, then $\W$ is up-to-$k$ sound.
\end{claim}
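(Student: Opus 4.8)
The plan is to fix $j \in [1..k]$ and a marking $\m$ with $\{\initial \colon j\} \trans{*} \m$, and to prove $\m \trans{*} \{\final \colon j\}$ by tracking the largest number of tokens that can ever reach $\final$ from $\m$. Before any case analysis I would extract a \emph{productivity} consequence of the hypothesis, namely $\{\initial \colon 1\} \notin X$. Indeed, if $\{\initial \colon 1\} \in X$, then since $X \subseteq X^{\final}$ and $\initial \in P^s$ we would have $\Res{\{\initial \colon 1\}} = \{\initial \colon 1\} \in \Res{X^{\final}} \setminus F$, which is reachable in $\W^s$ from $\{\initial \colon 1\}$ by the empty run, contradicting the hypothesis at $j = 1$. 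Thus $\{\initial \colon 1\} \trans{*} \upclosure{\{\final \colon 1\}}$. As $\pre{\initial} = \emptyset$ and $\initial$ is never reset (property~(1)), such a completion consumes exactly one $\initial$-token and produces at least one $\final$-token; iterating it $m$ times yields $\{\initial \colon m\} \trans{*} \upclosure{\{\final \colon m\}}$, consuming exactly $m$ tokens from $\initial$, for every $m \ge 1$.

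Next, let $M \defeq \max\{\m'(\final) : \m \trans{*} \m'\}$, which exists because $\final$ is a sink so the $\final$-count is nondecreasing along runs. I would first show $M \le j$: any $\m'$ reachable from $\m$ is reachable from $\{\initial \colon j\}$, so \Cref{claim:projection} gives $\{\initial \colon j\} \transS{*} \Res{\m'}$, and $j$-soundness of $\W^s$ (property~(3)) forces $\m'(\final) = \Res{\m'}(\final) \le j$. Choosing $\m'$ with $\m'(\final) = M$ and writing $\m' = \{\final \colon M\} + \vec{y}$ for its non-$\final$ part $\vec{y}$, maximality of $M$ gives that $\vec{y}$ cannot mark $\final$, so either $\vec{y} = \vec{0}$ or $\vec{y} \in X$ (and then $\m' \in X^{\final}$). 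Three of the four resulting cases close at once: if $\vec{y} = \vec{0}$ and $M = j$ we are done, since $\m \trans{*} \m' = \{\final \colon j\}$; if $\Res{\vec{y}} \neq \vec{0}$ then $\Res{\m'} \in \Res{X^{\final}} \setminus F$ is reachable in $\W^s$ from $\{\initial \colon j\}$, contradicting the hypothesis; and if $\Res{\vec{y}} = \vec{0}$ but $M < j$ (covering also $\vec{y} = \vec{0}$ with $M<j$), then $\{\initial \colon j\} \transS{*} \Res{\m'} = \{\final \colon M\}$ with $M < j$, which $\W^s$ can never complete to $\{\final \colon j\}$ because $\final$ is a sink, contradicting $j$-soundness of $\W^s$.

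The one delicate case, which I expect to be the main obstacle, is $M = j$ with $\vec{y} \neq \vec{0}$ yet $\Res{\vec{y}} = \vec{0}$: a reachable strict cover $\m' = \{\final \colon j\} + \vec{y} > \{\final \colon j\}$ whose surplus lies entirely on \emph{resetable} places and is therefore invisible to the skeleton, so that neither property~(3) nor the hypothesis applies. I would eliminate it with coverability-cleanness (property~(4)) by pumping the witnessing run $\{\initial \colon j\} \trans{\sigma} \m'$ up to token count $k$. Since $\sigma$ fires only nonredundant transitions, property~(1) guarantees that neither $\initial$ nor $\final$ is reset along $\sigma$, so $\sigma$ has \emph{additive} effect $-j$ on $\initial$ and $+j$ on $\final$. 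From $\{\initial \colon k\}$ I would first route $k - j$ of the $\initial$-tokens to $\final$ using the productivity run, reaching some $\m''$ with $\m''(\initial) = j$ and $\m''(\final) \ge k - j$; then, firing $\sigma$ once from $\m'' \ge \{\initial \colon j\}$ (enabled by monotonicity of $\trans{}$ with respect to $\le$) adds $j$ to the $\final$-count, giving final count $\ge k$, while monotonicity applied to $\{\initial \colon j\} \trans{\sigma} \{\final \colon j\} + \vec{y}$ forces the reached marking to dominate $\{\final \colon j\} + \vec{y}$, hence to retain at least the nonzero surplus $\vec{y}$ on resetable places. The reached marking therefore strictly covers $\{\final \colon k\}$ and is reachable from $\{\initial \colon k\}$, contradicting property~(4).

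The crux is precisely this ordering. Resets would erase the surplus if $\sigma$ were fired first, so the productivity fact — itself squeezed out of the hypothesis — is used to top up the $\final$-count to the correct value \emph{before} the final firing of $\sigma$ re-establishes the surplus; because $\final$ is never reset, the $\final$-tokens accumulated by the routing step survive that last firing. This is the step that must be arranged carefully, and it is what lets coverability-cleanness at the single count $k$ propagate to every $j \le k$ despite the presence of resets, completing the proof that $\W$ is $j$-sound for all $j \in [1..k]$.
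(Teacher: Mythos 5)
Your proof is correct, and although it follows the same broad architecture as the paper's --- project to the skeleton and split on whether the non-$\final$ surplus of a maximal-$\final$-count reachable marking is visible in $\W^s$ --- it diverges in a way that matters. The paper argues the contrapositive: non-$j$-soundness yields, via \Cref{claim:X}, a reachable marking $\m + \{\final \colon \ell\} \in X^{\final}$; when $\Res{\m} \neq \vec{0}$ the projected run violates the hypothesis (your case with $\Res{\vec{y}} \neq \vec{0}$), and when $\Res{\m} = \vec{0}$ skeleton soundness forces $\ell = j$ (your case $M < j$ handles the mismatch), after which the paper dismisses the remaining situation --- a reachable strict cover of $\{\final \colon j\}$ whose surplus sits on resetable places --- as ``a contradiction with $\W$ being coverability-clean.'' But property~(4) of \Cref{def:Pk} is stated only at level $k$: a strict cover of $\{\final \colon j\}$ from $\{\initial \colon j\}$ with $j < k$ contradicts nothing on its face, so the paper's last step is glossed over exactly where your ``delicate case'' begins.

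Your resolution supplies the missing bridge, and it genuinely needs the hypothesis rather than properties (1--4) alone: the hypothesis at $j = 1$ forces $\{\initial \colon 1\} \notin X$ (productivity), which lets you route $k - j$ initial tokens to $\final$ \emph{before} firing the strict-cover run $\sigma$, so that monotonicity of reset nets re-creates the resetable surplus $\vec{y}$ while the already-deposited $\final$-tokens survive ($\final$ is never reset along runs from reachable markings, by property~(1)); this lifts the strict cover from level $j$ to level $k$, where property~(4) finally applies. The ordering is essential for the reason you identify, and note that the obvious alternative --- pumping with the full reset run of property~(2) after $\sigma$ --- fails by design, since $\zeta$ maps the surplus to $\Res{\vec{y}} = \vec{0}$. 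In exchange, the paper's contrapositive route buys brevity and reuse of \Cref{claim:X}, but your direct formulation is the one that makes the $j=1$ instance of the hypothesis available and thereby closes the gap; it would be a worthwhile correction to the paper's own argument.
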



Before proceeding, we show the claim below, which will be helpful for proving \Cref{claim:GeneralOK} and \Cref{claim:KsoundOK}. Let us assume that properties (1--4) hold.

\begin{claim}\label{claim:X}
$\W$ is not $k$-sound if and only if $\{\initial \colon k\} \trans{*} X^{\final}$. 
\end{claim}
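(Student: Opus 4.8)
The plan is to prove the two implications of the equivalence separately, after recording two structural facts that hold under properties~(1--4). First, along any run of $\W$ the number of tokens in $\final$ never decreases: no transition consumes from $\final$ (as $\post{\final}=\emptyset$) and no transition resets $\final$ (by the definition of reset workflow nets). Second, by coverability-cleanness (property~(4)) this count stays in $[0..k]$ for every $\m$ with $\{\initial \colon k\} \trans{*} \m$, because a reachable marking with $\m(\final) > k$, or with $\m(\final) = k$ and a nonzero non-final part, would strictly cover $\{\final \colon k\}$. I would also use the elementary observation that tokens in $\final$ neither enable nor disable any transition (since $\final \notin \pre{t}$ for all $t$), so from $\n + \{\final \colon \ell\}$ the dynamics mirror those from $\n$ with $\ell$ frozen tokens in $\final$. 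In particular, a marking $\n + \{\final \colon \ell\}$ with $\n \neq \vec{0}$ lies in $X^{\final}$ precisely when $\n$ can never put a token in $\final$, i.e.\ when the marking cannot raise its $\final$-count.

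For the ($\Leftarrow$) direction I would take a reachable $\m' \in X^{\final}$ and write $\m' = \n + \{\final \colon \ell\}$ with $\n \in X$, so that $\n \neq \vec{0}$ and $\n(\final) = 0$. Property~(4) rules out $\ell \geq k$, as then $\m' > \{\final \colon k\}$ would be reachable; hence $\ell < k$. Since $\n \in X$ can never produce a token in $\final$, the $\final$-count remains equal to $\ell < k$ along every run from $\m'$, so $\m' \not\trans{*} \{\final \colon k\}$ and $\m'$ witnesses $k$-unsoundness.

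For the ($\Rightarrow$) direction I would start from a witness $\m$ (so $\{\initial \colon k\} \trans{*} \m$ and $\m \not\trans{*} \{\final \colon k\}$). Because the $\final$-count is non-decreasing and bounded by $k$, it attains a maximum $\ell$ over all markings reachable from $\m$; I pick $\m''$ realising this maximum. Then every marking reachable from $\m''$ has $\final$-count exactly $\ell$, so $\m''$ cannot raise its $\final$-count, and $\m'' \not\trans{*} \{\final \colon k\}$ (else $\m$ would reach $\{\final \colon k\}$). Writing $\m'' = \n + \{\final \colon \ell\}$, the target is $\m'' \in X^{\final}$, for which it suffices that $\n \in X$; since $\m''$ cannot raise its $\final$-count, $\n$ cannot put a token in $\final$, and the only remaining point is to exclude $\n = \vec{0}$.

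The main obstacle I anticipate is exactly this degenerate case $\n = \vec{0}$, where $\m'' = \{\final \colon \ell\} \in F$ but $\m'' \notin X^{\final}$. I would resolve it through the skeleton. If $\n = \vec{0}$, then $\ell < k$ (as $\ell \le k$ and $\m'' \not\trans{*} \{\final \colon k\}$), and projecting the run $\{\initial \colon k\} \trans{*} \{\final \colon \ell\}$ onto $\W^s$ via \Cref{claim:projection} yields $\{\initial \colon k\} \transS{*} \{\final \colon \ell\}$. But by property~(3) the workflow net $\W^s$ is generalised, hence $k$-, sound, so $\{\final \colon \ell\} \transS{*} \{\final \colon k\}$; this is impossible, since $\final$ is a sink in $\W^s$ and every transition of $\W^s$ has a nonempty preset disjoint from $\{\final\}$ (\Cref{lemma:transition_path}), making $\{\final \colon \ell\}$ a deadlock that cannot reach $\{\final \colon k\}$ when $\ell < k$. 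This contradiction forces $\n \neq \vec{0}$, whence $\m'' \in X^{\final}$ and $\{\initial \colon k\} \trans{*} X^{\final}$, completing the argument.
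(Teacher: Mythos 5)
Your proof is correct and follows essentially the same route as the paper's: both directions rest on decomposing a reachable marking into its non-final part plus frozen tokens in $\final$, taking the maximal attainable $\final$-count, and dispatching the degenerate case $\n = \vec{0}$ via \Cref{claim:projection} and generalised soundness of the skeleton $\W^s$. The only cosmetic differences are that you invoke coverability-cleanness (property~(4)) where the paper bounds $\ell \le k$ via the skeleton's soundness, and your ($\Leftarrow$) direction excludes $\ell \ge k$ outright by property~(4) instead of appealing to the fact that firing transitions cannot have zero effect.
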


\begin{claimproof}
$\Rightarrow$) Suppose $\W$ is not $k$-sound and let $\{\initial \colon k\} \trans{*} \m$ be such that $\m \not \trans{*} \{\final \colon k\}$. Let $\ell$ be the largest number such that from $\m$ we can cover $\{\final \colon \ell\}$. Note that such a number exists and $\ell \le k$, as otherwise we get a contradiction with \Cref{claim:projection} and generalised soundness of $\W^s$. Let $\m \trans{*} \m' + \{\final \colon \ell\}$, where $\m'(\final) = 0$. If $\m' \neq \vec{0}$, then we are done as $\m'\in X$. Otherwise, note that $\ell < k$ as $\m \not\trans{*} \{\final \colon k\}$, which yields a contradiction with \Cref{claim:projection} and generalised soundness of $\W^s$.

$\Leftarrow$) This follows by definition of $X$ and by the fact that, in reset workflow nets, the effect of firing transitions cannot be zero.
\end{claimproof}


\medskip
Let $X_{\vec{0}} \defeq X \cup \{\vec{0}\}$. We may now prove \Cref{claim:GeneralOK}:
\medskip

\begin{claimproof}
Let $\{i \colon j\} \transS{*} \m + \{\final \colon \ell\} \in \Res{X^{\final}} \setminus F$, where $\m(\final) = 0$. Note that $\m \neq \vec{0}$ as the set $F$ was excluded.
By \Cref{lem:OnlyMReachable}, there exists $k'$ such that $\{i \colon j + k'\} \trans{*} \m + \{\final \colon \ell+ k'\}$. As $X_{\vec{0}}$ is downward closed and $\m \neq \vec{0}$, we get $\vec{m} + \{\final \colon \ell + k'\} \in X^{\final}$.
Thus, by~\Cref{claim:X}, $\W$ is not $(j+k')$-sound, and hence it is not generalised sound.
\end{claimproof}

\medskip
We may now prove \Cref{claim:KsoundOK}:
\medskip

\begin{claimproof}
  We show the contrapositive.
  For the sake of contradiction, suppose $\W$ is not up-to-$k$ sound, i.e.\ not $j$-sound for some $j \in [1..k]$. We must exhibit a run $\{\initial \colon j\} \transS{*} \Res{X^{\final}} \setminus F$.

  By \Cref{claim:X}, we have $\{\initial \colon j\} \trans{*} \m + \{\final \colon \ell\} \in X^{\final}$ in $\W$, where $\m(\final) = 0$. By \Cref{claim:projection}, we have $\{\initial \colon j\} \transS{*} \Res{\m} + \{\final \colon \ell\}$ in $\W^s$. If $\Res{\m} \neq \vec{0}$, then we get a contradiction since \[\Res{\m} + \{\final \colon \ell\} \in \Res{X^{\final}} \setminus F.\]
  Suppose $\Res{\m} = \vec{0}$. From this, we have $\{\initial \colon j\}\transS{} \{\final \colon l\}$ which is possible only if $\ell = j$, as otherwise we get a contradiction with $\W^s$ being generalised sound. This means that $\m > \{\final \colon j\}$, which is a contradiction with $\W$ being coverability-clean.
\end{claimproof}
\medskip

To conclude the proof, it suffices to show that $\mathcal{P}_k$ is
 decidable, i.e.\ that properties (1--5) can be checked. First,
 computing $\W^s$ amounts to identifying the subset of nonredundant
 transitions of $\W$, which can be done in Ackermannian time
 by \Cref{proposition:redundancy}. Then:
\begin{enumerate}
 \item Property~(1) is trivial;

 \item Property~(2) is decidable because of~\Cref{lemma:fullrun_ackermann};

 \item Testing whether $\W^s$ is generalised sound can be done in
   PSPACE~\cite[Theorem 5.1]{BlondinMO22};

 \item Property~(4) is a coverability check, which can be done using
   the backward coverability algorithm~\cite{LazicS21} (or
   by~\Cref{lem:boundOnCoverability});
   
 \item Property~(5) requires more effort. We explain it below.
\end{enumerate}

Let us show that we can determine whether $\{\initial \colon k\} \trans{*} X^{\final}$. We start by computing a representation of $X^{\final}$. A representation of $X_{\vec{0}}$ can be computed with the backwards coverability algorithm~\cite{LazicS21}. More precisely, we can compute the set $X'$ of all markings from which there is a run covering $\{\final : 1\}$. Then, $X_{\vec{0}}$ is the complement of $X'$. Moreover, since $X'$ is upward closed, the set $X_{\vec{0}}$ is downward closed. Obviously, this yields a representation of both $X$ and $X^{\final}$.

To simplify the notation, we also identify $\Res{X^{\final}}$ with the set of markings over $P^s$ (i.e.\ by dropping the places outside of $P^s$).
Recall $F = \{\{f \colon \ell\} : \ell \in \N\}$ from the definition of $X^{\final}$.

It remains to show that we can decide whether there exists $j \ge 1$ such that:
\begin{align}\label{eq:j}
  \{i \colon j\} \transS{*} \Res{X^{\final}} \setminus F
  \text{ holds in $\W^s$}.\tag{$\star$}
\end{align}
Moreover, we must prove that if such a $j$ exists, then it is Ackermannianly bounded. This will allow to prove \Cref{theorem:upto}.


We reduce query~\eqref{eq:j} to a reachability query for Petri nets (without resets). To do so, we modify $\W^s$ into a new Petri net $\PN^s$ (whose arcs may consume or produce several tokens at once). First, we add a transition $t_{\initial}$ that can always add one token in place $\initial$, this allows to produce arbitrarily many tokens in $\initial$. Second, we add a place $p_{\text{all}}$ that keeps the sum of tokens in all places from $P^s \setminus \{\final\}$ (it will be needed as $X$ forbids $\vec{0}$). This can be easily achieved by adjusting all transitions on $p_{\text{all}}$ as follows:
\begin{align*}
  \pre{t}(p_{\text{all}})
  \defeq \sum_{\mathclap{p \in P^s \setminus \{\final\}}} \pre{t}(p)
  && \text{ and } &&
  \post{t}(p_{\text{all}})
  \defeq \sum_{\mathclap{p \in P^s \setminus \{\final\}}} \post{t}(p).
\end{align*}
We define $X^{\final}_{\text{all}}$ as the set
\begin{multline*}
  \big\{\m' \in \N^{P^s \cup \{p_{\text{all}}\}} : \exists \m \in X
  \text{ s.t. } \m'(p) = \m(p) \text{ for all } p \in P^s
  \\
  \text{and } \m'(p_{\text{all}}) \ge \m(P^s\setminus \{\final\})\big\}.
\end{multline*}
Note that if $\m' \in X^{\final}_{\text{all}}$, then we have
$\m'(p_{\text{all}}) > 0$. Note that only markings such that
$\m'(p_{\text{all}}) = \m(P^s \setminus \{\final\})$ make sense, but
it will be convenient to allow markings to be larger in place
$p_{\text{all}}$.

Observe that query~\eqref{eq:j} is equivalent to testing whether $\vec{0} \trans{*} X^{\final}_{\text{all}}$ in $\PN^s$. Indeed, transition $t_{\initial}$ allows to guess the initial value $j$, and place $p_{\text{all}}$ guarantees that we at least one token among places other than $\final$.

Now we analyse the set $X^{\final}_{\text{all}}$. Let $\m \in X^{\final}_{\text{all}}$. The following holds:
\begin{itemize}
 \item if $\m' > \m$ and $\m'(p) = \m(p)$ for $p \in P^s \setminus \{\final\}$ then $\m' \in X^{\final}_{\text{all}}$;

 \item if $\m' < \m$ and $\m'(p) = \m(p)$ for $p \in \set{\final, p_{\text{all}}}$ then $\m' \in X^{\final}_{\text{all}}$.
\end{itemize}
Intuitively, $X^{\final}_{\text{all}}$ is downward closed on some places and upward closed on other places.
Since $\PN^s$ is a Petri net (without resets), it is folklore that reachability queries to such sets can be performed in Ackermannian time (see e.g.\
\cite[Lemma 7]{CzerwinskiH22}). Moreover, if there is such a run, then there is one of length at most Ackermannian. This concludes the proof of \Cref{theorem:between}. It also provides an Ackermannian bound on the minimal $j$ satisfying query~\eqref{eq:j}.

We briefly explain that it also proves \Cref{theorem:upto}. Indeed, let us comment on the threshold $k'$ such that, for any $k \ge k'$, $\mathcal{P}_k$ is equivalent to up-to-$k$-soundness.
Observe that properties~(1), (2), (3) and~(5) do not depend on $k$, so intuitively there is a $k'$ such that if they are satisfied for $k'$, then they are satisfied for all $k > k'$. So, properties~(1), (2), (3) and~(5) are implied by
up-to-$k$ soundness for $k > k'$. Moreover, property~(4) is also implied by up-to-$k$ soundness, which means that
$\mathcal{P}_k$, for $k > k'$, is implied by up-to-$k$ soundness.

What remains is to show that an Ackermannianly bounded $k'$ suffices.
Property~(1) is implied by up-to-$k'$ soundness for an Ackermannianly bounded $k'$ according to~\Cref{claim:noreset_if}. Similarly, property~(2) is implied by up-to-$k'$ soundness for an Ackermannianly bounded $k'$ according to~\Cref{lemma:reset_marking}. Property~(3) is implied by up-to-$k'$ soundness for an Ackermannianly bounded $k'$ according to~\Cref{proposition:skeleton}.
Thus, it remains to bound the number $k'$ needed for property~(5).
We know that, if there is a run that violates property~(5), then there is one of length $\ell$ which is at most Ackermannian. Now, because of~\Cref{lem:OnlyMReachable}, we conclude that there is a run
$\{\initial \colon \ell + \ell \cdot 2z \} \trans{*} X^{\final}$, where $z$ is Ackermannianly bounded as in~\Cref{lem:OnlyMReachable}. This, together with~\Cref{claim:X}, shows that $k'> \ell + \ell \cdot 2z$ suffices.
Altogether, an Ackermannianly bounded $k'$ suffices for the proof of~\Cref{theorem:upto}.

\begin{remark}
One may think that the proof of \Cref{theorem:between} is contradictory with the undecidability of generalised soundness, as it might seem that, using \Cref{claim:GeneralOK} and \Cref{claim:KsoundOK}, we can decide generalised soundness. The reason why there is no contradiction is that, earlier, we assumed that $\W$ is coverability-clean. In some sense, checking the coverability-clean property, for all $k$, is the source of undecidability for generalised soundness.
\end{remark}

\section{Conclusion}
In this paper, we studied soundness in reset workflow nets: the
standard correctness notion of a well-established formalism for the
modeling of process activities such as business processes.

All existing variants of soundness, but generalised soundness, were
known to be undecidable for reset workflow nets. In this work, we
have shown that generalised soundness is also undecidable. This closes
its status which had been open for over fifteen years.

Given the resulting undecidable landscape, we investigated a new
approach. We introduced the notion of $k$-in-between soundness, which
lies between $k$-soundness and generalised soundness. We revealed an
unusual complexity behaviour: a decidable soundness property is in
between two undecidable ones. We think this can be valuable in the
algorithmic analysis of reset workflow nets, and that it may spark a
new line of research both in theory and practice.

\subsection{Other future work}

The reachability problem for Minsky machines is already undecidable
for two transitions that test counters for zero. Thus, our proof of
the undecidability of generalised soundness only requires four
transitions that reset some places. The question about decidability of
generalised soundness for reset workflow nets with fewer than four
transitions that reset some places, remains open. We conjecture
decidability for reset workflow net with only one such transition.

Furthermore, it would be interesting to extend the definition of
soundness to more powerful models like well-structured transition systems
(WSTS): the properties of resilience~\cite{DBLP:conf/vmcai/FinkelH24}
can be seen as a first step. We may also try to adapt the efficient
reductions for Petri nets \cite{DBLP:conf/vmcai/AmatDB24} to reset
Petri nets and reset workflow nets.

\balance
\bibliographystyle{ACM-Reference-Format}
\bibliography{references}

\end{document}